\documentclass[11pt,a4paper]{article}
\setlength{\textwidth}{16 cm}
\setlength{\textheight}{25 cm}
\setlength{\oddsidemargin}{-0.07 cm}
\setlength{\topmargin}{-1.6 cm}

\usepackage{epsfig}
\usepackage{graphics}
\usepackage{subfigure}

\usepackage{epstopdf}
\usepackage{hyperref}

\usepackage[english]{babel}
\usepackage{mdwlist}
\usepackage{graphicx}
\usepackage{hyperref}
\usepackage{appendix}
\usepackage{dsfont}
\usepackage{tikz}
\usepackage{amsmath}
\usepackage{amsbsy}
\usepackage{amsfonts}
\usepackage{amssymb}
\usepackage{amscd}
\usepackage{amsthm}
\usepackage{empheq}
\usepackage{placeins}

\usepackage{bm}
\usepackage{xspace}

\makeatletter
\def\XS{\xspace}
\DeclareMathAlphabet{\mathb}{OML}{cmm}{b}{it}
\def\sbm#1{\ensuremath{\mathb{#1}}}                
%
\def\scu#1{\ensuremath{\mathcal{#1\XS}}}           
\def\sbl#1{\ensuremath{\mathbb{#1}}}              



  \def\wb{{\sbm{w}}\XS}


\def\Fc{{\scu{F}}\XS}

\def\Ic{{\scu{I}}\XS}   
   
\def\Kc{{\scu{K}}\XS}




\def\Cbb{{\sbl{C}}\XS}  
  
\def\Ebb{{\sbl{E}}\XS}

\def\Pbb{{\sbl{P}}\XS}  
  
\def\Rbb{{\sbl{R}}\XS}





%


%
%

\def\sgn{{\mathrm{sign}}}							


 		







\newsavebox{\fminibox}
\newlength{\fminilength}


  \def\+{^\dagger}


\def\nequiv{\not\kern-.05em\equiv}
\def\egal{\kern-.5em=\kern-.5em}        
\def\propt{\kern-.2em\propto\kern-.2em} 




\def\intdouble{\int\kern-0.3em\int}
\def\inttriple{\int\kern-0.3em\int\kern-0.3em\int}

\def\rond#1{\overset{\kern-0.33em~_\circ}{#1}}
\def\rondit[#1]#2{\overset{\kern#1~_\circ}{#2}}









\newcommand{\Id}{\ensuremath{\mathrm{Id}}}

\def\qed{\ifmmode\hbox{\hfill\sqb}\else{\ifhmode\unskip\fi%
\nobreak\hfil
\penalty50\hskip1em\null\nobreak\hfil$\blacksquare$
\parfillskip=0pt\finalhyphendemerits=0\endgraf}\fi}

%

%

\RequirePackage{amsmath}
\RequirePackage{xspace}

\def\XS{\xspace}
\DeclareMathAlphabet{\mathb}{OML}{cmm}{b}{it}
\def\sbm#1{\ensuremath{\mathb{#1}}}                
%
\def\scu#1{\ensuremath{\mathcal{#1\XS}}}           



  \def\wb{{\sbm{w}}\XS}


\def\Fc{{\scu{F}}\XS}

\def\Ic{{\scu{I}}\XS}   
   
\def\Kc{{\scu{K}}\XS}

\def\MAP{^{\kern1pt{\rm MAP}\kern-1pt}}

\usepackage{color,soul}


\def\Cbb{{\sbl{C}}\XS}  
  
\def\Ebb{{\sbl{E}}\XS}

\def\Pbb{{\sbl{P}}\XS}  
  
\def\Rbb{{\sbl{R}}\XS}


\def\PM{\kern0pt^{\textrm{{\scriptsize PM}}}\kern0pt}
\def\MMAP{\kern1pt^{\textrm{{\tiny MMAP}}}\kern-1pt} 

\def\rem#1{}                    

 
 \def\btabu{\begin{tabular}}             \def\etabu{\end{tabular}}
 
\makeatother

\newtheorem{thmchapter}{Theorem}[section]
\newtheorem{defchapter}[thmchapter]{Definition}
\newtheorem{prop}[thmchapter]{Proposition}
\newtheorem{corollary}[thmchapter]{Corollary}
\newtheorem{lemme}[thmchapter]{Lemma}
\newtheorem{remark}[thmchapter]{Remark}

\title{Compressed sensing with structured sparsity and structured acquisition}

\author{Claire Boyer$^{(1)}$, J\'er\'emie Bigot$^{(2)}$ and Pierre Weiss$^{(1,3)}$
\\
\\
$^{(1)}$ Institut de Math\'ematiques de Toulouse (UMR 5219), CNRS, Universit\'e de Toulouse, France\\
{\small {claire.boyer}@math.univ-toulouse.fr} \\
$^{(2)}$ Institut de Math\'ematiques de Bordeaux (UMR 5251), CNRS, Universit\'e de Bordeaux, France\\
{\small {jeremie.bigot}@math.u-bordeaux1.fr} \\
$^{(3)}$ Institut des Technologies Avanc\'ees du Vivant (USR 3505), CNRS, Toulouse, France\\
{\small {pierre.armand.weiss}@gmail.com} 
}

\begin{document}
\maketitle

\begin{abstract}
Compressed Sensing (CS) is an appealing framework for applications such as Magnetic Resonance Imaging (MRI). However, up-to-date, the sensing schemes suggested by CS theories are made of random isolated measurements, which are usually incompatible with the physics of acquisition. To reflect the physical constraints of the imaging device, we introduce the notion of blocks of measurements: the sensing scheme is not a set of isolated measurements anymore, but a set of groups of measurements which may represent any arbitrary shape (parallel or radial lines for instance).
Structured acquisition with blocks of measurements are easy to implement, and provide good reconstruction results in practice.
However, very few results exist on the theoretical guarantees of CS reconstructions in this setting.
In this paper, we derive new CS results for structured acquisitions and signals satisfying a prior structured sparsity.
The obtained results provide a recovery probability of sparse vectors that explicitly depends on their support. 
Our results are thus support-dependent and offer the possibility for flexible assumptions on the sparsity structure. 
Moreover, the results are drawing-dependent, since we highlight an explicit dependency between the probability of reconstructing a sparse vector and the way of choosing the blocks of measurements.
Numerical simulations show that the proposed theory is faithful to experimental observations.
\end{abstract}
\textbf{Key-words:} Compressed Sensing, blocks of measurements, structured sparsity, MRI, exact recovery, $\ell_1$ minimization.



\section{Introduction}

Since its introduction in \cite{candes2006stable,donoho2006compressed}, compressed sensing triggered a massive interest in fundamental and applied research. However, despite recent progresses, existing theories are still insufficient to explain the success of compressed acquisitions in many practical applications. Our aim in this paper is to extend the applicability of the theory by combining two new ingredients: structured sparsity and acquisition structured by blocks.

\subsection{A brief history of compressed sensing}

In this section, we provide a brief history of the compressed sensing evolution, with a particular emphasis on Fourier imaging, in order to better highlight our contribution.

\subsubsection{Sampling with matrices with i.i.d. entries}

Compressed sensing - as proposed in \cite{candes2006near} - consists in recovering a signal $x\in \Cbb^n$, from a vector of measurements $y=Ax$, where $A\in \Cbb^{m\times n}$ is the sensing matrix. Typical theorems state that if $A$ is an i.i.d.\ Gaussian matrix, $x$ is $s$-sparse, and $m \gtrsim  s\log(n)$, then $x$ can be recovered exactly from $y$ by solving the following $\ell_1$ minimization problem:
\begin{equation}\label{pb:minL1}
\min_{x\in \Cbb^n, Ax=y} \|x\|_1.
\end{equation}
Moreover, it can be shown that the recovery is robust to noise if the constraint  in \eqref{pb:minL1} is penalized. An important fact about this theorem is that the number of measurements mostly depends on the intrinsic dimension $s$ rather than the ambient dimension $n$.

\subsubsection{Uniform sampling from incoherent bases}
\label{sec:uniformsampling}

Nearly at the same time, the theory was extended to random linear projections from orthogonal bases \cite{candes2006stable,rauhut2010compressive,candes2011probabilistic,foucart2013mathematical}. 
Let $A_{0} \in \Cbb^{n \times n}$ denote an orthogonal matrix with rows $(a_i^*)_{1\leq i \leq n} \in \Cbb^n$. 
A sensing matrix $A$ can be constructed by randomly drawing rows as follows
\begin{equation}
\label{eq:unisampintro}
A = \frac{1}{\sqrt{m}} \left( \frac{1}{\sqrt{\pi_{J_\ell}}} a_{J_\ell}^* \right)_{1\leq \ell \leq m}, 
\end{equation}
where $\left( J_\ell \right)_{1\leq \ell \leq m}$ are i.i.d.\ copies of a uniform random variable $J$ with $\Pbb(J = j) = \pi_j =  1/n$,  for all $ 1 \leq j \leq n$.
The coherence of matrix $A_0$ can be defined by
$$ \kappa(A_0) = n \cdot \max_{1\leq i \leq n} \| a_i\|_\infty^2.
$$

A typical result in this setting states that if $m\gtrsim \kappa(A_0) s \ln (n /\varepsilon)$ then an $s$-sparse vector $x$ can be exactly recovered  using the $\ell^1$-minimization problem \eqref{pb:minL1} with probability exceeding $1-\varepsilon$.
This type of theorem is particularly helpful to explain the success of recovery of sparse signals (spikes) from Fourier measurements, since in that case $\kappa(A_0)=1$. 


\subsubsection{The emergence of variable density sampling}

Unfortunately, in most applications, the sensing matrix $A_0$ is coherent, meaning that $\kappa(A_0)$ is large. 
In pratice, uniformly drawn measurements lead to very poor reconstructions.
A natural idea to reduce the coherence consists in drawing the highly coherent rows of $A_0$ more often than the others. 

A byproduct of standard compressed sensing results \cite{candes2011probabilistic} implies that variable density sampling \cite{puy2011variable,chauffert2013variable,krahmer2013stable} allows perfect reconstruction with a limited (but usually too high) number of measurements. This idea is captured by the following result. 

Let $A_{0} \in \Cbb^{n \times n}$ denote an orthogonal matrix with rows $(a_i^*)_{1\leq i \leq n} \in \Cbb^n$. 
Let $A$ denote the random matrix
\begin{equation}
\label{eq:vdssampintro}
A = \frac{1}{\sqrt{m}} \left( \frac{1}{\sqrt{\pi_{J_\ell}}} a_{J_\ell}^* \right)_{1\leq \ell \leq m}, 
\end{equation}
where $\left( J_\ell \right)_{1\leq \ell \leq m}$ are i.i.d.\ copies of a random variable $J$ with $\Pbb(J = j) = \pi_j = \frac{\|a_j\|^2_\infty}{\sum_{j=1}^n \|a_j\|_\infty^2}$,  for all $ 1 \leq j \leq n$.

Let $x$ denote an $s$-sparse vector and set $m\gtrsim \left(\sum_{j=1}^n\|a_j\|_\infty^2\right) s \ln (n /\varepsilon)$. 
Then, the minimizer of \eqref{pb:minL1} coincides with $x$, with probability larger than $1-\varepsilon$.

Unfortunately, it is quite easy to show experimentally, that this principle cannot explain the success of CS in applications such as Fourier imaging. The flip test proposed in \cite{adcock2013breaking} is a striking illustration of this fact.

\subsubsection{Variable density sampling with structured sparsity}

A common aspect of the above results is that they assume no structure - apart from sparsity - in the signals to recover. Recovering arbitrary sparse vectors is a very demanding property that precludes the use of CS in many practical settings. 
Exact recovery conditions for sparse vectors with a structured support appeared quite early, with the work of Tropp \cite{tropp2006just}. To the best of our knowledge, the work \cite{adcock2013breaking} is the first to provide explicit constructions of random matrices allowing to recover sparse signals with a structured support. 
The theory in \cite{adcock2013breaking}, also suggests variable density sampling strategies. 
There is however one major difference compared to the previously mentioned contributions: the density should depend both on the sensing basis \emph{and} the sparsity structure. 
The authors develop a comprehensive theory for Fourier sampling, based on isolated measurements under a sparsity-by-levels assumption in the wavelet domain. They illustrate through extensive numerical experiments in \cite{adcock2014quest} that sampling structured signals in coherent bases can significantly outperform i.i.d.\ Gaussian measurements - usually considered as an optimal sampling strategy. This theory will be reviewed and compared to ours in Section \ref{sec:Applications}.

\subsubsection{An example in MRI}

To fix the ideas, let us illustrate the application of the previously described theory in the context of Magnetic Resonance Imaging (MRI).
In MRI, images are sampled in the Fourier domain and can be assumed to be sparse in the wavelet domain. 
Figure \eqref{fig:reconstructionIso} (a) illustrates a variable density sampling pattern: the white dots indicate which Fourier coefficients are probed.
Figure \eqref{fig:reconstructionIso} (b) is the reconstruction of a phantom image from the measurements in (a) via $\ell^1$-minimization.
Figure \eqref{fig:reconstructionIso} (c) is a zoom on the reconstruction. 
As can be seen, only $4.6\%$ of the coefficients are enough to reconstruct a well resolved image.

\begin{figure}[h!]
\begin{center}
\btabu{@{}cccc}
\includegraphics[width=0.3\linewidth]{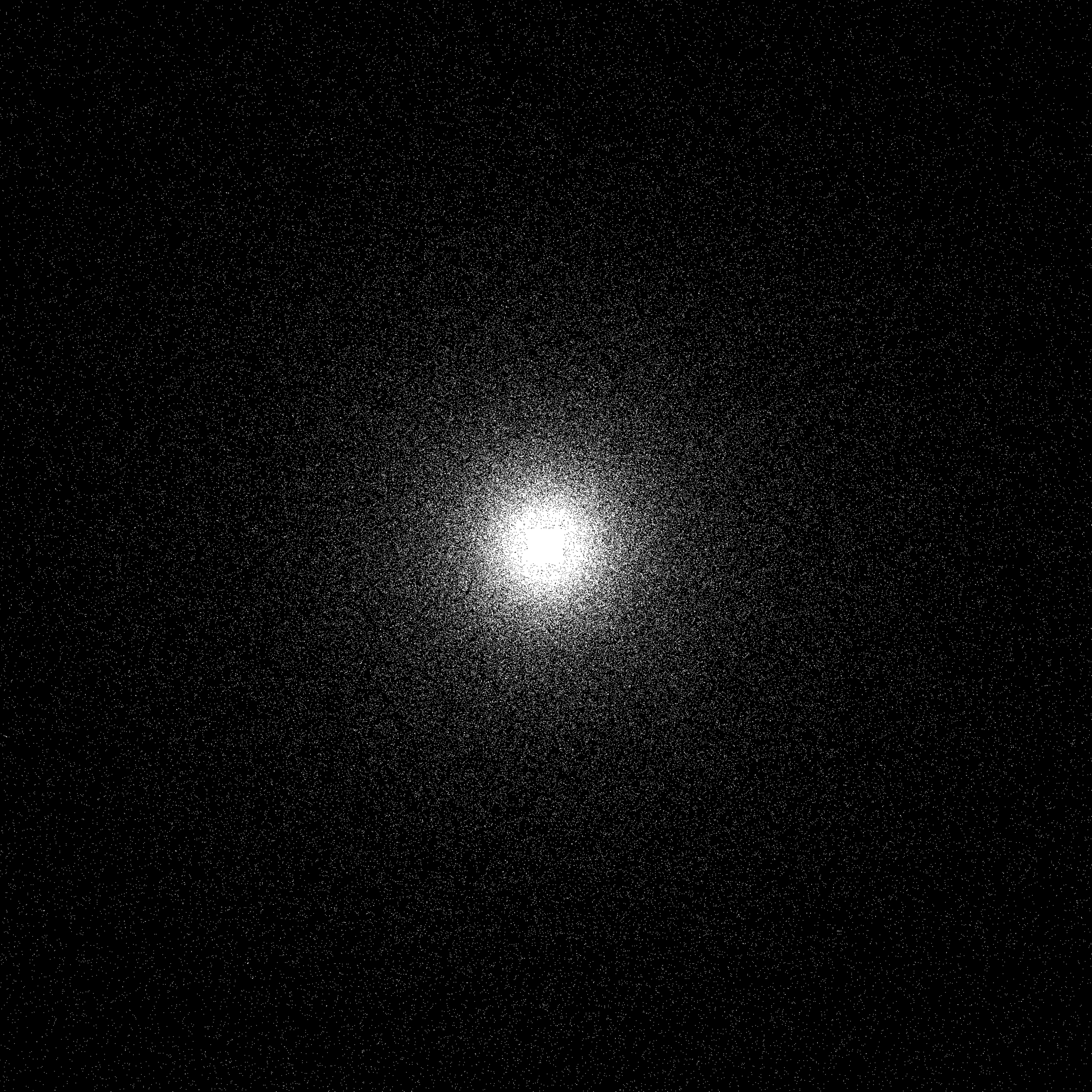} &
\hspace{-0.3cm}\includegraphics[width=0.3\linewidth]{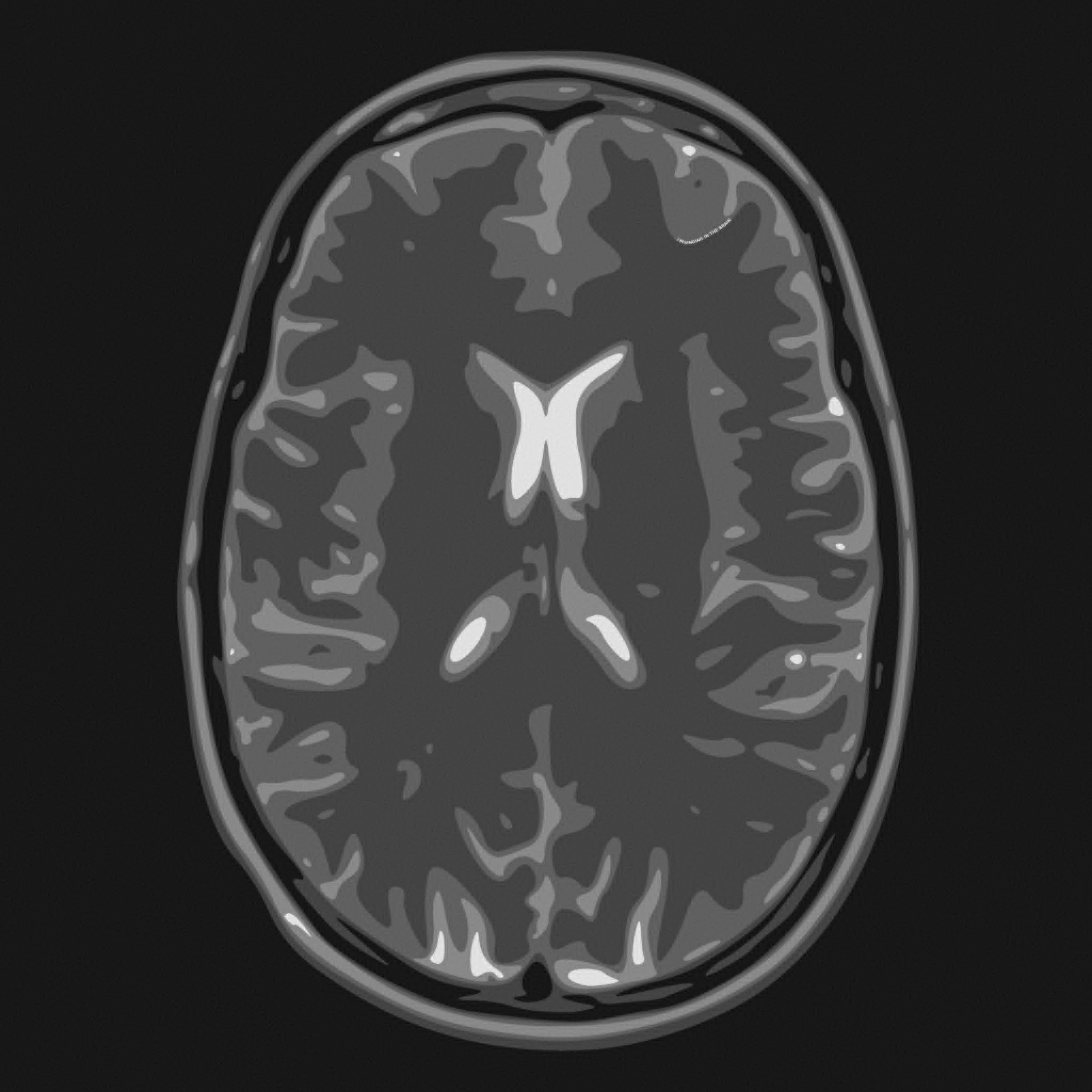} &
\hspace{-0.3cm}\includegraphics[width=0.3\linewidth]{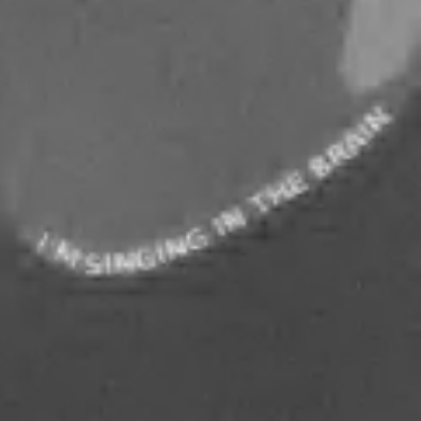} 
\\
{\small (a)}&{\hspace{-0.3cm}\small (b) SNR = 24.2 dB} & {\hspace{-0.3cm}\small (c)} \\
&\hspace{-0.3cm}\includegraphics[width=0.3\linewidth]{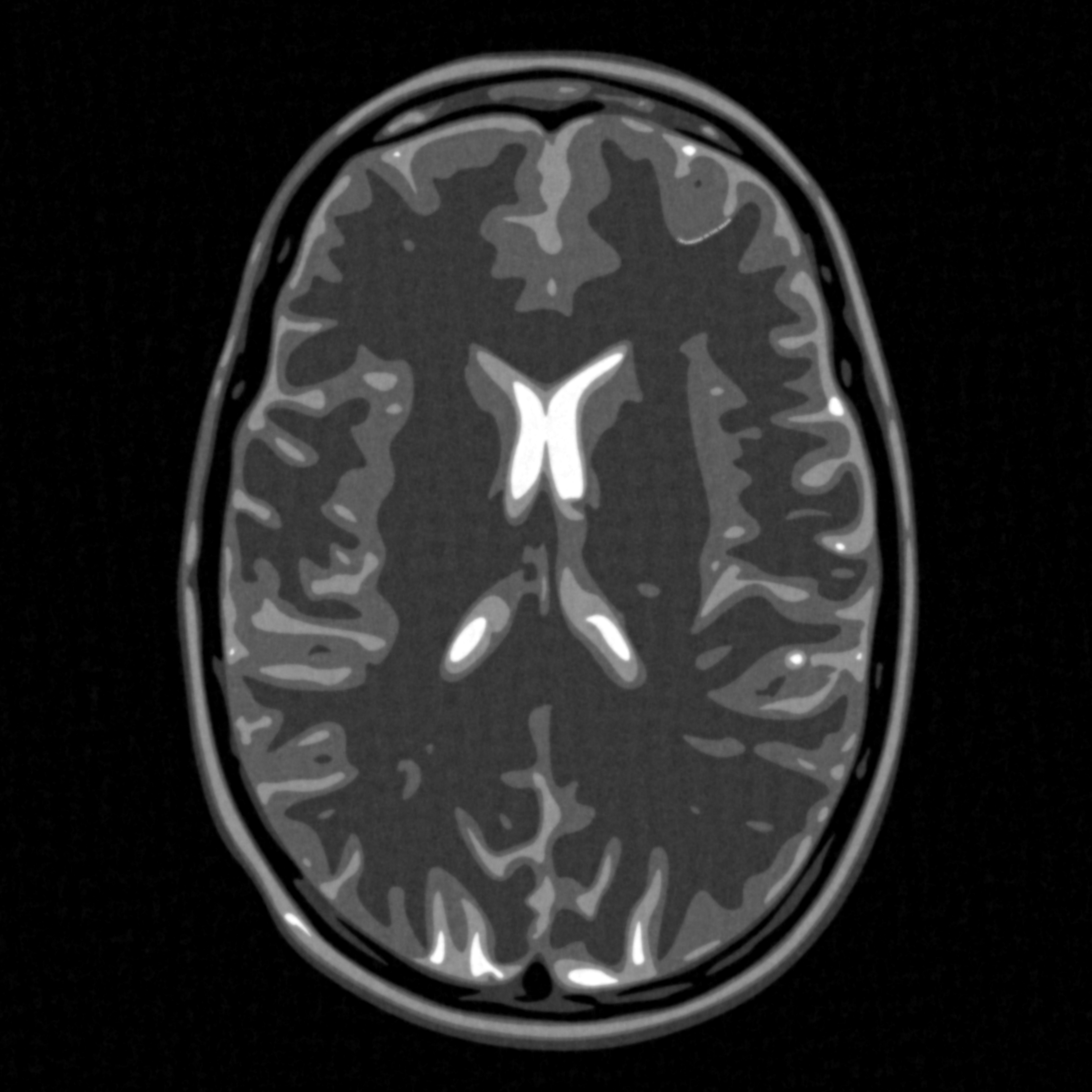} &
\hspace{-0.3cm}\includegraphics[width=0.3\linewidth]{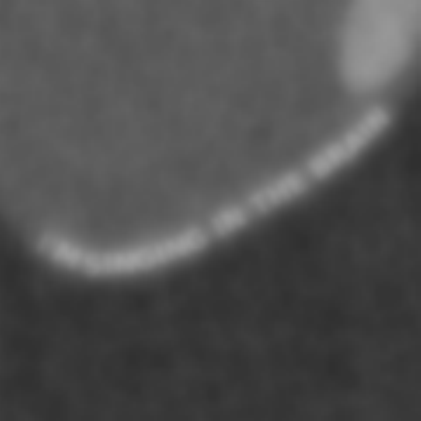} \\
 &  {\hspace{-0.3cm}\small (d) SNR = 21 dB} & {\hspace{-0.3cm}\small (e)}
\etabu
\caption{\label{fig:reconstructionIso} An example of reconstruction of a $2048\times 2048$  MR image from isolated measurements. (a)  Sensing pattern from a variable density sampling strategy (with $4.6\%$ measurements). (b) Corresponding reconstruction via $\ell_1$-minimization. (c) A zoom on a part of the reconstructed image. (d) Image obtained by using the pseudo-inverse transform. (e) A zoom on a part of this image.}
\end{center}
\end{figure}

\subsection{The need for new results}

Probing measurements independently at random is infeasible - or at least impractical - in most measuring instruments. 
This is the case in MRI, where the samples have to lie on piecewise smooth trajectories \cite{lustig2007sparse,chauffert2014,chauffert2016TMI}. 
The same situation occurs in a number of other devices such as Electron \cite{leary2013compressed} and X-ray Tomography \cite{pan2009commercial}, radio-interferometry \cite{wiaux2009compressed}, mobile sensing \cite{taubock2008compressed}, ... 
As a result, concrete applications of CS often rely on sampling schemes that strongly deviate from theory, to account for physical constraints intrinsic to each instrument. 
Despite having no solid theoretical foundation, these heuristic strategies work very well in practice.

This fact is illustrated in Figure \ref{fig:reconstructionLines}. 
In this numerical experiment, parallel lines drawn indepently at random generate a very structured sampling pattern in the Fourier domain, see Figure \ref{fig:reconstructionLines} (a). 
As can be seen in Figure  \ref{fig:reconstructionLines} (b) and (c), this highly structured pattern makes it possible to recover well resolved images using an $\ell^1$-minimization reconstruction. 

To the best of our knowledge, there currently exists no theory able to explain this favorable behavior. The only works dealing with such an acquisition are \cite{polak2012performance,bigot2014analysis}. They assume no structure in the sparsity and we showed in \cite{bigot2014analysis} that structure was crucially needed to explain results such as those in Figure \ref{fig:reconstructionLines}. We will recall this result in Section \ref{subsec:limits}.

\begin{figure}[h!]
\begin{center}
\btabu{@{}ccc}
\includegraphics[width=0.3\linewidth]{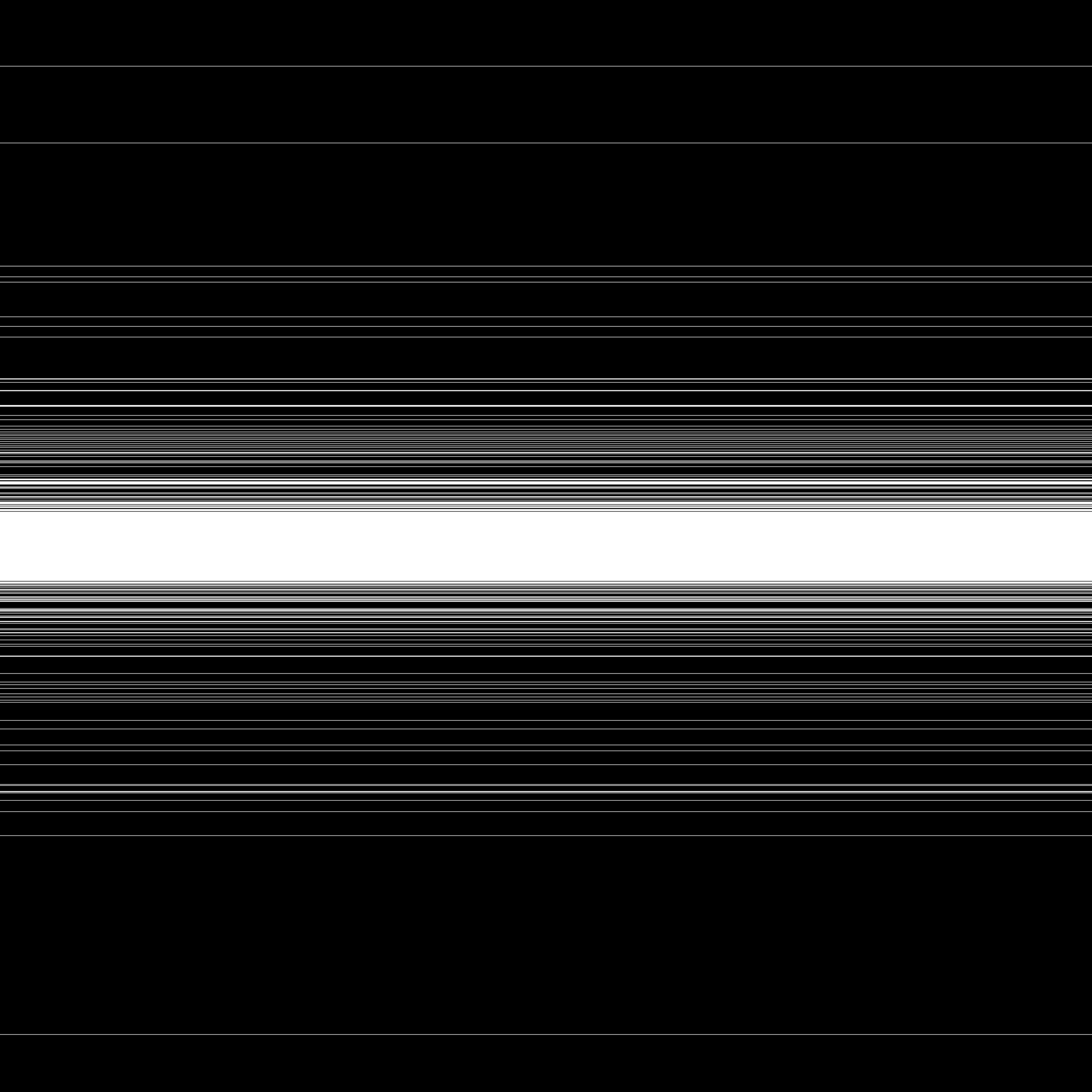} &
\hspace{-0.3cm}\includegraphics[width=0.3\linewidth]{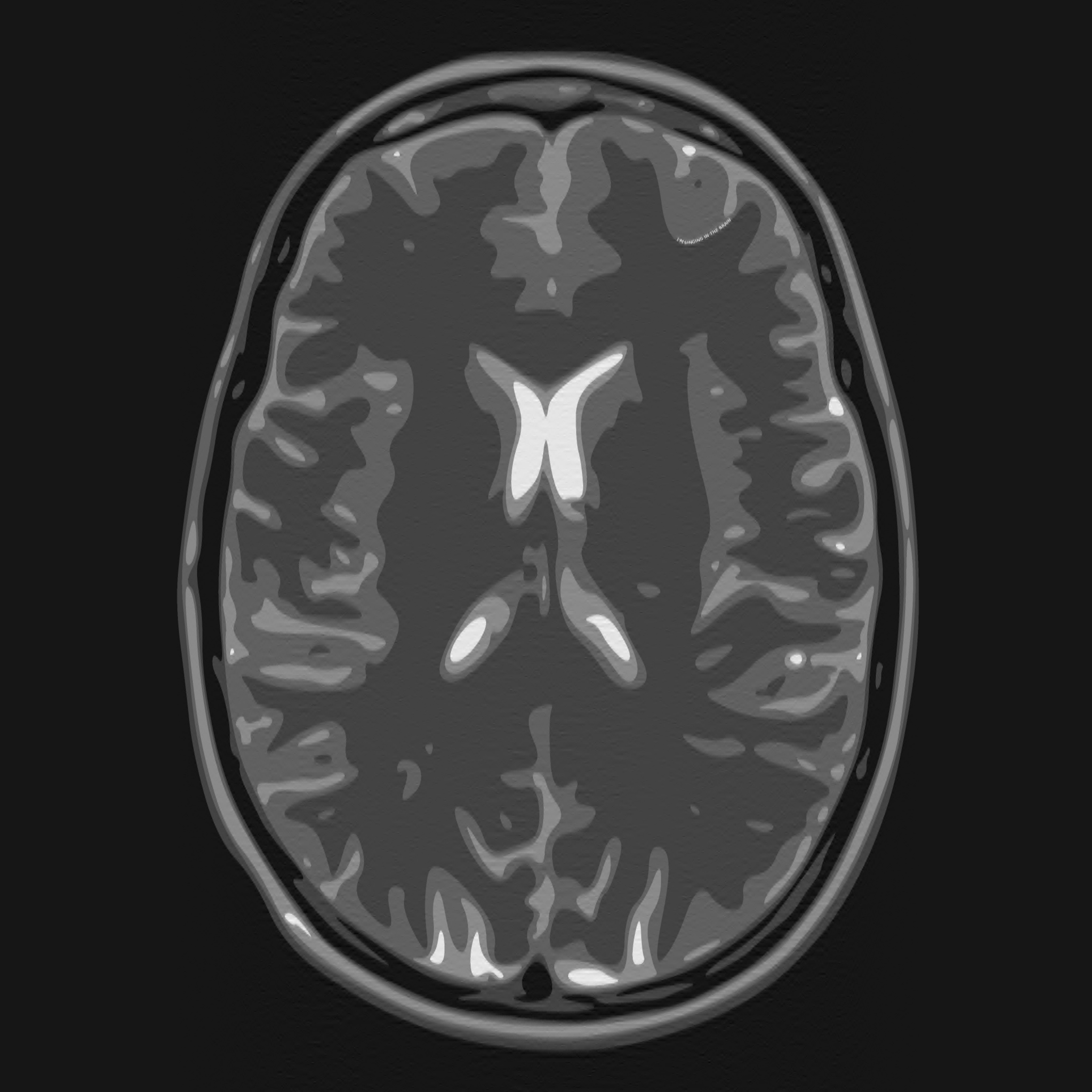} &
\hspace{-0.3cm}\includegraphics[width=0.3\linewidth]{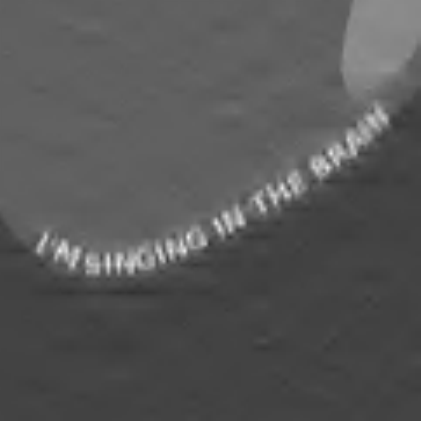} 
\\
{\small (a)}&{\hspace{-0.3cm}\small (b) SNR = 24.1 dB} & {\hspace{-0.3cm}\small (c)} \\
&\hspace{-0.3cm}\includegraphics[width=0.3\linewidth]{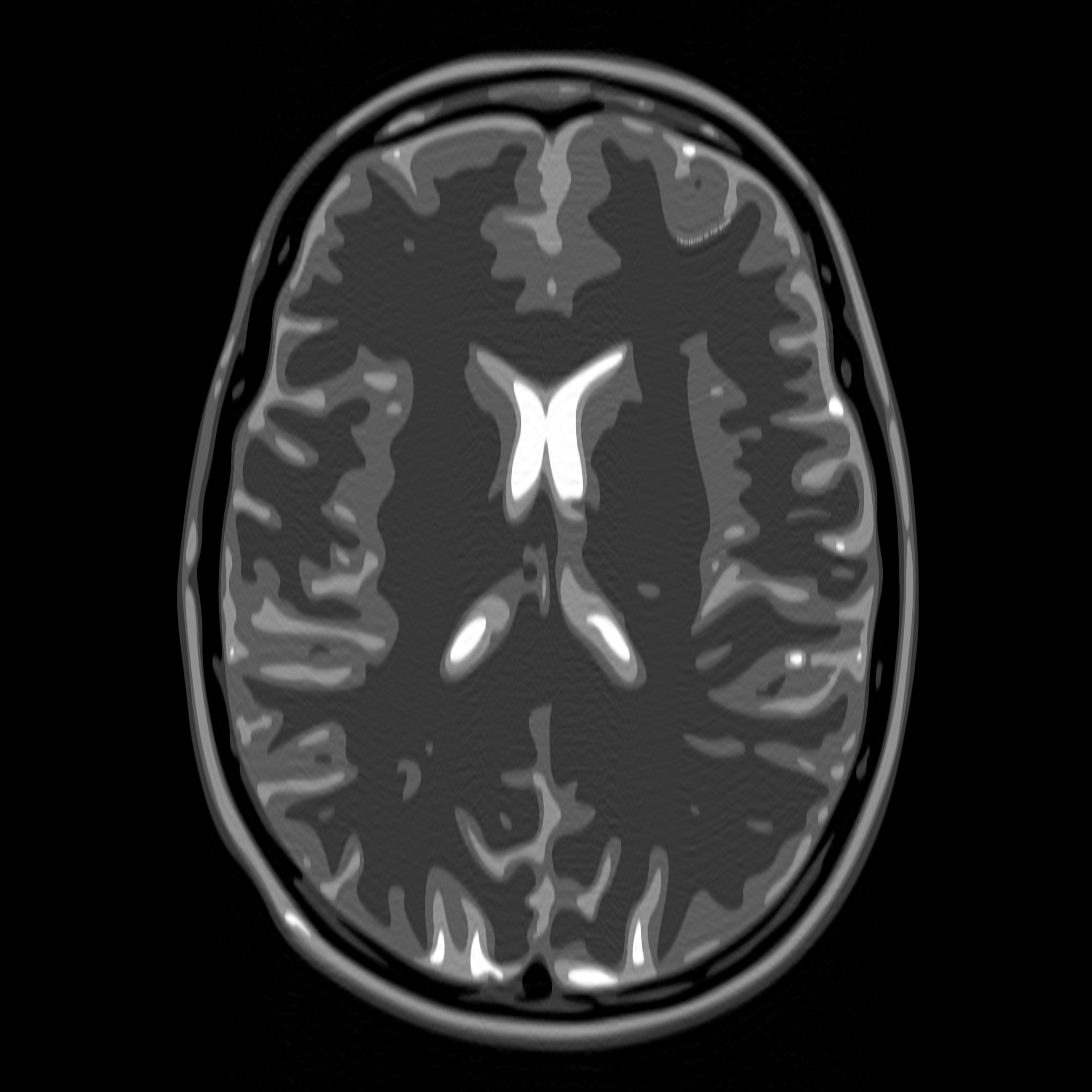} &
\hspace{-0.3cm}\includegraphics[width=0.3\linewidth]{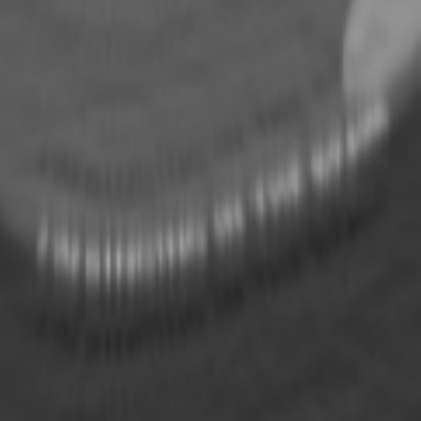} \\
 &  {\hspace{-0.3cm}\small (d) SNR = 21 dB} & {\hspace{-0.3cm}\small (e)}
\etabu
\caption{\label{fig:reconstructionLines} An example of reconstruction of a $2048\times 2048$  MR image from blocks of measurements. (a)  Sampling pattern horizontal lines ($13\%$ of measurements). (b) Corresponding reconstruction via $\ell_1$-minimization. (c) A zoom on a part of the reconstructed image. (d) Image obtained by using the pseudo-inverse transform. (e) A zoom on a part of this image.}
\end{center}
\end{figure}

\subsection{Contributions}

The main contribution of this paper is to derive a new compressed sensing theory:
\begin{enumerate}
\item giving recovery guarantees with an explicit dependency on the support of the vector to reconstruct, 
\item based on block-structured acquisition. 
\end{enumerate}

Informally, our main result (Theorem  \ref{thm:recovery}) reads as follows. 
Let $x\in \Cbb^n$ denote a vector with support $S\subset \{1,\hdots,n\}$. 
Draw $m$ blocks of measurements with a distribution $\pi \in \Rbb^M$, where $M$ denotes the number of available blocks.
If $m\gtrsim \Gamma(S,\pi) \ln\left(\frac{n}{\varepsilon}\right)$, the vector $x$ is recovered by $\ell^1$ minimization with probability greater than $1-\varepsilon$.

The proposed theory has a few important consequences:
\begin{itemize}
    \item The block structure proposed herein enriches the family of sensing matrices available for CS. 
    Existing theories for structured sampling do not take constraints of the sampling device into account. 
    Therefore, the proposed theory gives keys to design realistic structured sampling.
   
    \item Our theorem significantly departs from most works that consider reconstruction of any $s$-sparse vector. 
This is similar in spirit to the works \cite{adcock2015generalized,adcock2013breaking}
However, this is the first time that the dependency on the support $S$ and the drawing probability $\pi$ is made explicit through the quantity $\Gamma(S,\pi)$. 
This provides many possibilities such as optimizing the drawing probability $\pi$ or identifying the classes of supports recoverable with block sampling strategies.

     \item The proposed approach generalizes most existing compressed sensing theories. In particular, it allows recovering all the results mentioned in the introduction. 
     
     \item The provided theory seems to predict accurately practical Fourier sampling experiments, which is quite rare in this field.
     The example given in Figure \ref{fig:reconstructionLines} can be analyzed precisely. In particular, we show that a block structured acquisition can be used, only if the support structure is adapted to it. The resulting structures are more complex than the sparsity by levels of \cite{adcock2013breaking}.
     
  	\item The proposed theory allows envisioning the use of CS in situations that were not possible before. The use of incoherent transforms is not necessary anymore, given that the support $S$ has some favorable properties. 
  	
  	\item The usual restricted isometry constant or coherence are replaced by the quantity $\Gamma(S,\pi)$, which seems to be much more adapted to describe the practical success of CS.
\end{itemize}

\subsection{Related notions in the literature}

In this paper, structured acquisition denotes the constraints imposed by the physics of the acquisition, that are modeled using blocks of measurements extracted from a full deterministic matrix $A_0$. This notion of structured acquisition differs from the notion of structured random matrices, as described in \cite{rauhut2010compressive} and \cite{duarte2011structured}. 
Indeed, this latter strategy is based on acquiring isolated measurements randomly drawn from the rows of a deterministic matrix. The resulting sensing matrix has thus some inherent structure, which is not the case of random matrices with i.i.d.\ entries, that were initially considered in CS.
In our paper, the sensing matrix $A$ is even more structured, in the sense that the full sampling matrix $A_0$ has been partitioned into blocks of measurements.

We also focus on obtaining RIPless results by combining structured acquisition and structured sparsity.
RIPless results \cite{candes2011probabilistic} refer to CS approaches that are non-uniform in the sense that they hold for a given sensing matrix $A$ and a given support $S$ of length $s$, but not for all $s$-sparse vectors. Nevertheless, existing RIPless results in the literature are only based on the degree of sparsity $s=|S|$. 
A main novelty of this paper is to develop RIPless results that explicitly depend on the support $S$ (and not only on its cardinality $s$) of the signal to reconstruct. This strategy allows to incorporate any kind of \textit{prior} information on the structure of $S$ to study its influence on the quality of CS reconstructions. 

Structured sparsity is a concept that appeared early in the history of compressed sensing.
The works \cite{tropp2006just,gribonval2008beyond,herzet2013exact} provide sufficient conditions to recover structured sparse signals by using orthogonal matching pursuit or basis pursuit algorithms. 
Similar conditions (inexact dual certificate) are used in our work.
The main novelty and difficulty in our contribution is to show that very structured sampling matrices satisfy these conditions.

Other authors \cite{eldar2009robust,baraniuk2010model,duarte2011structured,bach2012optimization} proposed to change the recovery algorithm, when a prior knowledge of structured sparsity is available. Their study is usually restricted to random sub-Gaussian matrices which have no structure at all. At this point, we do not know if better recovery guarantees could be obtained by using structured recovery algorithms with structured sampling.

Finally, let us mention that a few papers recently considered the problem of mobile sampling \cite{unnikrishnan2013sampling,unnikrishnan2013signal,grochenig2014minimal}. In these papers, the authors provide theoretical guarantees for the exact reconstruction of bandlimited functions in the spirit of Shannon's sampling theorem. These papers thus strongly differ from our compressed sensing perspective.

\subsection{Organization of the paper}
The paper organization is as follows. Section \ref{sec:setting} gives the formal setting of structured acquisition. Section \ref{sec:mainResults} gives the main results, with a precise definition of $\Gamma(S,\pi)$. Applications of our main theorem to various settings are presented in Section \ref{sec:Applications}. Technical appendices contain the proofs of the main results of this paper.


\section{Preliminaries}
\label{sec:setting}

\subsection{Notation}

In this paper, $n$ denotes the dimension of the signal to reconstruct. 
The notation $S\subset \{1, \hdots, n\}$ refers to the support of the signal to reconstruct.
The vectors $\left( e_i \right)_{1\leq i \leq p}$ denote the vectors of the canonical basis of $\Rbb^d$, where $d$ will be equal to $n$ or $\sqrt{n}$, depending on the context. 
In the sequel, we set $P_S \in \Rbb^{n\times n}$ to be the projection matrix onto $\text{span} \left( \left\lbrace e_i , i\in S \right\rbrace \right)$, i.e.\ the diagonal matrix with the $j$-th diagonal entry equal to 1 if $j\in S$, and 0 otherwise. We will use the shorthand notation $M_S\in \Cbb^{n\times n}$ and $v_S\in \Cbb^n$ to denote the matrix $M P_S$ and the vector $P_S v$ for $M \in \Cbb^{n\times n }$ and $v\in \Cbb^n$.
Similarly, if $M_k$ denotes a matrix indexed by $k$, then $M_{k,S}=M_k P_S$.
For any matrix $M$, for any $1\leq p,q \leq \infty$, the operator norm $\| M \|_{p\rightarrow q}$ is defined as
$$ \| M \|_{p\rightarrow q} = \sup_{\|v\|_p \leq 1} \| Mv \|_q,
$$
with $\|\cdot\|_p$ and $\|\cdot\|_q$ denoting the standard $\ell_p$ and $\ell_q$ norms. Note that for a matrix $M\in \Rbb^{n\times n}$,
$$ \| M \|_{\infty \rightarrow \infty} = \max_{1\leq i \leq n} \| e_i^* M \|_1.
$$
The function $\sgn : \Rbb^n \rightarrow \Rbb^n$ is defined by
\[ \left( \sgn ( x) \right)_i = \left\lbrace
\begin{array}{cc}
1 & \text{if}  \quad x_i >0 \\
-1 & \text{if} \quad  x_i < 0 \\
0 & \text{if} \quad x_i=0,
\end{array}
\right.
\]
and $\Id_n$ will denote the $n$-dimensional identity matrix.

\subsection{Sampling strategy}


In this paper, we assume that we are given some orthogonal matrix $A_{0} \in \Cbb^{n \times n}$ representing the set of possible linear measurements imposed by a specific sensor device. Let $\left( \Ic_k \right)_{1\leq k \leq M}$ denote a partition of the set $\{1, \hdots , n \}$. The rows $(a_i^*)_{1\leq i \leq n} \in \Cbb^n$ of $A_0$ are partitioned into the following blocks dictionary $\left( B_k \right)_{1 \leq k \leq M}$, such that
$$ 
B_k = \left( a_i^* \right)_{i\in \Ic_k} \in \Cbb^{|\Ic_k| \times n} \qquad {s.t.} \qquad \Ic_k \subset \{1, \hdots , n \},
$$
with $\sqcup_{k=1}^M \Ic_k = \{1,\hdots , n\}$. 
The sensing matrix $A$ is then constructed by randomly drawing blocks as follows
\begin{equation}
\label{eq:modsamp}
A = \frac{1}{\sqrt{m}} \left( \frac{1}{\sqrt{\pi_{K_\ell}}} B_{K_\ell} \right)_{1\leq \ell \leq m}, 
\end{equation}
where $\left( K_\ell \right)_{1\leq \ell \leq m}$ are i.i.d.\ copies of a random variable $K$ such that
$$
\Pbb(K = k) = \pi_{k},
$$ 
for all $ 1 \leq k \leq M$.
Moreover, thanks to the renormalization of the blocks $B_{K_\ell}$  by the weights $1/\sqrt{\pi_{K_\ell}}$ in model \eqref{eq:modsamp}, the random block $ B_{K} $ satisfies
\begin{align}
\label{isotropyCondition}
\Ebb\left( \frac{B_{K}^*B_K}{\pi_K} \right)= \sum_{k=1}^M B_k^*B_k = \Id,
\end{align}
since $A_0$ is orthogonal and $\left( B_k\right)_{1\leq k \leq M}$ is a partition of the rows of $A_0$. 


\begin{remark}
The case of overlapping blocks can also be handled. 
To do so, we may define the blocks $\left( B_k \right)_{1\leq k \leq M}$ as follows:
$$ B_k = \left( \frac{1}{\sqrt{\alpha_i}} a_i^* \right)_{i \in \Ic_k},  \qquad \text{for} \quad 1\leq k \leq M,
$$
where $\displaystyle \bigcup_{k=1}^M \Ic_k = \{1, \hdots  , n\}$. The coefficients $\left( \alpha_i\right)_{1\leq i \leq n}$ denotes the multiplicity of the row $a_i^*$, namely the number of appearances $\alpha_i = |\{k, i \in \Ic_k \}|$ of this row in different blocks.
This renormalization is sufficient to ensure the isotropy condition $\Ebb \left( \frac{B_{K}^*B_K}{\pi_K} \right) = \Id$ where $K$ is defined as above. 
\end{remark}

Note that our block sampling strategy encompasses the standard acquisition based on isolated measurements. Indeed, isolated measurements can be considered as blocks of measurements consisting of only one row of $A_0$.

\begin{remark}
More generally, the theorems could be extended - with slight adaptations - to the case where the sensing matrix is
$$
A= \frac{1}{\sqrt{m}}
  \begin{pmatrix}
B_{K_1} \\ \vdots \\ B_{K_m}  
   \end{pmatrix}
$$
where $B_{K_1},\hdots, B_{K_m}$ are i.i.d.\ copies of a random matrix $B\in \Cbb^{b \times n}$ satisfying 
$$
\Ebb(B^*B)=\Id.
$$
The integer $b$ is itself random and $\Id$ is the $n\times n$ identity matrix. 
Assuming that $B$ takes its value in a countable family $\left( B_k  \right)_{k\in \Kc}$, this formalism covers a large number of applications described in \cite{bigot2014analysis}: (i) blocks with i.i.d.\ entries, (ii) partition of the rows of orthogonal transforms, (iii) cover of the rows of orthogonal transforms, (iv) cover of the rows from tight frames.
\end{remark}


\section{Main Results}
\label{sec:mainResults}

\subsection{Fundamental quantities}

Before introducing our main results, we need to define some quantities (reminiscent of the coherence) that will play a key role in our analysis.

\begin{defchapter}
\label{def:quantities}
Consider a blocks dictionary $\left( B_k\right)_{1\leq k \leq M}$. Let $S\subset \{1, \hdots, n \}$ and $\pi$ be a probability distribution on $\{1,\hdots ,M\}$.
Define
\begin{align}
\label{ineq:Theta}
 \Theta(S,\pi) &:= \max_{1\leq k \leq M}  \frac{1}{\pi_k} \| B_{k}^*B_{k,S}  \|_{\infty \rightarrow \infty} = \max_{1\leq k \leq M} \max_{1\leq i \leq n} \frac{ \| e_i^* B_k^* B_{k,S} \|_1}{\pi_k},
 \\
\label{ineq:Upsi}
 \Upsilon(S,\pi)  &:=\max_{1\leq i \leq n} \sup_{\|v\|_\infty \leq 1 }\sum_{k=1}^M   \frac{1}{\pi_k}  \left| e_i^*  B_{k}^* B_{k,S} v  \right|^2,\\ 
\label{Gamma}
 \Gamma(S,\pi) &:= \max \left( \Upsilon(S,\pi) , \Theta(S,\pi)  \right).
\end{align}
\end{defchapter}
For the sake of readability, we will sometimes use the shorter notation $\Theta, \Upsilon$ and $\Gamma$ to denote  $\Theta(S,\pi), \Upsilon(S,\pi)$ and $\Gamma(S,\pi)$.
In Definition \ref{def:quantities}, $\Theta$ is related to the local coherence and the degree of sparsity, when the blocks are made of only one row (the case of isolated measurements). Indeed, in such a case, $\Theta$ reads as follows
$$ \Theta(S,\pi) := \max_{1\leq k \leq n} \frac{\| a_k \|_\infty \| a_{k,S} \|_1}{\pi_k} \leq s\cdot \max_{1\leq k \leq n} \frac{\| a_k \|_\infty^2 }{\pi_k}.
$$
The quantity $ \max_{1 \leq k \leq n} \frac{\| a_k \|_\infty^2 }{\pi_k}$ refers to the usual notion of coherence described in \cite{candes2011probabilistic}.
The quantity $\Upsilon$ is new and it is more delicate to interpret. 
It reflects an inter-block coherence.
A rough upper-bound for $\Upsilon$ is
$$  \Upsilon(S,\pi)  \leq \sum_{k=1}^M   \frac{1}{\pi_k}  \left\| B_{k}^* B_{k,S}   \right\|^2_{\infty \rightarrow \infty}.
$$
by switching the maximum and supremum with the sum in the definition of $\Upsilon$.
However, it is important to keep this order (maximum, supremum and sum) to measure interferences between blocks. 
In Section \ref{sec:Applications}, we give more precise evaluations of $\Theta(S,\pi)$ and $\Upsilon(S,\pi)$ in particular cases.

\begin{remark}[Support-dependency and drawing-dependency]
In Definition \ref{def:quantities}, the quantities $\Theta$ and $\Upsilon$ are drawing-dependent and support-dependent. Indeed, $\Gamma$ does not only depend on the degree of sparsity $s = |S|$. To the best of our knowledge, existing theories in CS only rely on $s$, see \cite{candes2006robust, candes2011probabilistic}, or on degrees of sparsity structured by levels, see \cite{adcock2013breaking}. Since $\Gamma$ is explicitly related to $S$, this allows to incorporate \textit{prior} assumptions on the structure of $S$. Besides, the dependency on $\pi$ (i.e.\ the way of drawing the measurements) is also explicit in the definition of $\Gamma$.  This offers the flexibility to analyze the influence of $\pi$ on the required number of measurements. We therefore believe that the introduced quantities might play an important role in the future analysis of CS.  
\end{remark}

\subsection{Exact recovery guarantees}

Our main result reads as follows.
\begin{thmchapter}
\label{thm:recovery}
Let $S \subset \{ 1 ,\hdots , n \}$ be a set of indices of cardinality $s\geq 16$  and suppose that $x \in \Cbb^{n}$ is an $s$-sparse vector supported on $S$.  Fix $\varepsilon \in (0,1)$. Suppose that the sampling matrix $A$ is constructed as in \eqref{eq:modsamp}.  Suppose that ${\Gamma(S ,\pi)} \geq 1$. If 
\begin{eqnarray}\label{eq:mainthm}
m \geq 73 \cdot {\Gamma(S ,\pi)} \ln(64s) \left(   \ln \left( \frac{9 n }{\varepsilon}\right) + \ln \ln(64s) \right),
\end{eqnarray}
then $x$ is the unique solution of \eqref{pb:minL1}  with probability larger than $1-\varepsilon$. 
\end{thmchapter}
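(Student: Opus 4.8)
The plan is to establish exact recovery through the construction of an \emph{inexact dual certificate}, following the RIPless strategy of Gross and Candès--Plan adapted to the block setting. First I would record the deterministic sufficient conditions: $x$ is the unique minimizer of \eqref{pb:minL1} provided (a) the restricted operator is well-conditioned, $\|A_S^* A_S - \Id_S\|_{2\to 2} \leq 1/2$, and (b) there exists a vector $v$ in the range of $A^*$ with $\|P_S v - \sgn(x)\|_2$ and $\|P_{S^c} v\|_\infty$ both small enough (say $\leq 1/4$). The isotropy identity \eqref{isotropyCondition} guarantees $\Ebb(A^* A) = \Id$, so both objects concentrate around their means, and everything reduces to two probabilistic estimates governed by $\Gamma(S,\pi)$.

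For the conditioning (a), I would write $A_S^* A_S - \Id_S = \frac{1}{m}\sum_{\ell=1}^m \left(\frac{1}{\pi_{K_\ell}} B_{K_\ell,S}^* B_{K_\ell,S} - \Id_S\right)$ as a sum of i.i.d.\ centered matrices and apply a matrix Bernstein inequality. The per-term operator bound and the variance are both dominated by $\Theta(S,\pi)$, so $m \gtrsim \Theta \ln(\cdots)$ forces $\|A_S^*A_S - \Id_S\|_{2\to 2}\leq 1/2$ with high probability.

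For the certificate (b), I would run the golfing scheme. Partition the $m$ draws into $L \sim \ln(s)$ independent batches; starting from the residual $w_0 = \sgn(x)$ on $S$, each batch updates $v$ by adding a term $\frac{1}{m_\ell}\sum \frac{1}{\pi_K} B_K^* B_{K,S} w_{\ell-1}$ and contracts the on-support residual, $\|w_\ell\|_2 \leq \tfrac12\|w_{\ell-1}\|_2$, so that after $L$ steps $\|P_S v - \sgn(x)\|_2 \leq 2^{-L}\sqrt{s}$ is tiny; this contraction again follows from a per-batch restricted-isometry estimate controlled by $\Theta$. Simultaneously I must keep $\|P_{S^c} v\|_\infty < 1$: for each $i \in S^c$ the corresponding entry of the increment is a sum of independent scalars $\frac{1}{\pi_K} e_i^* B_K^* B_{K,S} w_{\ell-1}$, to which I would apply a scalar Bernstein inequality and then union-bound over the $n$ coordinates. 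The variance $\sum_k \pi_k^{-1}|e_i^* B_k^* B_{k,S} w_{\ell-1}|^2 \leq \Upsilon(S,\pi)\|w_{\ell-1}\|_\infty^2$ and the uniform bound $\leq \Theta(S,\pi)\|w_{\ell-1}\|_\infty$ are exactly what the two quantities of Definition \ref{def:quantities} control, so $m \gtrsim \Gamma \ln(n/\varepsilon)$ per batch suffices.

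The hard part is the off-support $\ell_\infty$ control in the golfing recursion. Because the blocks $B_k$ are not rank-one (unlike isolated measurements), the summands $e_i^* B_k^* B_{k,S} w$ couple many coordinates of $w$, and the naive coherence bound is too weak; this is precisely the inter-block interference that $\Upsilon$ is designed to capture, and retaining the sharp variance term rather than the crude bound $\sum_k \pi_k^{-1}\|B_k^* B_{k,S}\|_{\infty\to\infty}^2$ (obtained by pulling the sum outside the maximum and supremum) is what makes the block theory nontrivial. Finally I would assemble the pieces: choosing $L \approx \tfrac12\log_2(64s)$ batches and tracking the failure probabilities over all $L$ batches and all $n$ coordinates produces the $\ln(64s)$ factor and the $\ln\ln(64s)$ correction, yielding the stated sample complexity \eqref{eq:mainthm}.
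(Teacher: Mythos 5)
Your architecture matches the paper's (inexact duality plus a golfing scheme with Bernstein estimates governed by $\Theta$ and $\Upsilon$), but the golfing bookkeeping as you describe it does not close. You contract the residual only in $\ell_2$, namely $\|w_{\ell}\|_2 \le \tfrac12 \|w_{\ell-1}\|_2$, while your off-support increments are bounded by $t_\ell \|w_{\ell-1}\|_\infty$ with $t_\ell$ constant. To conclude $\|P_{S^c} v\|_\infty \le 1/4$ you must sum $\sum_{\ell} t_\ell\|w_{\ell-1}\|_\infty$, and the only bound your recursion provides on $\|w_{\ell-1}\|_\infty$ is $\|w_{\ell-1}\|_2 \le \sqrt{s}\,2^{-(\ell-1)}$. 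The accumulated sum is then of order $t\sqrt{s}$, so with constant $t_\ell$ it is not $\le 1/4$; forcing $t_\ell \lesssim 1/\sqrt{s}$ in your scalar Bernstein step would inflate the per-batch requirement to $m_\ell \gtrsim \Gamma(S,\pi)\, s \ln(n/\varepsilon)$, destroying the claimed sample complexity (or it forces you back to the Cand\`es--Plan scheme with decreasing $t_\ell$, whose estimates depend on $s$ rather than on $S$, which is precisely what the theorem is designed to avoid). The paper closes this loop with a third per-batch condition, the on-support $\ell_\infty\to\ell_\infty$ contraction \eqref{eq:control4} proved in Lemma \ref{lem:InfToInf}: $\left\| \left( \tfrac{m}{m_\ell} \left(A_{S}^{(\ell)}\right)^* A^{(\ell)}_{S} - P_S \right) w^{(\ell-1)}\right\|_\infty \le \tfrac15 \|w^{(\ell-1)}\|_\infty$. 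This propagates $\|w^{(\ell-1)}\|_\infty \le (1/5)^{\ell-1}$, starting from $\|\sgn(x_S)\|_\infty = 1$ rather than $\|\sgn(x_S)\|_2 = \sqrt{s}$, and is exactly the mechanism that keeps $s$ out of the off-support accumulation; it is the paper's key departure from the standard golfing scheme. You must add this estimate, which follows from the same scalar Bernstein argument you already invoke for the off-support terms, with a union bound over $i\in S$ instead of $S^c$.

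A second, smaller omission: the deterministic inexact-duality lemma (Lemma \ref{lem:inexactDuality}) requires, besides $\|(A_S^*A_S)^{-1}\|_{2\to 2}\le 2$, the off-support coherence condition $\max_{i\in S^c}\|A_S^*Ae_i\|_2 \le 1$; it is needed to control $\|h_S\|_2$ by $\|h_{S^c}\|_1$ for a feasible perturbation $h$, and the deterministic conclusion fails without it. The paper verifies it via a vector Bernstein inequality (Lemma \ref{lem:offSupportCoherence}) under $m\gtrsim \Theta(S,\pi)\ln(n/\varepsilon)$, which fits within your budget but must be stated and proved as part of the argument.
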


\begin{remark}
In the sequel, we will simplify condition \eqref{eq:mainthm} by writing:
\begin{equation*}
 m\geq C \cdot \Gamma(S ,\pi) \ln(s)\ln \left( \frac{n }{\varepsilon}\right)
\end{equation*}
where $C$ is a universal constant.
\end{remark}

The proof of Theorem \ref{thm:recovery} is contained in Appendix\ref{sec:proof1}. It relies on the construction of an inexact dual certificate satisfying appropriate properties that are described in Lemma \ref{lem:inexactDuality}. Then our proof is based on the so-called golfing scheme introduced in \cite{gross2011recovering} for matrix completion and adapted by \cite{candes2011probabilistic} for compressed sensing from isolated measurements. 
In the golfing scheme, the main difficulty is to control operator norms of random matrices extracted from the sensing matrix $A$. 
In \cite{candes2011probabilistic}, it is proposed to control (in probability) the operator norms $\|\cdot \|_{\infty \rightarrow 2}$ and $\| \cdot \|_{2\rightarrow 2}$. However, this technique only gives results depending on the degree of sparsity $s$. 
In order to include an explicit dependency on the support $S$, one has to modify the golfing scheme in \cite{candes2011probabilistic}, by controlling the operator norm $\| \cdot \|_{\infty \rightarrow \infty}$, instead of  controlling the operator norms $\|\cdot \|_{\infty \rightarrow 2}$ and $\| \cdot \|_{2\rightarrow 2}$. 
A similar idea has been developed in \cite{adcock2013breaking}.

\begin{remark}
\label{rem:balancingProperty}
Compared to most compressed sensing results, the condition required in Theorem \ref{thm:recovery} involves the extra multiplicative factor $\ln(64s)$. This factor does not appear in \cite{gross2011recovering,candes2011probabilistic}, but this is due to a mistake that was detected and corrected in \cite{adcock2015generalized}. 
Following the proofs proposed in \cite{adcock2015generalized}, we could in fact obtain a bound of type:
\begin{equation*}
 m\geq C' \cdot \Gamma(S ,\pi) \ln \left( \frac{n }{\varepsilon}\right),
\end{equation*}
with $C'>C$. To the best of our knowledge, the ratio $C'/C$ obtained using the proof in \cite{adcock2015generalized} is of order $24$.
This means that the new bound becomes interesting only for $s>4\cdot 10^8$, i.e. in an asymptotic regime. 
In this paper, we therefore stick to the bound in Theorem \ref{thm:recovery} for (i) simplifying the proof of the main result and (ii) obtain the best results in a non asymptotic regime.
\end{remark}


\subsection{Consequences for stochastic signal models}

The explicit dependency of $\Gamma$ in $S$ allows us to consider the case of a random support $S$. 
\begin{prop}
\label{prop:stochasticModelS}
Let $S\subset \{1,\hdots , n\}$ denote a random support. For some real positive $\gamma$, suppose that the event $\Gamma(S,\pi)\leq \gamma$ occurs with probability larger than $1-\varepsilon'(\gamma)$. If $ m \gtrsim \gamma \ln(s) \ln(n/\varepsilon)$, then $x$ is the unique solution of Problem \ref{pb:minL1} with probability higher than $1-\varepsilon-\varepsilon \varepsilon'(\gamma)$.
\end{prop}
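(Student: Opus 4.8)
The plan is to treat this as a conditioning argument built directly on top of Theorem \ref{thm:recovery}, exploiting the fact that the latter is a support-wise (RIPless) guarantee. There are two independent sources of randomness: the support $S$ and the block indices $(K_\ell)_{1\le \ell\le m}$ that define the sensing matrix $A$ in \eqref{eq:modsamp}. I will assume these are independent, and that $S$ has fixed cardinality $s$ (only its location is random), so that the hypothesis $s\ge 16$ and the factor $\ln(s)$ in Theorem \ref{thm:recovery} retain their meaning. Introduce the event $\mathcal{A}=\{\Gamma(S,\pi)\le\gamma\}$, for which the assumption gives $\Pbb(\mathcal{A})\ge 1-\varepsilon'(\gamma)$, and write $\mathcal{R}$ for the event that $x$ is the unique minimizer of \eqref{pb:minL1}.

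First I would fix an arbitrary realization $S=S_0$ satisfying $\Gamma(S_0,\pi)\le\gamma$. On this event the sampling budget obeys $m\gtrsim\gamma\ln(s)\ln(n/\varepsilon)\ge C\,\Gamma(S_0,\pi)\ln(s)\ln(n/\varepsilon)$, so Theorem \ref{thm:recovery} applies verbatim: the support is now deterministic, the only randomness being that of $A$, and the conclusion is $\Pbb(\mathcal{R}\mid S=S_0)\ge 1-\varepsilon$. Since this bound is uniform over every admissible $S_0\in\mathcal{A}$, integrating against the conditional law of $S$ restricted to $\mathcal{A}$ yields $\Pbb(\mathcal{R}\mid\mathcal{A})\ge 1-\varepsilon$.

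It then remains to remove the conditioning through the law of total probability. Using $\mathcal{R}\supset\mathcal{R}\cap\mathcal{A}$ together with the independence of $S$ and $A$,
\begin{equation*}
\Pbb(\mathcal{R})\ \ge\ \Pbb(\mathcal{R}\cap\mathcal{A})\ =\ \Ebb\big[\mathds{1}_{\mathcal{A}}\,\Pbb(\mathcal{R}\mid S)\big]\ \ge\ (1-\varepsilon)\,\Pbb(\mathcal{A})\ \ge\ (1-\varepsilon)\bigl(1-\varepsilon'(\gamma)\bigr)\ =\ 1-\varepsilon-\varepsilon'(\gamma)+\varepsilon\,\varepsilon'(\gamma),
\end{equation*}
which is the announced recovery guarantee. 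The whole argument is thus a one-line application of Bayes/total probability once Theorem \ref{thm:recovery} is in hand.

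I do not expect a serious analytical obstacle here, since all the hard work is already contained in Theorem \ref{thm:recovery}; the difficulty is purely structural. The two points requiring care are: (i) the independence of $S$ and $A$, without which the conditional probability $\Pbb(\mathcal{R}\mid S)$ could not be identified with the fixed-support guarantee of Theorem \ref{thm:recovery}; and (ii) the essential use of the \emph{support-wise} nature of that theorem, so that its conclusion can be invoked pointwise for each admissible $S_0$ and only afterwards averaged against the law of $S$. The remaining bookkeeping — measurability of $\mathcal{R}$ and the constancy of $s=|S|$ — is routine.
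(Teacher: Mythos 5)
Your proof is correct and takes essentially the same route as the paper's: condition on the event $\{\Gamma(S,\pi)\leq\gamma\}$, invoke Theorem \ref{thm:recovery} for each fixed admissible support, and conclude by the law of total probability --- your write-up is in fact more careful than the paper's, since it makes explicit the independence of $S$ and the sensing matrix and the pointwise (support-wise) application of the theorem before averaging. One caveat: your final expression $(1-\varepsilon)\bigl(1-\varepsilon'(\gamma)\bigr)=1-\varepsilon-\varepsilon'(\gamma)+\varepsilon\,\varepsilon'(\gamma)$ is not literally the announced bound $1-\varepsilon-\varepsilon\,\varepsilon'(\gamma)$ (it is weaker when $\varepsilon<1/2$), but the paper's own proof ends at the same product and mislabels it via the same algebraic slip, so this mismatch is an issue with the proposition's statement rather than with your argument.
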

\begin{proof}
Set $m \gtrsim \gamma \ln(s) \ln(n/\varepsilon)$. Define the event $R$ ``$x$ is the unique solution of Problem \ref{pb:minL1}" where $R$ stands for ``reconstruction of the signal". Define also $A$ the event ``$\Gamma(S,\pi)\geq \gamma$". The hypothesis of Proposition \ref{prop:stochasticModelS} and Theorem \ref{thm:recovery} give that $\Pbb\left( R | A \right) \geq 1-\varepsilon$. To prove Proposition \ref{prop:stochasticModelS}, we must quantify 
\begin{align*}
\Pbb\left( R  \right) &= \Pbb\left( R \cap A \right)+  \Pbb\left( R \cap A^c \right) = \Pbb \left( R| A\right) \Pbb(A) + \Pbb\left(R\cap A^c \right) \\
&\geq (1-\varepsilon)\left( 1-\varepsilon'(\gamma)\right) = 1-\varepsilon-\varepsilon \varepsilon'(\gamma),
\end{align*}
which concludes the proof.
\end{proof}

\subsection{Choice of the drawing probability}

The choice of a drawing probability $\pi$ minimizing the required number of block measurements in Theorem \ref{thm:recovery},  is a delicate issue. The distribution $\pi^\star$ minimizing $\Theta(S,\pi)$ in Equation \eqref{ineq:Theta} can be obtained explicitly:
\begin{align}
\label{eq:choicePi}
 \pi^\star_k = \frac{\| B_{k}^*B_{k,S}  \|_{\infty \rightarrow \infty}  }{\sum_{\ell=1}^M \| B_{\ell}^*B_{\ell,S}  \|_{\infty \rightarrow \infty}  }, \quad \text{for} \quad 1\leq k \leq M.
\end{align}
Unfortunately, the minimization of $\Upsilon(S,\pi)$ with respect to $\pi$ seems much more involved and we leave this issue as an open question in the general case. 

Note however that in all the examples treated in the paper, we derive upper bounds depending on $(S,\pi)$ for $\Upsilon(S,\pi)$ and $\Theta(S,\pi)$ that coincide. The distribution $\pi^\star$ is then set to minimize the latter upper bound.

Note also that optimizing $\pi$ independently of $S$ will result in a sole dependence to the degree of sparsity $s=|S|$ which is not desirable if one wants to exploit structured sparsity. 


\section{Applications}

\label{sec:Applications}

In this section, we first show that Theorem \ref{thm:recovery} can be used to recover state of the art results in the case of isolated measurements \cite{candes2011probabilistic}. 
We then show that it allows recovering recent results when a prior on the sparsity structure is available.  
The proposed setting however applies to a wider setting even in the case of isolated measurements. 
Finally, we illustrate the consequences of our results when the acquisition is constrained by blocks of measurements. In the latter case, we show that the sparsity structure should be adapted to the sampling structure for exact recovery.

\subsection{Isolated measurements with arbitrary support}

First, we focus on an acquisition based on isolated measurements which is the most widespread in CS. 
This case corresponds to choose blocks of form $B_k = a_k^*$ for $1\leq k \leq n$ with $M=n$, where $a_k^*$ are the rows of an orthogonal matrix.
In such a setting, the sensing matrix can be written as follows
\begin{align}
\label{eq:sensingMatrixIsolated}
A = \frac{1}{\sqrt{m}}\left( \frac{1}{\sqrt{\pi_{K_\ell}}} a_{K_\ell}^* \right)_{1\leq \ell \leq m},
\end{align}
where $\left( K_\ell \right)_{1\leq \ell \leq m}$ are i.i.d.\ copies of $K$ such that $\Pbb \left( K = k\right) = \pi_k$, for $1\leq k \leq n$.

We apply Theorem \ref{thm:recovery} when only the degree of sparsity $s$ of the signal to reconstruct is known. 
This is the setting considered in most CS papers (see e.g. \cite{candes2006near,rauhut2010compressive,candes2011probabilistic}).
In this context, our main result can be rewritten as follows.
\begin{corollary}
\label{corol:isolatedUniform}
Let $S \subset \{ 1 ,\hdots , n \}$ be a set of indices of cardinality $s$  and suppose that $x \in \Cbb^{n}$ is an $s$-sparse vector.  Fix $\varepsilon \in (0,1)$. Suppose that the sampling matrix $A$ is constructed as in \eqref{eq:sensingMatrixIsolated}.  If 
\begin{eqnarray}\label{eq:bbbbbbb}
m &\geq & C \cdot s \cdot  \max_{1\leq k \leq n} \frac{ \|a_k \|_\infty^2}{\pi_k} \ln(s) \ln \left( \frac{ n }{\varepsilon}\right),
\end{eqnarray}
then $x$ is the unique solution of \eqref{pb:minL1}  with probability at least $1-\varepsilon$. 

Moreover, the drawing distribution minimizing \eqref{eq:bbbbbbb} is $\pi_k = \frac{  \|a_k \|_\infty^2 }{\sum_{\ell=1}^n \|a_\ell \|_\infty^2 }$, which leads to
\begin{eqnarray*}
m &\geq & C \cdot s  \cdot \sum_{k=1}^n \|a_k \|_\infty^2 \ln(s) \ln \left( \frac{n }{\varepsilon}\right).
\end{eqnarray*}
\end{corollary}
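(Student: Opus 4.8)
The plan is to specialize the quantities $\Theta(S,\pi)$ and $\Upsilon(S,\pi)$ of Definition~\ref{def:quantities} to the isolated-measurement model \eqref{eq:sensingMatrixIsolated}, bound both by $s\cdot\max_{1\le k\le n}\|a_k\|_\infty^2/\pi_k$, and then invoke Theorem~\ref{thm:recovery} in its simplified form $m\ge C\,\Gamma(S,\pi)\ln(s)\ln(n/\varepsilon)$. Here $B_k=a_k^*$, so $B_k^*B_{k,S}=a_k a_{k,S}^*$ is a rank-one matrix whose $i$-th row equals $(a_k)_i\,a_{k,S}^*$. Consequently $\|B_k^*B_{k,S}\|_{\infty\rightarrow\infty}=\|a_k\|_\infty\|a_{k,S}\|_1$, and since $\|a_{k,S}\|_1\le s\|a_k\|_\infty$ (as already noted after Definition~\ref{def:quantities}) one gets $\Theta(S,\pi)\le s\cdot\max_k\|a_k\|_\infty^2/\pi_k$ at once.

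The crux is the control of $\Upsilon$. Writing $e_i^*B_k^*B_{k,S}v=(a_k)_i\,(a_{k,S}^*v)$ gives
\[
\Upsilon(S,\pi)=\max_{1\le i\le n}\ \sup_{\|v\|_\infty\le1}\ \sum_{k=1}^n \frac{|(a_k)_i|^2}{\pi_k}\,\bigl|a_{k,S}^*v\bigr|^2 .
\]
The difficulty is that a single supremum over $v$ is shared across the whole sum over $k$, so one cannot optimize $v$ block by block (doing so would yield the wrong, $s^2$-type, bound). I would instead first bound the prefactor uniformly, $|(a_k)_i|^2/\pi_k\le\|a_k\|_\infty^2/\pi_k\le\Lambda$ with $\Lambda:=\max_\ell\|a_\ell\|_\infty^2/\pi_\ell$, pull $\Lambda$ out of the sum, and then exploit the orthonormality of the rows of $A_0$ (Parseval): since $a_{k,S}^*v=a_k^*v_S$ with $v_S=P_Sv$,
\[
\sum_{k=1}^n\bigl|a_{k,S}^*v\bigr|^2=\sum_{k=1}^n|a_k^*v_S|^2=\|A_0 v_S\|_2^2=\|v_S\|_2^2\le s\|v\|_\infty^2\le s .
\]
This yields $\Upsilon(S,\pi)\le s\Lambda=s\cdot\max_k\|a_k\|_\infty^2/\pi_k$. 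Combining the two bounds gives $\Gamma(S,\pi)=\max(\Upsilon,\Theta)\le s\cdot\max_k\|a_k\|_\infty^2/\pi_k$, and feeding this into the simplified conclusion of Theorem~\ref{thm:recovery} (its normalization $\Gamma\ge1$ and the mild condition $s\ge16$ being in force) reproduces exactly condition \eqref{eq:bbbbbbb}.

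It remains to minimize the resulting bound over the drawing distribution, i.e.\ to minimize $\max_{1\le k\le n}\|a_k\|_\infty^2/\pi_k$ subject to $\pi_k\ge0$ and $\sum_k\pi_k=1$. The plan is the standard equalization argument: for any such $\pi$, since a maximum dominates any convex combination,
\[
\max_{1\le k\le n}\frac{\|a_k\|_\infty^2}{\pi_k}\ \ge\ \sum_{k=1}^n\pi_k\,\frac{\|a_k\|_\infty^2}{\pi_k}\ =\ \sum_{k=1}^n\|a_k\|_\infty^2 ,
\]
with equality precisely when $\|a_k\|_\infty^2/\pi_k$ is independent of $k$, that is $\pi_k=\|a_k\|_\infty^2/\sum_{\ell=1}^n\|a_\ell\|_\infty^2$. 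Substituting this optimal $\pi$ back into \eqref{eq:bbbbbbb} gives the announced rate $m\ge C\,s\bigl(\sum_{k=1}^n\|a_k\|_\infty^2\bigr)\ln(s)\ln(n/\varepsilon)$. The only genuinely delicate step is the Parseval collapse used to bound $\Upsilon$; everything else is a direct specialization of Definition~\ref{def:quantities} together with a one-line convexity argument.
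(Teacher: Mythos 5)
Your proposal is correct and follows essentially the same route as the paper's own proof: the same rank-one computation giving $\Theta\le s\cdot\max_k\|a_k\|_\infty^2/\pi_k$, and the same treatment of $\Upsilon$ by pulling out $\max_\ell\|a_\ell\|_\infty^2/\pi_\ell$ and collapsing $\sum_k|a_{k,S}^*v|^2=\|P_Sv\|_2^2\le s$ via orthogonality of $A_0$, before invoking Theorem~\ref{thm:recovery}. Your only addition is the explicit convex-combination argument justifying the optimality of $\pi_k=\|a_k\|_\infty^2/\sum_\ell\|a_\ell\|_\infty^2$, which the paper asserts without proof.
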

The proof is given in Appendix \ref{app:proofIsolatedCandes}.

Note that Corollary \ref{corol:isolatedUniform} is identical to Theorem 1.1 in \cite{candes2011probabilistic} up to a logarithmic factor. 
This result is usually used to explain the practical success of variable density sampling. 
It is the core of papers such as \cite{puy2011variable,krahmer2013stable,chauffert2014}.


\subsection{Isolated measurements with structured sparsity}

When using coherent transforms, meaning that the term $\max_{1\leq k \leq n} \frac{ \|a_k \|_\infty^2}{\pi_k}$ in Equation \eqref{eq:bbbbbbb} is an increasing function of $n$, Corollary \ref{corol:isolatedUniform} is unsufficient to justify the use of CS in applications. 
In this section, we show that the proposed results allow justifying the use of CS even in the extreme case where the sensing is performed with the canonical basis.

 \subsubsection{A toy example: sampling isolated measurements from the Identity matrix and knowing the support $S$}
 
Suppose that the signal $x$ to reconstruct is $S$-sparse where $S\subseteq \{1, \hdots , n \}$ is a fixed subset. Consider the highly coherent case where $A_0=\Id$. All current CS theories would give the same unsatisfactory conclusion: it is not possible to use CS since $A_0$ is a perfectly coherent transform. 
Indeed, the bound on the required number of isolated measurements given by standard CS theories \cite{candes2011probabilistic} reads as follows
$$ m \geq C\cdot  s \cdot\max_{1\leq k \leq n} \frac{\| e_k^* \|_\infty^2}{\pi_k} \ln\left(n  / \varepsilon\right) = C \cdot s \cdot \max_{1\leq k \leq n} \frac{1}{\pi_k} \ln\left(n / \varepsilon\right) .
$$
Without any assumption on the support $S$, one can choose to draw the measurements uniformly at random, i.e. $\pi_k = 1/n$ for $1\leq k \leq n$. This particular choice leads to a required number of measurements of the order
$$ m \geq C \cdot  s \cdot n \ln\left(n  / \varepsilon\right),
$$
which corresponds to fully sampling the acquisition space several times. 

Let us now see what conclusion can be drawn with Theorem \ref{thm:recovery}.
\begin{corollary}
\label{corol:isolatedIdentity}
Let $S \subseteq \{ 1 ,\hdots , n \}$ of cardinality $s$.  Suppose that $x \in \Cbb^{n}$ is an $S$-sparse vector.  
Fix $\varepsilon \in (0,1)$. 
Suppose that the sampling matrix $A$ is constructed as in \eqref{eq:sensingMatrixIsolated} with $A_0=\Id$.  
Set $\pi_k = \frac{\delta_{k,S}}{s}$ for $1\leq k \leq n$ where $\delta_{k,S} =1$ if $k\in S$, $0$ otherwise. 
Suppose that
\begin{eqnarray*}
m &\geq & C\cdot s \cdot \ln(s) \ln \left( \frac{n }{\varepsilon}\right).
\end{eqnarray*}
then $x$ is the unique solution of \eqref{pb:minL1}  with probability at least $1-\varepsilon$. 
\end{corollary}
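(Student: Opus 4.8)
The plan is to apply Theorem \ref{thm:recovery} directly, so the whole task reduces to evaluating $\Gamma(S,\pi) = \max\bigl(\Upsilon(S,\pi),\Theta(S,\pi)\bigr)$ for the specific choice $A_0 = \Id$, singleton blocks $B_k = a_k^* = e_k^*$ (so $M=n$), and $\pi_k = \delta_{k,S}/s$. First I would record the elementary identity that drives every computation: since $B_k^* = e_k$ and $B_{k,S} = e_k^* P_S$, one has
\begin{equation*}
B_k^* B_{k,S} = e_k\,(P_S e_k)^* = \delta_{k,S}\, e_k e_k^*,
\end{equation*}
so a block contributes nothing unless $k\in S$, in which case it is the rank-one projector $e_k e_k^*$.

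Next I would compute $\Theta$ from \eqref{ineq:Theta}. Since $e_i^* e_k e_k^* = \delta_{ik} e_k^*$, we get $\|e_k e_k^*\|_{\infty\rightarrow\infty} = \max_i \|e_i^* e_k e_k^*\|_1 = 1$, so only the indices $k\in S$ matter and, for those, $\pi_k = 1/s$ yields $\Theta(S,\pi) = \max_{k\in S}\frac{1}{\pi_k}\cdot 1 = s$. For $\Upsilon$ I would plug the same identity into \eqref{ineq:Upsi}: since $e_i^* B_k^* B_{k,S} v = \delta_{k,S}\,\delta_{ik}\,v_i$, the inner sum over $k$ is nonzero only when $i\in S$, where it equals $\frac{1}{\pi_i}|v_i|^2 = s|v_i|^2 \le s$; taking the supremum over $\|v\|_\infty\le 1$ and then the maximum over $i$ gives $\Upsilon(S,\pi) = s$. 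Hence $\Gamma(S,\pi) = s \ge 1$, and substituting into \eqref{eq:mainthm} while absorbing the numerical and $\ln\ln$ terms into a universal constant $C$ reproduces exactly $m \ge C\, s\,\ln(s)\,\ln(n/\varepsilon)$, which is the claimed bound.

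The step that needs genuine care, and where I expect the only real obstacle, is the \emph{degeneracy} of the drawing distribution, namely $\pi_k = 0$ for $k\notin S$. This creates two issues. The minor one is that Definition \ref{def:quantities} involves factors $1/\pi_k$; these are harmless here only because the numerators $\|B_k^* B_{k,S}\|_{\infty\rightarrow\infty}$ and $e_i^* B_k^* B_{k,S} v$ vanish identically for $k\notin S$, so the offending terms have the form $0/0$ and must be read as $0$ (equivalently, the maxima and sums restrict to $\supp(\pi) = S$). The more serious one is that the isotropy identity \eqref{isotropyCondition} no longer reads $\Ebb(B_K^* B_K/\pi_K) = \Id$ but only $= P_S$, since blocks outside $S$ are never selected. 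I would resolve this by noting that every row of $A$ is a scaled $e_k^*$ with $k\in S$, so any vector $A^* w$ is supported on $S$; consequently the off-support part of a dual certificate associated with Theorem \ref{thm:recovery} is automatically zero and its off-support requirements are met trivially, while isotropy is needed only on $\mathrm{span}\{e_k : k\in S\}$, where $\sum_{k\in S} e_k e_k^* = P_S$ acts as the identity. The certificate construction therefore goes through once restricted to the support, and $\Gamma(S,\pi) = s$ is the quantity that governs the sample complexity.
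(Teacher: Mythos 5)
Your proposal is correct and follows essentially the same route as the paper's own proof: a direct evaluation showing $\Theta(S,\pi)=\Upsilon(S,\pi)=s$, hence $\Gamma(S,\pi)=s$, followed by an application of Theorem \ref{thm:recovery}. Your extra discussion of the degenerate weights ($\pi_k=0$ for $k\notin S$, so the $0/0$ terms must be read as $0$ and \eqref{isotropyCondition} only gives $P_S$ rather than $\Id$) addresses a subtlety the paper's proof passes over silently, and your resolution is sound, since $B_{k,S}=0$ for $k\notin S$ preserves all the mean-zero identities used in the certificate construction on the support, while the off-support conditions hold trivially because every row of $A$ is a scaled $e_k^*$ with $k\in S$.
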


With this new result, $O(s \ln(s) \ln(n))$ measurements are sufficient to reconstruct the signal via a totally coherent. 
The least amount of measurements necessary to recover $x$ is of order $O(s \ln(s))$, by an argument of coupon collector effect \cite[p.262]{feller2008introduction}.
Therefore, Corollary \ref{corol:isolatedIdentity} is near-optimal up to logarithmic factors.

\begin{proof}
The result ensues from a direct evaluation of $\Gamma$. 
Indeed,
\begin{align*}
\|e_k e_{k,S}^* \|_{\infty \rightarrow \infty} = \max_{1\leq i \leq n} \sup_{\| v\|_\infty \leq 1} | \left\langle e_i , e_k e_{k,S}^* v \right\rangle |  = \sup_{\| v\|_\infty \leq 1} |e_{k,S}^* v| = \delta_{k,S},
\end{align*}
where $\delta_{k,S}=1$ if $k\in S$, $0$ otherwise. 
Therefore
$$ 
\Theta = \max_{1\leq k \leq n} \frac{\delta_{k,S}}{\pi_k}.
$$
Then, we can write that
\begin{align*}
\Upsilon(S,\pi) &= \max_{1\leq i \leq n}\sup_{\|v\|_\infty\leq 1} \sum_{k=1}^n \frac{1}{\pi_k} | e_i^* e_k e_{k,S}^* v |^2 = \max_{1\leq i \leq n} \sup_{\|v\|_\infty\leq 1} \frac{|e_{i,S}^* v |^2 }{\pi_i}\\
&= \max_{1\leq i \leq n} \frac{\delta_{i,S}}{\pi_i}.
\end{align*}
To conclude the proof it suffices to apply Theorem \ref{thm:recovery}.
\end{proof}

\subsubsection{Isolated measurements when the degree of sparsity is structured by levels}
 
In this part, we consider a partition of $\{1 , \hdots ,n \}$ into levels 
$\left( \Omega_i \right)_{i=1,\hdots, N} \subset \{1 , \hdots ,n \} $ such that 
$$ 
\mathop{\bigsqcup}_{1\leq i\leq N} \Omega_i = \{1 , \hdots ,n \} \qquad \text{and} \qquad | \Omega_i | = N_i.
$$
We consider that $x$ is $S$-sparse with $|S\cap \Omega_i | = s_i$ for $1\leq i \leq N$ meaning that restricted to the level $\Omega_i$, the signal $P_{\Omega_i}x$ is $s_i$-sparse.
This setting is studied extensively in the recent papers \cite{adcock2013breaking,roman2014asymptotic,bastounis2014absence}. 
Theorem \ref{thm:recovery} provides the following guarantees.

\begin{corollary}
\label{corol:isolatedBlocksSparsity}
Let $S \subset \{ 1 ,\hdots , n \}$ be a set of indices of cardinality $s$, such that $|S\cap \Omega_i | = s_i$ for $1\leq i \leq N$.  
Suppose that $x \in \Cbb^{n}$ is an $S$-sparse vector.  
Fix $\varepsilon \in (0,1)$. 
Suppose that the sampling matrix $A$ is constructed as in \eqref{eq:sensingMatrixIsolated}.  
Set 
\begin{eqnarray}
m &\geq & C \left(\max_{1\leq k \leq n} \frac{\sum_{\ell=1}^N s_\ell \|a_{k,\Omega_\ell} \|_\infty \|a_k \|_\infty}{\pi_k}  \right)\ln(s) \ln \left( \frac{n }{\varepsilon}\right), \label{eq:bbnn} \\
m &\geq & C \left(\max_{1\leq i \leq n}\sup_{\|v\|_\infty \leq 1 }\sum_{k=1}^n   \frac{1}{\pi_k}  \left| e_i^*  a_{k}\right|^2 \left| a_{k,S}^* v  \right|^2\right) \ln(s) \ln \left( \frac{ n }{\varepsilon}\right), \label{eq:bbmm}
\end{eqnarray}
then $x$ is the unique solution of \eqref{pb:minL1}  with probability at least $1-\varepsilon$. 
\end{corollary}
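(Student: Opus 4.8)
The plan is to apply Theorem~\ref{thm:recovery} directly, so the entire task reduces to evaluating the quantity $\Gamma(S,\pi) = \max\bigl(\Theta(S,\pi),\Upsilon(S,\pi)\bigr)$ in the present setting of isolated measurements ($B_k = a_k^*$, $M=n$) under the level-structured sparsity assumption. Once I show that the two hypotheses \eqref{eq:bbnn} and \eqref{eq:bbmm} respectively dominate $C\,\Theta(S,\pi)\,\ln(s)\ln(n/\varepsilon)$ and $C\,\Upsilon(S,\pi)\,\ln(s)\ln(n/\varepsilon)$, the conclusion follows immediately from the simplified form of Theorem~\ref{thm:recovery} (the conditions $s\geq 16$ and $\Gamma\geq 1$ being absorbed into the universal constant).

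First I would specialize $\Theta$. Since $B_k = a_k^*$ is a single row, $B_k^* B_{k,S} = a_k\,a_{k,S}^*$ is a rank-one matrix, and using the identity $\|M\|_{\infty\rightarrow\infty} = \max_i \|e_i^* M\|_1$ I would compute $\|a_k a_{k,S}^*\|_{\infty\rightarrow\infty} = \|a_k\|_\infty \|a_{k,S}\|_1$, exactly as already recorded after Definition~\ref{def:quantities}. Hence $\Theta(S,\pi) = \max_k \|a_k\|_\infty\|a_{k,S}\|_1/\pi_k$. The one genuinely new ingredient is to exploit the level structure: writing $S = \bigsqcup_\ell (S\cap\Omega_\ell)$ and bounding each piece by $\|a_{k,S\cap\Omega_\ell}\|_1 \leq s_\ell\,\|a_{k,\Omega_\ell}\|_\infty$ (since $|S\cap\Omega_\ell| = s_\ell$ and each entry is at most $\|a_{k,\Omega_\ell}\|_\infty$), I obtain $\|a_{k,S}\|_1 \leq \sum_\ell s_\ell \|a_{k,\Omega_\ell}\|_\infty$, which yields precisely the upper bound appearing in \eqref{eq:bbnn}.

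Next I would handle $\Upsilon$, which here can be computed exactly rather than merely bounded. Again using $B_k^* B_{k,S} = a_k a_{k,S}^*$, the scalar $e_i^* a_k a_{k,S}^* v$ factors as $(e_i^* a_k)(a_{k,S}^* v)$, so its squared modulus equals $|e_i^* a_k|^2\,|a_{k,S}^* v|^2$. Substituting into the definition~\eqref{ineq:Upsi} gives $\Upsilon(S,\pi) = \max_i \sup_{\|v\|_\infty\leq 1}\sum_k \frac{1}{\pi_k}|e_i^* a_k|^2 |a_{k,S}^* v|^2$, which is exactly the quantity controlled by hypothesis~\eqref{eq:bbmm}. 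Combining the two estimates, any $m$ satisfying both \eqref{eq:bbnn} and \eqref{eq:bbmm} satisfies $m \geq C\,\Gamma(S,\pi)\ln(s)\ln(n/\varepsilon)$, and Theorem~\ref{thm:recovery} delivers exact recovery with probability at least $1-\varepsilon$.

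I do not expect a serious obstacle here: the corollary is essentially a transcription of the definitions of $\Theta$ and $\Upsilon$ into the isolated-measurement case, with the main theorem doing all the probabilistic work. The one point requiring care is the level-wise estimate of $\|a_{k,S}\|_1$, where one must preserve the coupling between the row $a_k$ and each level $\Omega_\ell$, keeping $\|a_{k,\Omega_\ell}\|_\infty$ rather than the cruder global $\|a_k\|_\infty$; this is what later makes a variable-density sampling adapted to the sparsity levels effective. Correspondingly, for $\Upsilon$ one must retain the order max--sup--sum, as emphasized after Definition~\ref{def:quantities}, rather than prematurely pulling the supremum inside the sum.
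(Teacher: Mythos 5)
Your proposal is correct and follows essentially the same route as the paper's own proof: specialize $\Theta$ via the rank-one identity $\|a_k a_{k,S}^*\|_{\infty\rightarrow\infty}=\|a_k\|_\infty\|a_{k,S}\|_1$, bound $\|a_{k,S}\|_1\leq\sum_{\ell}s_\ell\|a_{k,\Omega_\ell}\|_\infty$ level by level, identify $\Upsilon$ exactly with the max--sup--sum expression in \eqref{eq:bbmm}, and invoke Theorem~\ref{thm:recovery}. No gaps.
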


The proof of Corollary \ref{corol:isolatedBlocksSparsity} is given in Appendix \ref{app:proofIsolatedBlocksSparsity}. We show in Appendix \ref{app:comparisonAdcock} that a simple analysis leads to results that are nearly equivalent to those in \cite{adcock2013breaking}. 
It should be noted that the term $\frac{\|a_{k,\Omega_\ell} \|_\infty \|a_k \|_\infty}{\pi_k} $ is related to the notion of local coherence defined in \cite{adcock2013breaking}.
There are however a few differences making our approach potentially more interesting in the case of isolated measurements:
\begin{itemize}
 \item Our paper is based on i.i.d.\ sampling with an arbitrary drawing distribution. 
This leaves a lot of freedom for generating sampling patterns and optimizing the probability $\pi$ in order to minimize the upper-bounds \eqref{eq:bbnn} and \eqref{eq:bbmm}. 
In contrast, the results in \cite{adcock2013breaking} are based on uniform Bernoulli sampling over \emph{fixed} levels. The dependency on the levels is not explicit and it therefore seems complicated to optimize them.
 \item We can deal with a fixed support $S$, which enlarges the possibilities for structured sparsity. It is also possible to consider random supports as explained in Proposition \ref{prop:stochasticModelS}.
\end{itemize}

  \subsubsection{Isolated measurements for the Fourier-Haar transform}
 \label{subsec:isoMRI}

The bounds in Corollary \ref{corol:isolatedBlocksSparsity} are rather cryptic. 
They have to be analyzed separately for each sampling strategy.
To conclude the discussion on isolated measurements, we provide a practical example with the 1D Fourier-Haar system. 

We set $A_0 = \Fc \phi^*$, where $\Fc \in \Cbb^{n\times n}$ is the 1D Fourier transform and $\phi^* \in \Cbb^{n\times n}$ is the 1D inverse wavelet transform. 
To simplify the notation, we assume that $n=2^J$ and we decompose the signal at the maximum level $J = \log_2 (n) -1$.
In order to state our result, we introduce a dyadic partition $\left( \Omega_j \right)_{0\leq j \leq J}$ of the set $\{ 1 , \hdots , n \}$. We set $\Omega_0 = \{1\}, \, \Omega_{1} = \{2\}, \, \Omega_{3} = \{ 3, 4 \}, \hdots , \, \Omega_J =  \{ n/2+1 , \hdots ,  n\}$.
We also define the function $j : \{ 1 , \hdots , n \} \rightarrow \{ 0 , \hdots , J\}$ by $j(u) = j$ if $u\in \Omega_{j}$.
\begin{corollary}
\label{corol:isolatedMRI}
Let $S \subset \{ 1 ,\hdots , n \}$ be a set of indices of cardinality $s$, such that $|S\cap \Omega_j | = s_j$ for $0\leq j \leq J$.  Suppose that $x \in \Cbb^{n}$ is an $s$-sparse vector supported on $S$.  Fix $\varepsilon \in (0,1)$. Suppose that $A$ is constructed from the Fourier-Haar transform $A_0$.  
Choose $\pi_k$ to be constant by level, i.e. $\pi_k = \tilde{\pi}_{j(k)}$.
If
\begin{eqnarray}\label{eq:bboo}
m &\geq & C \cdot \max_{0\leq j \leq J} \frac{1}{\tilde{\pi}_j} 2^{-j} \sum_{p=0}^J 2^{-|j-p|/2} s_p \cdot \ln(s) \ln \left( \frac{ n }{\varepsilon}\right),
\end{eqnarray}
then $x$ is the unique solution of \eqref{pb:minL1}  with probability at least $1-\varepsilon$. 

In particular, the distribution minimizing \eqref{eq:bboo} is
$$\tilde{\pi}_j = \frac{2^{-j} \sum_{p=0}^J 2^{-|j-p|/2} s_p}{\sum_{\ell=1}^n 2^{-j(\ell)} \sum_{p=0}^J 2^{-|j(\ell)-p|/2} s_p},$$ 
which leads to 
\begin{eqnarray}\label{eq:bbbooo}
m &\geq & C \cdot \sum_{j=0}^J \left( s_j +  \sum_{p=0 \atop p\neq j}^J 2^{-|j-p|/2} s_p \right) \cdot  \ln(s) \ln \left( \frac{ n }{\varepsilon}\right).
\end{eqnarray}
\end{corollary}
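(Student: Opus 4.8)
The plan is to deduce Corollary~\ref{corol:isolatedMRI} from Corollary~\ref{corol:isolatedBlocksSparsity}, of which it is the specialization to the Fourier--Haar system: it suffices to bound the quantity in \eqref{eq:bbnn} (namely $\Theta(S,\pi)$) and the quantity in \eqref{eq:bbmm} (namely $\Upsilon(S,\pi)$) by the right-hand side of \eqref{eq:bboo}, and then to invoke Theorem~\ref{thm:recovery}. The only analytic input needed is a pointwise bound on the entries of $A_0=\Fc\phi^*$. Denoting by $j(k)$ the frequency level of the row $a_k^*$ and by $j(m)$ the scale of the Haar atom indexed by $m$, I would first establish, via a Dirichlet-kernel estimate of the Fourier transform of a Haar wavelet (equivalently, the local-coherence bounds of \cite{adcock2013breaking}), that $|(A_0)_{k,m}|\lesssim 2^{-j(k)/2}2^{-|j(k)-j(m)|/2}$. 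The decisive structural remark is that, since the Haar atoms of a given scale are translates of one another and translation acts by a unimodular factor in Fourier, $|(A_0)_{k,m}|$ depends only on the pair of levels $(j(k),j(m))$, not on the translation. Consequently $\|a_k\|_\infty\asymp 2^{-j(k)/2}$, the maximum being reached at the matched scale $j(m)=j(k)$, and $\|a_{k,\Omega_\ell}\|_\infty\lesssim 2^{-j(k)/2}2^{-|j(k)-\ell|/2}$.

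Bounding $\Theta$ is then immediate: plugging these two estimates into \eqref{eq:bbnn} and using $\pi_k=\tilde\pi_{j(k)}$ gives $\sum_{\ell} s_\ell\|a_{k,\Omega_\ell}\|_\infty\|a_k\|_\infty\lesssim 2^{-j(k)}\sum_\ell s_\ell 2^{-|j(k)-\ell|/2}$, whence $\Theta(S,\pi)\lesssim\max_{0\le j\le J}\frac{1}{\tilde\pi_j}2^{-j}\sum_{p} 2^{-|j-p|/2}s_p$, which is exactly the bound in \eqref{eq:bboo}.

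The substantial part is $\Upsilon$. A naive bound $|a_{k,S}^*v|\le\|a_{k,S}\|_1$ in \eqref{eq:bbmm} loses a factor of order $\sum_p 2^{-|j-p|/2}s_p$ (up to $s$) and does \emph{not} reproduce \eqref{eq:bboo}; this is the manifestation of the inter-block interference flagged after Definition~\ref{def:quantities}. Instead, for fixed $i$ I would read $\sup_{\|v\|_\infty\le1}\sum_k\frac{1}{\pi_k}|e_i^*a_k|^2|a_{k,S}^*v|^2$ as $\sup_{\|v\|_\infty\le1}v^*G^{(i)}v$ for the positive semidefinite Gram matrix $G^{(i)}=\sum_k\frac{1}{\pi_k}|e_i^*a_k|^2\,a_{k,S}a_{k,S}^*$ supported on $S\times S$, and split $v^*G^{(i)}v\le\sum_{m\in S}G^{(i)}_{m,m}+\sum_{m\ne m'}|G^{(i)}_{m,m'}|$. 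The diagonal part equals $\sum_{m\in S}\sum_k\frac{1}{\pi_k}|(A_0)_{k,i}|^2|(A_0)_{k,m}|^2$; summing over $k$ at fixed level and using $\sum_{k:j(k)=j}|(A_0)_{k,i}|^2\asymp 2^{-|j-j(i)|}$ together with $2^{-|j-\ell|}\le 2^{-|j-\ell|/2}$ shows it is $\lesssim\max_j\frac{1}{\tilde\pi_j}2^{-j}\sum_p 2^{-|j-p|/2}s_p$, i.e.\ the same bound as $\Theta$. The heart of the proof---and the step I expect to be the main obstacle---is controlling the off-diagonal mass $\sum_{m\ne m'}|G^{(i)}_{m,m'}|$: each entry $G^{(i)}_{m,m'}=\sum_k\frac{1}{\pi_k}|(A_0)_{k,i}|^2\overline{(A_0)_{k,m}}(A_0)_{k,m'}$ is an oscillatory sum in $k$, and one must exploit the cancellation coming from the distinct Fourier phases of two different Haar atoms (whose scales and/or translations differ) to prove that the total off-diagonal contribution is again $\lesssim\max_j\frac{1}{\tilde\pi_j}2^{-j}\sum_p 2^{-|j-p|/2}s_p$. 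This is precisely where the support profile $(s_p)$ and the geometric decay across scales must be used quantitatively, rather than through the triangle inequality.

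Granting these two estimates, $\Gamma(S,\pi)=\max(\Theta,\Upsilon)\lesssim\max_{0\le j\le J}\frac{1}{\tilde\pi_j}2^{-j}\sum_p 2^{-|j-p|/2}s_p$, and Theorem~\ref{thm:recovery} yields \eqref{eq:bboo}. Finally, to obtain the optimal distribution and \eqref{eq:bbbooo} I would minimize $\max_j \tilde\pi_j^{-1}c_j$, with $c_j=2^{-j}\sum_p 2^{-|j-p|/2}s_p$, over probability vectors, i.e.\ subject to $\sum_k\pi_k=\sum_j|\Omega_j|\tilde\pi_j=1$. By the usual equalization (water-filling) argument the optimum makes $c_j/\tilde\pi_j$ constant, giving $\tilde\pi_j\propto c_j$, that is the stated distribution; the optimal value is $\sum_{\ell=1}^n c_{j(\ell)}=\sum_j|\Omega_j|2^{-j}\sum_p 2^{-|j-p|/2}s_p$, and since $|\Omega_j|\asymp 2^j$ this is of order $\sum_j\sum_p 2^{-|j-p|/2}s_p=\sum_j\bigl(s_j+\sum_{p\ne j}2^{-|j-p|/2}s_p\bigr)$, which is \eqref{eq:bbbooo}.
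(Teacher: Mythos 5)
Your treatment of $\Theta$ and your final water-filling argument for the optimal $\tilde\pi$ both match the paper and are fine, but your bound on $\Upsilon$ has a genuine gap, and it sits exactly where you flag it. After writing $\sup_{\|v\|_\infty\le1}\sum_k\frac{1}{\pi_k}|e_i^*a_k|^2|a_{k,S}^*v|^2$ as a quadratic form in the Gram matrix $G^{(i)}$ and splitting it into diagonal plus off-diagonal parts, you never control $\sum_{m\ne m'}|G^{(i)}_{m,m'}|$; you only assert that one "must exploit the cancellation coming from the distinct Fourier phases." This is not a routine missing detail: putting absolute values on the individual Gram entries is precisely the step that forfeits the cancellation you need. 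The naive entrywise estimate $|G^{(i)}_{m,m'}|\lesssim\sum_j\frac{2^{-j}}{\tilde\pi_j}2^{-|j-j(i)|}2^{-j}2^{-|j-j(m)|/2}2^{-|j-j(m')|/2}\cdot|\Omega_j|$ yields, after summing over pairs in $S$, an extra factor of order $\sum_p 2^{-|j-p|/2}s_p$ (which can be as large as $s$) beyond the target bound, and it is far from clear that a per-entry oscillatory-sum analysis can recover this loss; your proposal therefore remains incomplete at its central step.

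The paper's proof (Lemma \ref{lem:upsilonIsoMRI}) avoids entrywise analysis altogether, and this is the idea you are missing. Bounding $|e_i^*a_k|^2\lesssim 2^{-j(k)}2^{-|j(k)-\ell|}$ (Lemma \ref{lem:adcock1}) and grouping the sum over $k$ by frequency levels, the key quantity becomes $K_j=\sum_{k\in\Omega_j}|a_{k,S}^*\tilde v|^2=\|P_{\Omega_j}A_0P_S\tilde v\|_2^2$, i.e.\ an $\ell_2$ norm rather than a sum of Gram entries. One then decomposes $P_S\tilde v=\sum_p P_{\Omega_p}P_S\tilde v$, applies the triangle inequality, and invokes the level-block spectral estimate $\|P_{\Omega_j}A_0P_{\Omega_p}\|_{2\rightarrow2}\lesssim 2^{-|j-p|/2}$ (Lemma 4.3 of \cite{adcock2014note}), giving $\sqrt{K_j}\lesssim\sum_p 2^{-|j-p|/2}\sqrt{s_p}$, and Cauchy--Schwarz then yields $K_j\lesssim\sum_p 2^{-|j-p|/2}s_p$. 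The cancellation you were looking for lives entirely in this operator-norm bound (quasi-orthogonality between frequency and wavelet levels), not in individual entries of $G^{(i)}$. If you replace your diagonal/off-diagonal splitting by this $\ell_2$-level decomposition, the rest of your argument (the $\Theta$ estimate, matching the two bounds, and the optimization over $\tilde\pi_j$) goes through and completes the proof.
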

The proof is presented in Section \ref{app:proofIsolatedMRI}. This corollary is once again similar to the results in \cite{adcock2014quest}. 
The number of measurements in each level $j$ should depend on the degree of sparsity $s_j$ but also on the degree of sparsity of the other levels which is more and more attenuated when the level is far away from the $j$-th one.

\begin{remark}
The Fourier-Wavelet system is coherent and the initial compressed sensing theories cannot explain the success of sampling strategies with such a transform. 
To overcome the coherence, two strategies have been devised.
The first one is based on variable density sampling (see e.g. \cite{puy2012spread,chauffert2014,krahmer2013stable}).
The second one is based on variable density sampling and an additional structured sparsity assumption (see e.g. \cite{adcock2013breaking} and Corollary \ref{corol:isolatedMRI}). 
First, note that the results obtained with the latter approach allow recovering signal with arbitrary supports. Indeed, $\displaystyle\sum_{j=0}^J s_j +  \sum_{p=0 \atop p\neq j}^J 2^{-|j-p|/2} s_p  \leq 2 s$.

Second, it is not clear yet - from a theoretical point of view - that the structure assumption allows obtaining better guarantees.
Indeed, it is possible to show that the sole variable density sampling leads to perfect reconstruction from $m\propto s \ln(n)^2$ measurements, which is on par with bound \eqref{eq:bbbooo}. It will become clear that structured sparsity is essential when using the Fourier-Wavelet systems with structured acquisition. Morever, the numerical experiments led in \cite{adcock2014quest} let no doubt about the fact that structured sparsity is essential to ensure good reconstruction with a low number of measurements.
\end{remark}

\subsection{Structured acquisition and structured sparsity}
 \label{sec:MRI}

In this paragraph, we illustrate how Theorem \ref{thm:recovery} explains the practical success of structured acquisition in applications. We will mainly focus on the 2D setting: the vector $x\in \Cbb^n$ to reconstruct can be seen as an image of size $\sqrt{n}\times \sqrt{n}$.

\subsubsection{The limits of structured acquisition}
\label{subsec:limits} 
 
In \cite{bigot2014analysis,polak2012performance}, the authors provided theoretical CS results when using block-constrained acquisitions. 
Moreover, the results in \cite{bigot2014analysis} are proved to be tight in many practical situations.
Unfortunately, the bounds on the number of blocks of measurements necessary for perfect reconstruction are incompatible with a faster acquisition. 

To illustrate this fact, let us recall a typical result emanating from \cite{bigot2014analysis}. 
It shows that the recovery of sparse vectors with an arbitrary support is of little interest when sampling lines of tensor product transforms. This setting is widely used in imaging. It corresponds to the MRI sampling strategy proposed in \cite{lustig2007sparse}.
\begin{prop}[\cite{bigot2014analysis}]
\label{prop:tensorlimit}
Suppose that $A_0 = \phi \otimes \phi \in \Cbb^{n\times n}$ is a 2D separable transform, where $\phi \in \Cbb^{\sqrt{n}\times \sqrt{n}}$ is an orthogonal transform. 
Consider blocks of measurements made of $\sqrt{n}$ horizontal lines in the 2D acquisition space, i.e. for $1\leq k \leq \sqrt{n}$
$$
B_k = \left( \phi_{k,1} \phi ,  \hdots , \phi_{k,\sqrt{n} } \phi \right).
$$

If the number of acquired lines $m$ is less than $\min(2s,\sqrt{n})$, then there exists no decoder $\Delta$ such that $ \Delta(A x ) = x$ for all $s$-sparse vector $x\in \Cbb^n$.

In other words, the minimal number $m$ of distinct blocks required to identify every $s$-sparse vectors is necessarily larger than $\min(2s,\sqrt{n})$. 
\end{prop}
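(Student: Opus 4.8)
The plan is to reduce the non-existence of a decoder to a statement about the kernel of the sensing operator and then to construct an explicit sparse element of that kernel. I would start from the standard fact that a decoder $\Delta$ with $\Delta(Ax)=x$ for every $s$-sparse $x$ exists if and only if $A$ is injective on the set of $s$-sparse vectors, which in turn holds if and only if $\ker A$ contains no nonzero $2s$-sparse vector: indeed, if $h=u-v\in\ker A$ with $u,v$ distinct and $s$-sparse, then $Au=Av$ while $u\neq v$, so no map can return the correct preimage of both. Hence it suffices to show that whenever the number of drawn blocks satisfies $m<\min(2s,\sqrt n)$, there is a nonzero $2s$-sparse $x$ with $Ax=0$.

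First I would rewrite the measurement operator in image form. Identify $x\in\Cbb^n$ with an image $X\in\Cbb^{\sqrt n\times\sqrt n}$ via $x=\mathrm{vec}(X)$, and let $\Omega\subset\{1,\hdots,\sqrt n\}$ with $|\Omega|=m$ index the drawn lines. Using the definition $B_k=(\phi_{k,1}\phi,\hdots,\phi_{k,\sqrt n}\phi)$ together with the Kronecker/vec identity, the block $B_k x$ equals the line of $\phi X\phi^\top$ indexed by $k$. Thus $Ax=0$ amounts to the vanishing of the lines of $\phi X\phi^\top$ indexed by $\Omega$; since $\phi$ is orthogonal and hence invertible, the trailing factor $\phi^\top$ can be discarded, and after transposing the condition reduces to $P_\Omega\phi\,Y=0$ with $Y=X^\top$, where $P_\Omega$ keeps the rows indexed by $\Omega$ so that $P_\Omega\phi\in\Cbb^{m\times\sqrt n}$. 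Equivalently, every column of $Y$ must lie in $\ker(P_\Omega\phi)$.

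The key construction then exploits that $m<\sqrt n$ makes $P_\Omega\phi$ rank-deficient. More sharply, I would pick any $T\subset\{1,\hdots,\sqrt n\}$ with $|T|=m+1$, which is possible because $m+1\le\min(2s,\sqrt n)\le\sqrt n$. The submatrix $(P_\Omega\phi)_{\cdot,T}\in\Cbb^{m\times(m+1)}$ has more columns than rows, so it has a nonzero kernel vector, yielding $w\in\Cbb^{\sqrt n}$ supported on $T$ with $P_\Omega\phi\,w=0$ and $\|w\|_0\le m+1\le 2s$. Placing $w$ in a single column of $Y$ (zero elsewhere) produces a nonzero $2s$-sparse image whose corresponding $x$ lies in $\ker A$. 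The single bound $m+1\le\min(2s,\sqrt n)$ absorbs both regimes at once: the case $2s\le\sqrt n$ (too few blocks relative to sparsity) and the case $2s>\sqrt n$ (too few blocks relative to the line length), so no separate case analysis is really needed.

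Finally I would split $\mathrm{supp}(x)$, of size at most $2s$, into two sets of cardinality at most $s$ each, which is legitimate since $\lceil(m+1)/2\rceil\le s$, and write $x=u-v$ with $u,v$ the induced $s$-sparse restrictions; then $Au=Av$ with $u\neq v$, so no decoder can recover every $s$-sparse vector, giving the claim. The only genuinely delicate step is the bookkeeping in the image reformulation — verifying that a drawn line corresponds precisely to a line of $\phi X\phi^\top$ and that orthogonality of $\phi$ legitimately removes the trailing factor — after which the rank–nullity count and the splitting argument are elementary.
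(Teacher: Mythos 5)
Your proof is correct. Note that this paper does not actually contain a proof of Proposition~\ref{prop:tensorlimit}: the statement is imported from \cite{bigot2014analysis}, so there is no in-paper argument to compare against; your route --- reduce non-existence of a decoder to the existence of a nonzero $2s$-sparse element of $\ker A$, rewrite the block constraints as $P_\Omega \phi\, Y = 0$ for the (transposed) image $Y$ after discarding the invertible trailing factor $\phi^\top$, and then extract an $(m+1)$-sparse kernel vector of $P_\Omega\phi$ supported on a single line by rank--nullity --- is exactly the natural kernel/rank-deficiency argument that underlies the cited result. The two points that genuinely need care are both handled correctly: the identification of $B_k x$ with the line of $\phi X \phi^\top$ indexed by $k$ (only the set of distinct drawn lines matters, since the $1/\sqrt{m\pi_k}$ normalizations do not affect the kernel), and the integer bookkeeping $m+1 \le \min(2s,\sqrt{n})$, which simultaneously guarantees that the set $T$ of size $m+1$ exists and that the resulting kernel vector splits into two $s$-sparse vectors $u \neq v$ with $Au = Av$.
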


This theoretical bound is quite surprising: it seems to enter in contradiction with the practical results obtained in Figure \ref{fig:reconstructionLines} or with one of the most standard CS strategy in MRI \cite{lustig2007sparse}. Indeed, the equivalent number of isolated measurements required by Proposition \ref{prop:tensorlimit} is of the order $O(s\sqrt{n})$. This theoretical result means that in many applications, a full sampling strategy should be adopted, when the acquisition is structured by horizontal lines.
In the next paragraphs, we show how Theorem \ref{thm:recovery} allows bridging the gap between theoretical recovery and practical experiments.

\subsubsection{Breaking the limits with adapted structured sparsity}

In this paragraph, we illustrate - through a simple example - that additional assumptions on structured sparsity is the key to explain practical results.

\begin{corollary}
\label{corol:Fourier2DLines}
Let $A_0\in \Cbb^{n\times n}$ be the 2D Fourier transform.
Assume that $x$ is a 2D signal with support $S$ concentrated on $q$ horizontal lines of the spatial plane, i.e.  
\begin{align}
\label{def:Slines}
S \subset \{ (j-1) \sqrt{n} + \{1, \hdots , \sqrt{n}\}, j\in J\}
\end{align} 
where $J\subset \{1, \hdots , \sqrt{n}\}$ and $|J|=q$. 

Choose a uniform sampling strategy among the $\sqrt{n}$ horizontal lines, i.e. $\pi_k^\star = 1/\sqrt{n}$ for $1\leq k \leq \sqrt{n}$. The number $m$ of sampled horizontal lines sufficient to reconstruct $x$ with probability $1-\varepsilon$ is
$$ 
m \geq C \cdot   q\cdot \ln(s) \ln \left( \frac{n }{\varepsilon}\right).
$$ 
\end{corollary}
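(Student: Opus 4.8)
The plan is to invoke Theorem \ref{thm:recovery}: since that result only requires $m \geq C\,\Gamma(S,\pi)\ln(s)\ln(n/\varepsilon)$, it suffices to show that for the uniform distribution $\pi_k^\star = 1/\sqrt{n}$ over the $\sqrt{n}$ horizontal Fourier lines one has $\Gamma(S,\pi^\star)\leq q$. Everything therefore reduces to evaluating the two quantities $\Theta$ and $\Upsilon$ of Definition \ref{def:quantities} in this specific geometry. First I would fix coordinates: index image positions by pairs $(t_1,t_2)\in\{1,\dots,\sqrt{n}\}^2$, with $t_1$ the row and $t_2$ the column, so that the support hypothesis \eqref{def:Slines} reads $S\subset\{(t_1,t_2):t_1\in J\}$, and index the Fourier samples by frequency pairs $(\omega_1,\omega_2)$, the block $B_k$ collecting the rows of $A_0$ with vertical frequency $\omega_1=k$. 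The one computation that drives the whole proof is $B_k^\ast B_k$: exploiting the tensor structure of the 2D Fourier transform, summing over the horizontal frequencies $\omega_2$ produces a Kronecker delta $\delta_{t_2,t_2'}$, so that $B_k^\ast B_k$ is diagonal in the column variable and acts on the row variable only through the phase $\tfrac{1}{\sqrt{n}}e^{2\pi i k(t_1-t_1')/\sqrt{n}}$.

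With this identity in hand the evaluation of $\Theta$ is routine. Multiplying by $P_S$ and taking the $\ell_1$ norm of a row of $B_k^\ast B_{k,S}$ counts, up to the factor $1/\sqrt{n}$, the number $c_{t_2}:=|\{t_1:(t_1,t_2)\in S\}|$ of support points lying in a given column; since $S$ meets only $q$ rows, $c_{t_2}\leq q$. After multiplying by $1/\pi_k^\star=\sqrt{n}$ this yields $\Theta(S,\pi^\star)=\max_{t_2} c_{t_2}\leq q$, independently of $k$.

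The delicate step, and the real obstacle, is $\Upsilon$, because the order (maximum over $i$, supremum over $\|v\|_\infty\leq1$, then sum over blocks) must be respected to avoid a crude overestimate. Here I would write $e_i^\ast B_k^\ast B_{k,S}v$ explicitly: for an output position $(t_1,t_2)$ it equals $\tfrac{1}{\sqrt{n}}e^{2\pi i k t_1/\sqrt{n}}\,\widehat{w}(k)$, where $w$ is the restriction of $v$ to the support points in column $t_2$ and $\widehat{w}$ is its length-$\sqrt{n}$ discrete Fourier transform in the row variable. The sum over the $\sqrt{n}$ blocks then collapses by Parseval's identity, giving $\sum_{k}\tfrac{1}{\pi_k^\star}|e_i^\ast B_k^\ast B_{k,S}v|^2=\sum_{t_1:(t_1,t_2)\in S}|v_{(t_1,t_2)}|^2\leq c_{t_2}$. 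Taking the supremum over $v$ (attained at $|v|\equiv1$ on the support) and the maximum over $i$ gives $\Upsilon(S,\pi^\star)=\max_{t_2}c_{t_2}\leq q$, which coincides exactly with $\Theta$, as announced in the remark following \eqref{eq:choicePi}.

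Combining the two bounds yields $\Gamma(S,\pi^\star)=\max_{t_2}c_{t_2}$, which satisfies $1\leq\Gamma\leq q$ (the lower bound is automatic since $S\neq\emptyset$, so the hypothesis $\Gamma\geq1$ of Theorem \ref{thm:recovery} holds). Plugging $\Gamma\leq q$ into Theorem \ref{thm:recovery} then shows that $m\geq C\,q\,\ln(s)\ln(n/\varepsilon)$ guarantees exact recovery of $x$ by \eqref{pb:minL1} with probability at least $1-\varepsilon$, which is the claim. I expect the Parseval collapse of the block sum in the $\Upsilon$ estimate to be the only genuinely non-mechanical point; the $\Theta$ bound and the final substitution are immediate once $B_k^\ast B_k$ has been computed.
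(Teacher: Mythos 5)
Your proof is correct and follows essentially the same route as the paper: compute $B_k^*B_k$ from the tensor structure (diagonal in the column variable, pure phases in the row variable), bound $\Theta$ by counting support entries per row of $B_k^*B_{k,S}$, and collapse the block sum in $\Upsilon$ via Parseval — which is exactly the paper's step $\|\phi P_J w\|_2 = \|P_J w\|_2$ using orthogonality of the 1D transform — before invoking Theorem \ref{thm:recovery}. The only (harmless) differences are that you work in Fourier coordinates from the start, which yields the slightly sharper exact value $\Gamma(S,\pi^\star)=\max_{t_2} c_{t_2}\leq q$ rather than the paper's direct bound by $q$, and that you explicitly check the hypothesis $\Gamma\geq 1$.
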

The proof is given in Appendix \ref{app:proofFourier2DLines}
By Proposition \ref{corol:Fourier2DLines}, we can observe that the required number of sampled lines is of the order of non-zero lines in the 2D signal.
In comparison, Proposition 4.6 in \cite{bigot2014analysis} (with no structured sparsity) requires
$$ 
m \gtrsim    s\cdot \ln(n/\varepsilon),
$$
measurements, to get the same guarantees. 
This means that the required number of horizontal lines to sample is of the order of the non-zero coefficients.
By putting aside the logarithmic factors, we see that the gain with our new approach is considerable. 
Clearly, our strategy is able to take advantage of the sparsity structure of the signal of interest.

\subsubsection{Consequences for MRI sampling}
 \label{subsec:linesMRI}

We now turn to a real MRI application. 
We assume that the sensing matrix $A_0 \in \Cbb^{n\times n}$ is the product of the 2D Fourier transform $\Fc_{2D}$ with the inverse 2D wavelet transform $\Phi^*$. 
We aim at reconstructing a vector $x\in \Cbb^n$ that can be seen as a 2D wavelet transform with $\sqrt{n}\times \sqrt{n}$ coefficients.
Set $J = \log_2\left(\sqrt{n}\right)-1$ and let $\left( \tau_j \right)_{0\leq j\leq J}$ denote a dyadic partition of the set $ \{1,\hdots , \sqrt{n} \}$, i.e. $\tau_0 = \{1 \}, \, \tau_1 = \{2\} , \, \tau_2 = \{ 3, 4 \} , \hdots , \,  \tau_J = \{\sqrt{n}/2+1,\hdots , \sqrt{n} \}$. 
Define $j : \{ 1 , \hdots , \sqrt{n} \} \rightarrow \{ 0 , \hdots , J\}$ by $j(u)=j$ if $u\in \tau_j$.
Finally, define the sets $\Omega_{\ell,\ell'} = \tau_\ell \times \tau_{\ell'}$, for $0\leq \ell, \ell' \leq J$. 
See Figure \ref{fig:illusEns} for an illustration of these sets.


 \begin{figure}
  \begin{center}
 \includegraphics[height=10cm]{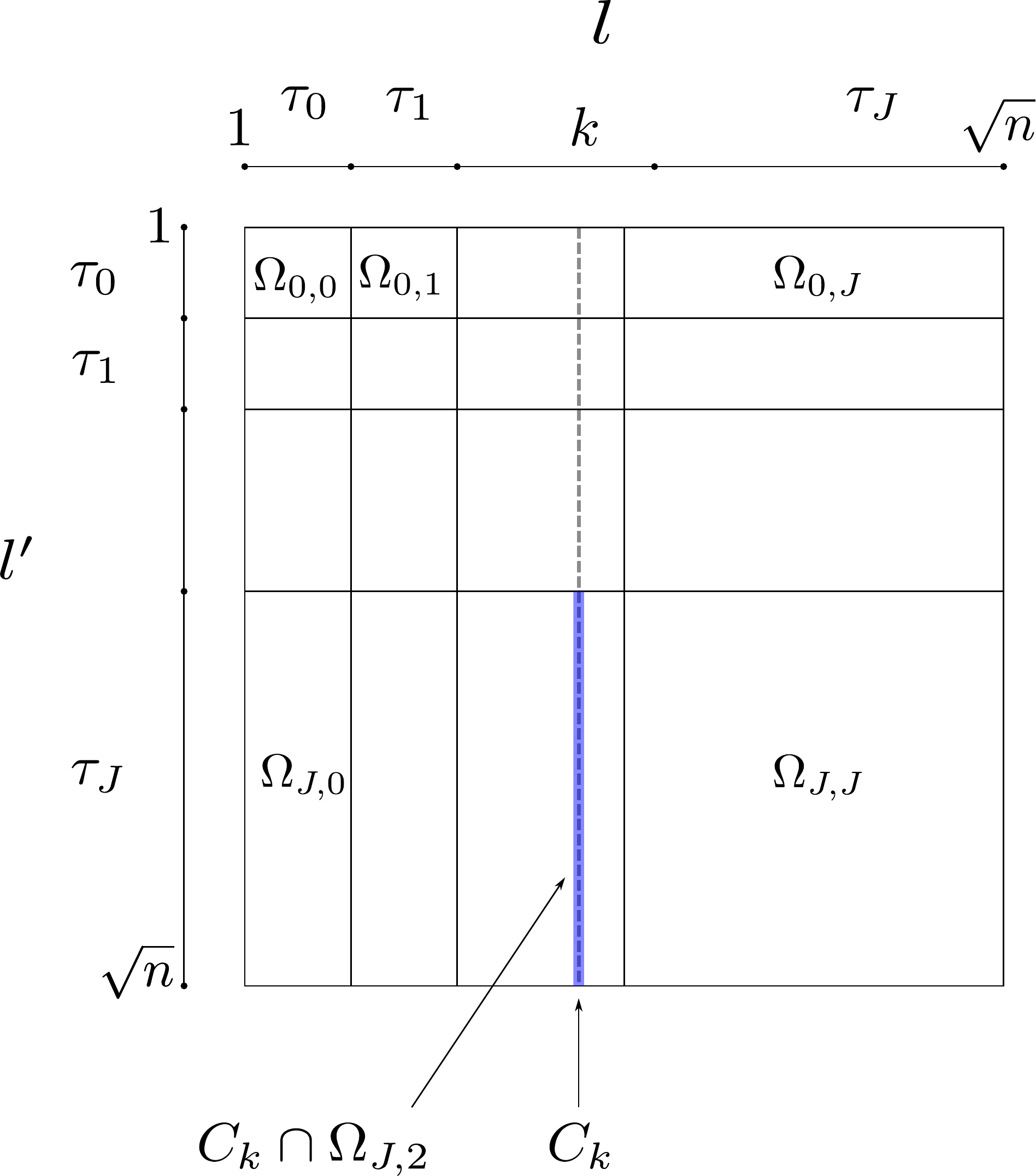}
 \caption{\label{fig:illusEns}2D view of the signal $x\in \Cbb^n$ to reconstruct. The vector $x$ can be reshaped in a $\sqrt{n}\times \sqrt{n}$ matrix. $C_k$ represents the coefficient indexes corresponding to the $k$-th vertical column. }
  \end{center}
 \end{figure}

\begin{defchapter}
Given $S = \text{supp}(x)$, define the following quantity
\begin{align}\label{eq:scol}
s^c_{\ell} := \max_{0\leq \ell'\leq J} \max_{ k \in \tau_{\ell'}  } \left| S \cap \Omega_{\ell,\ell'} \cap C_k\right| ,
\end{align}
where $C_k$ represents the set corresponding to the $k$-th vertical line (see Figure \ref{fig:illusEns}). 
\end{defchapter}
The quantity $s^c_{\ell}$ represents the maximal sparsity of $x$ restricted to columns (or vertical lines) of $\cup_{1\leq l'\leq J}\Omega_{\ell,\ell'}$. We have now settled everything to state our result.

 As a first step, we will consider the case of Shannon's wavelets, leading to a block-diagonal sampling matrix $A_0$.
 \begin{corollary}
 \label{corol:linesShannonMRI}
 Let $S \subset \{ 1 ,\hdots , n \}$ be a set of indices of cardinality $s$.  Suppose that $x \in \Cbb^{n}$ is an $s$-sparse vector supported on $S$.  Fix $\varepsilon \in (0,1)$. Suppose that $A_0$ is the product of the 2D Fourier transform with the 2D inverse Shannon's wavelets transform.   
 Consider that the blocks of measurements are the $\sqrt{n}$ horizontal lines in the 2D setting.
 Choose $\left( \pi_k \right)_{1\leq k \leq \sqrt{n}}$ to be constant by level, i.e. $\pi_k = \tilde{\pi}_{j(k)}$. If the number of horizontal lines to acquire satisfies
 \begin{eqnarray*}
 m &\gtrsim & \max_{0\leq j \leq J} \frac{1}{\tilde{\pi}_j} 2^{-j} s^c_{j} \ln(s) \ln \left( \frac{ n }{\varepsilon}\right),
 \end{eqnarray*}
 then $x$ is the unique solution of Problem \ref{pb:minL1}. Furthermore, choosing $\tilde{\pi}_j = \frac{s^c_{j} / 2^j}{ \sum_{\ell=0}^J s^c_{\ell}}$, for $0\leq j \leq J$, leads to the following upper bound
 \begin{eqnarray*}
 m &\gtrsim & \sum_{j=0}^J s^c_{j} \ln(s) \ln \left( \frac{ n }{\varepsilon}\right).
 \end{eqnarray*}
 \end{corollary}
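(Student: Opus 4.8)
The plan is to read the whole statement off Theorem \ref{thm:recovery} by evaluating $\Gamma(S,\pi)=\max\bigl(\Theta(S,\pi),\Upsilon(S,\pi)\bigr)$ for this particular blocks dictionary, and then to optimize over the level-constant profile $\tilde\pi$. The structural fact that makes Shannon's wavelets special is that their Fourier transforms have pairwise disjoint dyadic frequency supports: the 1D Fourier--Shannon matrix $W=\Fc\phi^*$ is therefore block diagonal with respect to the partition $(\tau_j)_{0\le j\le J}$, each diagonal block being (up to phases) a unitary matrix whose entries all have modulus $\approx 2^{-j/2}$. By tensorization $A_0=W\otimes W$ is block diagonal with respect to the 2D partition $(\Omega_{\ell,\ell'})$, so a horizontal line in Fourier space at vertical frequency $k$ only couples to wavelet coefficients whose vertical scale equals $j(k)$. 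This is exactly what collapses the cross-scale tail $2^{-|j-p|/2}$ that appears for generic wavelets in Corollary \ref{corol:isolatedMRI}.

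First I would compute $B_k^*B_k$ explicitly. Writing a wavelet index as a pair $(u_1,u_2)$ (horizontal, vertical) and using that the horizontal direction is fully sampled inside each line together with the unitarity of $W$, one gets $(B_k^*B_k)_{(u_1,u_2),(u_1',u_2')}=\delta_{u_1,u_1'}\,\overline{W_{k,u_2}}\,W_{k,u_2'}$, which is nonzero only when $j(u_2)=j(u_2')=j(k)$. Feeding this into \eqref{ineq:Theta} and using $|W_{k,\cdot}|\approx 2^{-j(k)/2}$ on its support reduces $\|B_k^*B_{k,S}\|_{\infty\to\infty}$ to $2^{-j(k)}$ times the largest number of support points lying in a single column within the vertical band $j(k)$; by \eqref{eq:scol} this largest count is $s^c_{j(k)}$. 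Grouping the blocks by level ($\pi_k=\tilde\pi_{j(k)}$) then yields $\Theta(S,\pi)\asymp\max_{0\le\ell\le J}2^{-\ell}s^c_\ell/\tilde\pi_\ell$.

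The delicate point, as always, is $\Upsilon(S,\pi)$, since it is the genuinely new quantity and involves a supremum over $\|v\|_\infty\le1$ and a sum over all blocks. Here block diagonality is decisive: for a fixed output index $i=(u_1,u_2)$ the factor $|W_{k,u_2}|^2$ forces only the single band $\ell=j(u_2)$ to contribute, so the inter-block sum in \eqref{ineq:Upsi} is confined to $k\in\tau_\ell$. On that band the map $v\mapsto\big(\sum_{u_2'}W_{k,u_2'}v_{(u_1,u_2')}\big)_{k\in\tau_\ell}$ is an isometry (Parseval for the unitary diagonal block), so after the constant factor $2^{-\ell}$ the in-band sum is bounded by the squared $\ell_2$-norm of $v$ restricted to the relevant column, hence by the same column count $s^c_\ell$ (using $\|v\|_\infty\le1$). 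This gives $\Upsilon(S,\pi)\lesssim\max_\ell 2^{-\ell}s^c_\ell/\tilde\pi_\ell$, matching the bound on $\Theta$, so that $\Gamma(S,\pi)\lesssim\max_\ell 2^{-\ell}s^c_\ell/\tilde\pi_\ell$. I expect verifying that this in-band sum is really controlled by Parseval — rather than by a crude triangle inequality that would reintroduce a spurious factor $|\tau_\ell|$ — to be the main obstacle, and it is precisely where the unitarity of each Shannon block is used.

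Plugging this into Theorem \ref{thm:recovery} gives the first displayed bound $m\gtrsim\max_\ell\tilde\pi_\ell^{-1}2^{-\ell}s^c_\ell\ln(s)\ln(n/\varepsilon)$. It remains to optimize the profile. Minimizing $\tilde\pi\mapsto\max_\ell 2^{-\ell}s^c_\ell/\tilde\pi_\ell$ under the normalization $\sum_\ell|\tau_\ell|\tilde\pi_\ell=1$ (equivalently $\sum_{k=1}^{\sqrt n}\pi_k=1$) is a standard equalization argument, exactly in the spirit of \eqref{eq:choicePi}: the optimum makes all terms equal, forcing $\tilde\pi_\ell\propto 2^{-\ell}s^c_\ell$, i.e.\ $\tilde\pi_j=\frac{s^c_j/2^j}{\sum_{\ell=0}^J s^c_\ell}$ up to the harmless normalizing constant coming from $|\tau_\ell|\approx 2^\ell$. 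With this choice every term equals (a constant times) $\sum_{\ell=0}^J s^c_\ell$, which delivers the announced $m\gtrsim\sum_{\ell=0}^J s^c_\ell\,\ln(s)\ln(n/\varepsilon)$.
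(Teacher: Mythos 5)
Your proposal follows essentially the same route as the paper's proof: exploit the block-diagonality of the Fourier--Shannon matrix to confine $B_k^*B_k$ to the band $\tau_{j(k)}$, bound $\Theta$ by the per-column support count $s^c_{j(k)}$, control $\Upsilon$ via Parseval on each unitary diagonal block (exactly the paper's step $\|P_{\tau_{j}}\phi\, w\|_2 = \|P_{\tau_{j}} w\|_2$, avoiding the spurious $|\tau_j|$ factor you rightly flag as the crux), and then equalize the level-constant weights $\tilde\pi_j \propto 2^{-j}s^c_j$ to get $\sum_j s^c_j$. The argument is correct and matches the paper's proof in all essential steps, including the identification of $\Gamma(S,\pi)\lesssim \max_j 2^{-j}s^c_j/\tilde\pi_j$ before invoking Theorem \ref{thm:recovery}.
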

 The proof is given in Section \ref{app:proofLinesShannonMRI}. 
 Corollary \ref{corol:linesShannonMRI} shows that the number of lines acquired at level $j$ depends only on an extra-column structure of $S$.
Now let us turn to a case where the matrix $A_0$ is not block-diagonal anymore.

\begin{corollary}
\label{corol:linesHaarMRI}
Suppose that $x \in \Cbb^{n}$ is an $S$-sparse vector.
Fix $\varepsilon \in (0,1)$. 
Suppose that $A_0$ is the product of the 2D Fourier transform with the 2D inverse Haar transform.   
Consider that the blocks of measurements are the $\sqrt{n}$ horizontal lines.
Choose $\left( \pi_k \right)_{1\leq k \leq \sqrt{n}}$ to be constant by level, i.e. $\pi_k = \tilde{\pi}_{j(k)}$. 

If the number $m$ of drawn horizontal lines satisfies
\begin{eqnarray*}
m &\gtrsim & \max_{0\leq j \leq J}\frac{2^{-j}}{\tilde{\pi}_j}  \sum_{r=0}^J 2^{-| j - r|/2} {s^c_{r}} \ln(s) \ln \left( \frac{ n }{\varepsilon}\right),
\end{eqnarray*}
then $x$ is the unique solution of Problem \ref{pb:minL1} with probablity $1-\varepsilon$. 

In particular, if
$$ {\pi}_k = \frac{2^{-j(k) \sum_{r=0}^J 2^{-|j-r|/2} s^c_{r}}}{ 
\sum_{\ell=1}^{\sqrt{n}}  
2^{-j(\ell)} \sum_{r=0}^J 
2^{-|j(\ell) -r|/2} s^c_{r}  },
$$
then 
\begin{eqnarray*}\label{eq:finalbound}
m &\gtrsim & \sum_{j=0}^J \left(  s^c_{j}  + \sum_{r=0 \atop r\neq j}^J 
2^{-|j -r|/2} s^c_{r}  \right) \cdot \ln(s) \ln \left( \frac{ n }{\varepsilon}\right)
\end{eqnarray*}
ensures perfect reconstruction with probability $1-\varepsilon$.
\end{corollary}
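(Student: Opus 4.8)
The plan is to reduce everything to a bound on $\Gamma(S,\pi)=\max(\Theta(S,\pi),\Upsilon(S,\pi))$ and then to invoke Theorem \ref{thm:recovery}; the entire work lies in evaluating $\Theta$ and $\Upsilon$ for horizontal-line blocks drawn from the separable Fourier--Haar matrix. First I would exploit separability. Since both the 2D Fourier transform and the 2D (tensor) Haar transform factorize, one has $A_0=(\Fc\otimes\Fc)(\phi^\ast\otimes\phi^\ast)=\Psi\otimes\Psi$, where $\Psi:=\Fc\phi^\ast\in\Cbb^{\sqrt n\times\sqrt n}$ is the 1D Fourier--Haar matrix with rows $(\rho_k^\ast)_{1\le k\le\sqrt n}$. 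A horizontal line indexed by the vertical frequency $k$ is then $B_k=\rho_k^\ast\otimes\Psi$, so that, using unitarity of $\Psi$,
\[
B_k^\ast B_k=(\rho_k\rho_k^\ast)\otimes(\Psi^\ast\Psi)=(\rho_k\rho_k^\ast)\otimes\Id .
\]

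Next I would use this factorization to compute the entries of $B_k^\ast B_{k,S}$. Writing every index as a pair $(u_1,u_2)$ (vertical/horizontal wavelet coordinate), the operator $B_k^\ast B_{k,S}$ acts as $\rho_k\rho_k^\ast$ on the first coordinate and as the identity on the second, truncated to $S$; hence, for a fixed output index $i=(i_1,i_2)$, only the entries of $S$ lying in the column $C_{i_2}$ contribute. Introducing the one-dimensional trace $S_{k}:=\{u_1:(u_1,k)\in S\}$, a direct computation gives
\begin{align*}
\Theta(S,\pi)&=\max_{i_2}\ \max_{1\le k\le\sqrt n}\max_{i_1}\frac{|(\rho_k)_{i_1}|}{\pi_k}\sum_{u_1\in S_{i_2}}|(\rho_k)_{u_1}|,\\
\Upsilon(S,\pi)&=\max_{i_2}\ \max_{i_1}\sup_{\|v\|_\infty\le1}\sum_{k}\frac{|(\rho_k)_{i_1}|^2}{\pi_k}\Big|\sum_{u_1\in S_{i_2}}\overline{(\rho_k)_{u_1}}\,v_{u_1}\Big|^2,
\end{align*}
where in $\Upsilon$ the vector $v$ may be taken to range over the single column $C_{i_2}$, since $e_i^\ast B_k^\ast B_{k,S}v$ depends on $v$ only through its restriction to that column. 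The crucial observation is that, for each fixed column $i_2$, the bracketed maxima over $(i_1,k)$ are \emph{exactly} the one-dimensional quantities $\Theta$ and $\Upsilon$ of isolated Fourier--Haar measurements associated with the 1D support $S_{i_2}$, already analysed in the proof of Corollary \ref{corol:isolatedMRI}. Thus the line-sampling quantities equal the worst case over columns of their one-dimensional counterparts.

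It then remains to feed the one-dimensional coherence estimates of Appendix \ref{app:proofIsolatedMRI} into this identity. For every column the level-$\ell$ sparsity $|S_{i_2}\cap\tau_\ell|$ is bounded, uniformly in $i_2$, by the column-sparsity $s^c_\ell$ of \eqref{eq:scol}; taking the maximum over columns therefore replaces each level-sparsity $s_p$ in the 1D bound of Corollary \ref{corol:isolatedMRI} by $s^c_p$. This yields
\[
\Theta(S,\pi),\ \Upsilon(S,\pi)\ \lesssim\ \max_{0\le j\le J}\frac{2^{-j}}{\tilde\pi_j}\sum_{r=0}^J 2^{-|j-r|/2}\,s^c_r,
\]
so that $\Gamma(S,\pi)$ obeys the same bound. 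Plugging this into Theorem \ref{thm:recovery} gives the first stated inequality; minimizing the right-hand side over level-constant distributions $\tilde\pi$ (which, the two upper bounds coinciding, reduces to the explicit minimization already carried out in Corollary \ref{corol:isolatedMRI}) produces the announced optimal $\pi_k$ and the final bound involving $\sum_j\big(s^c_j+\sum_{r\ne j}2^{-|j-r|/2}s^c_r\big)$.

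The main obstacle is the control of $\Upsilon$: unlike $\Theta$, it couples all blocks through a single vector $v$ inside a sum of squares, and one must verify that the tensor structure genuinely decouples the two coordinates, so that the inter-block interference collapses onto the one-dimensional calculation. In particular the identity factor carried by the horizontal coordinate is precisely what forces the \emph{column-wise} sparsity $s^c_\ell$, rather than the global sparsity $s$, to appear. A secondary but indispensable ingredient is the pointwise decay of the 1D Fourier--Haar coefficients across dyadic levels, which underlies the $2^{-|j-r|/2}$ weights and is established in Appendix \ref{app:proofIsolatedMRI}.
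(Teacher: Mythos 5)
Your proposal is correct, and it reaches the same bound
$\Gamma(S,\pi)\lesssim\max_{0\le j\le J}\frac{2^{-j}}{\tilde\pi_j}\sum_{r}2^{-|j-r|/2}s^c_r$
needed to invoke Theorem \ref{thm:recovery}, but it gets there by a different organization of the argument than the paper. The paper (Appendix \ref{app:prooflinesHaarMRI}) performs the estimation directly in the 2D variables: it reuses the intermediate expressions \eqref{exp:ThetaShannon} and \eqref{exp:UpsiShannon} derived in the Shannon case (Appendix \ref{app:proofLinesShannonMRI}), bounds the coefficients $|\phi_{k,m}|$ via Lemma \ref{lem:adcock1}, bounds the quantities $K_j=\|P_{\tau_j}\phi^* w^{|(:,\ell_2)}\|_2^2$ via the level-block operator-norm decay of \cite[Lemma 4.3]{adcock2014note} together with Cauchy--Schwarz, and lets $s^c_r$ enter through $\|P_{\tau_r}w^{|(:,\ell_2)}\|_2^2\le s^c_r$. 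You instead prove an \emph{exact} algebraic reduction: using $B_k^*B_k=(\rho_k\rho_k^*)\otimes\Id$, you show that the row of $B_k^*B_{k,S}$ indexed by $i=(i_1,i_2)$ only sees the trace $S_{i_2}=\{u_1:(u_1,i_2)\in S\}$ of $S$ on the column $C_{i_2}$, so that $\Theta(S,\pi)$ and $\Upsilon(S,\pi)$ for line sampling are \emph{exactly} the worst case over columns $i_2$ of the 1D isolated-measurement quantities for the support $S_{i_2}$; since $|S_{i_2}\cap\tau_\ell|\le s^c_\ell$ uniformly in $i_2$ (this is precisely the definition \eqref{eq:scol}) and the 1D bounds of Lemmas \ref{lem:thetaIsoMRI} and \ref{lem:upsilonIsoMRI} are monotone in the level sparsities, you may cite those lemmas as black boxes with $s_p$ replaced by $s^c_p$. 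The two routes use identical analytic ingredients (the same two decay estimates plus Cauchy--Schwarz, which sit inside the 1D lemmas you cite), so neither is stronger; what yours buys is modularity and a conceptual explanation of \emph{why} the column-wise sparsity $s^c_\ell$ is the right quantity (it is the worst 1D level sparsity over columns), whereas the paper's direct 2D computation is self-contained and shares its intermediate expressions with the Shannon-wavelet corollary, at the price of essentially re-deriving the 1D proof in 2D notation. Your final step (optimizing the level-constant $\tilde\pi_j$ and summing $\sum_j 2^{-j}|\tau_j|=\sum_j 1$ to obtain the bound $\sum_j(s^c_j+\sum_{r\ne j}2^{-|j-r|/2}s^c_r)$) coincides with the paper's.
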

The proof of Corollary \ref{corol:linesHaarMRI} is given in Section \ref{app:prooflinesHaarMRI}. 

This result indicates that the number of acquired lines in the "horizontal" level $j$ should be chosen depending on the quantities $s_j^c$. 
Note that this is very different from the sparsity by levels proposed in \cite{adcock2013breaking}.
In conclusion, Corollary \ref{corol:linesHaarMRI} reveals that with a structured acquisition, the sparsity needs to be more structured in order to guarantee exact recovery. 
To the best of our knowledge, this is the first theoretical result which can explain why sampling lines in MRI as in \cite{lustig2007sparse} might work. 
In Figure \ref{fig:roseau}, we illustrate that the results in Corollary \ref{corol:linesHaarMRI} seem to indeed correspond to the practical reality. 
In this experiment, we seek reconstructing a reeds image from block structured measurements. As a test image, we chose a reeds image with vertical stripes of its rotated version. This particular geometrical structure explains that the quantities $s^c_{j}$ are much higher for the horizontal stripes than for the vertical one. As can be seen, the image with a low $s^c_{j}$ is much better reconstructed than the one with a high $s^c_{j}$. This wa predicted by our theory. 
\begin{figure}[h!]
\begin{center}
\includegraphics[height=5cm]{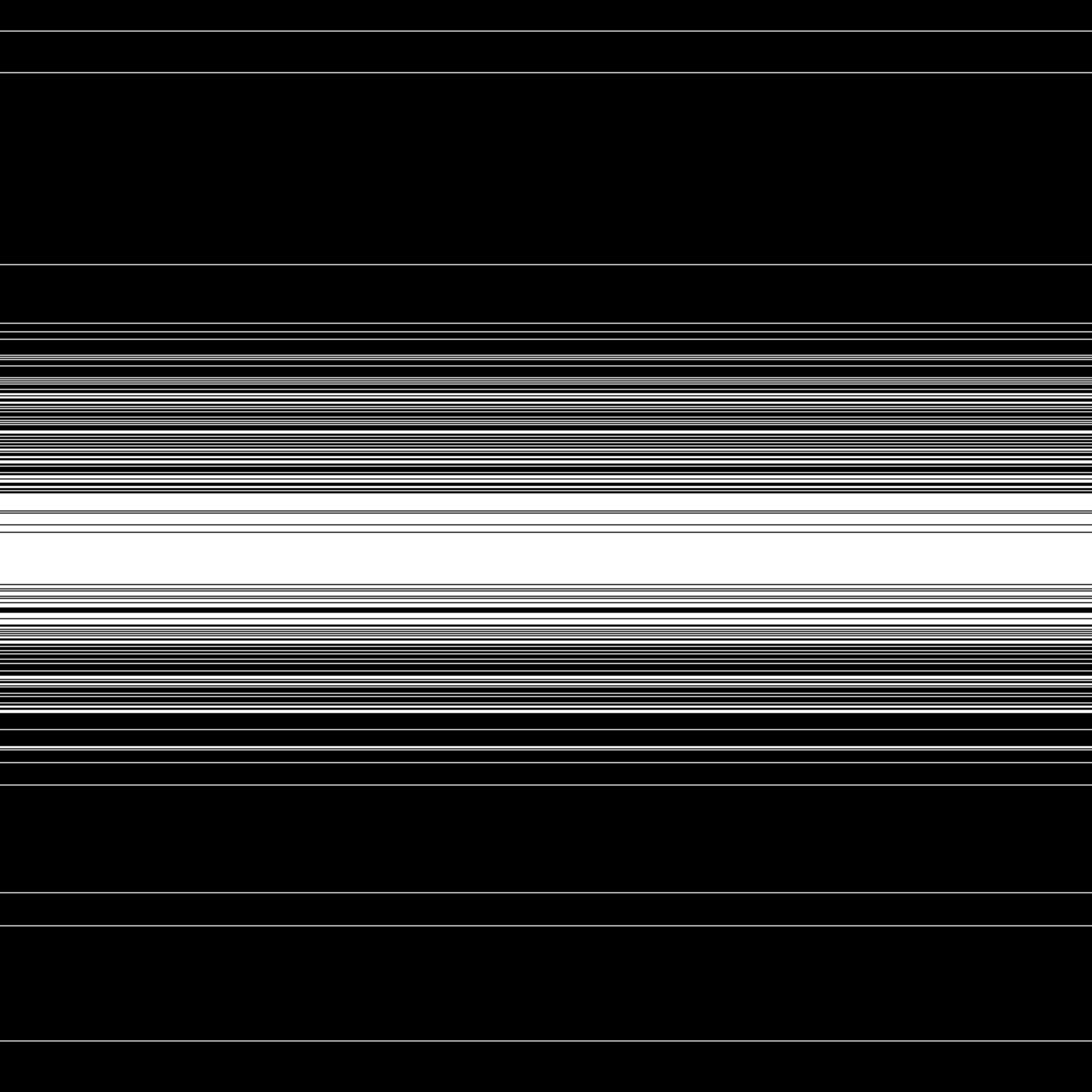} \\
{\small{Sampling scheme}}
\btabu{@{}cc}
\includegraphics[height=7cm]{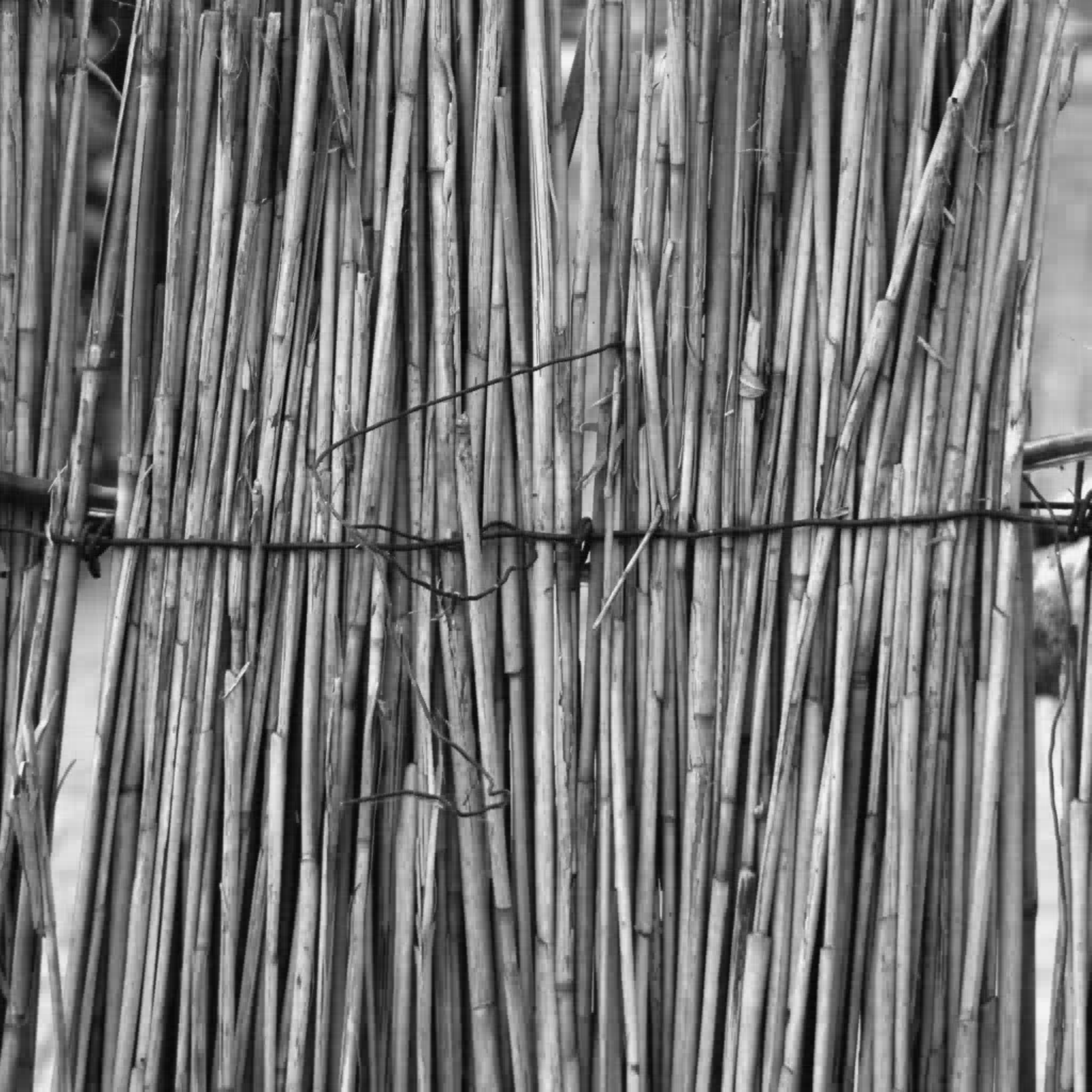}  &
\hspace{-0.3cm}\includegraphics[height=7cm]{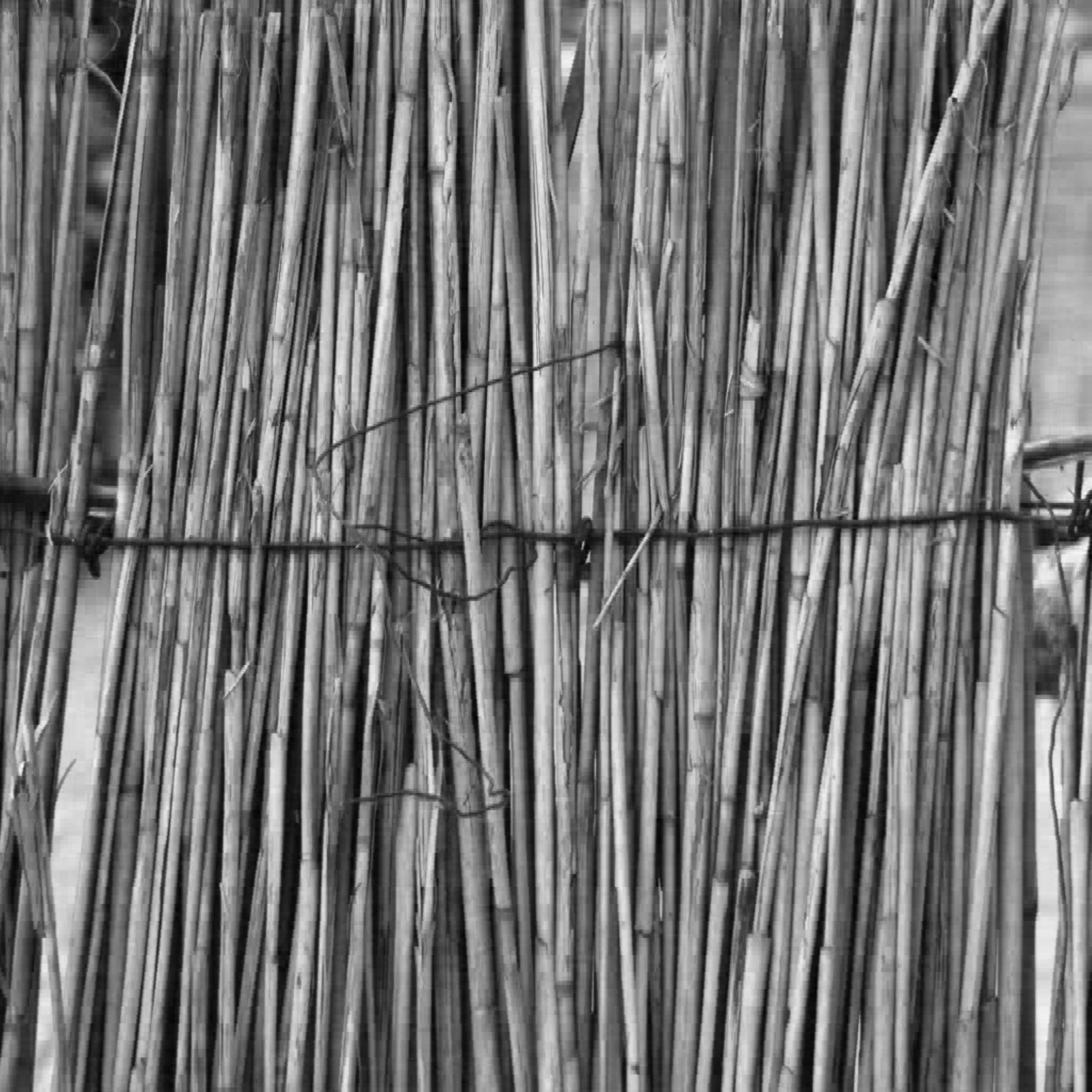} 
\\
{\small (a) Original image }& {\hspace{-0.3cm}\small (b) SNR = 27.8 dB} \\
$s^c = (16, \,16, \, 32, \, 59, \, 81, \, 75, \, 48)$ &  \\
\includegraphics[height=7cm]{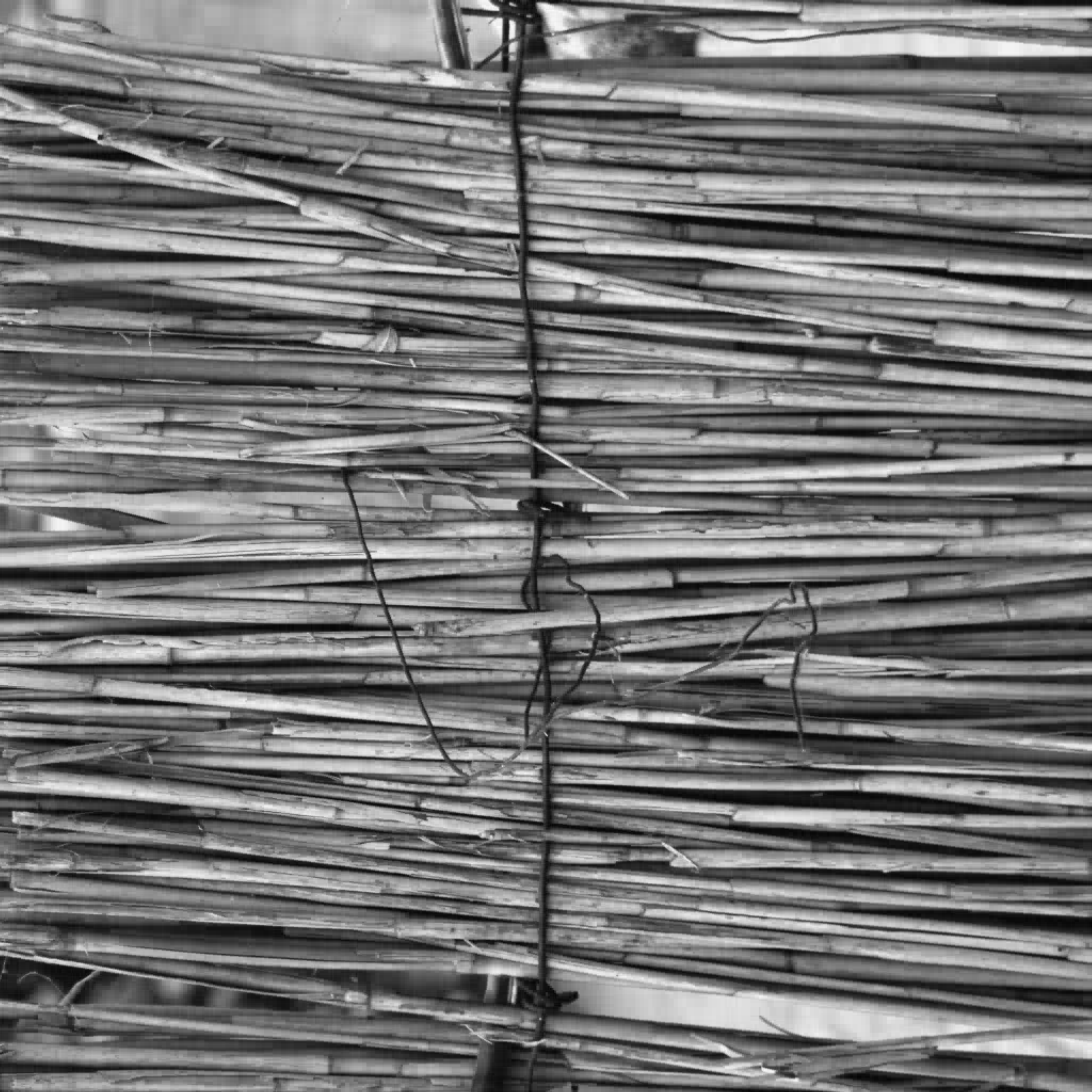} &
\hspace{-0.3cm}\includegraphics[height=7cm]{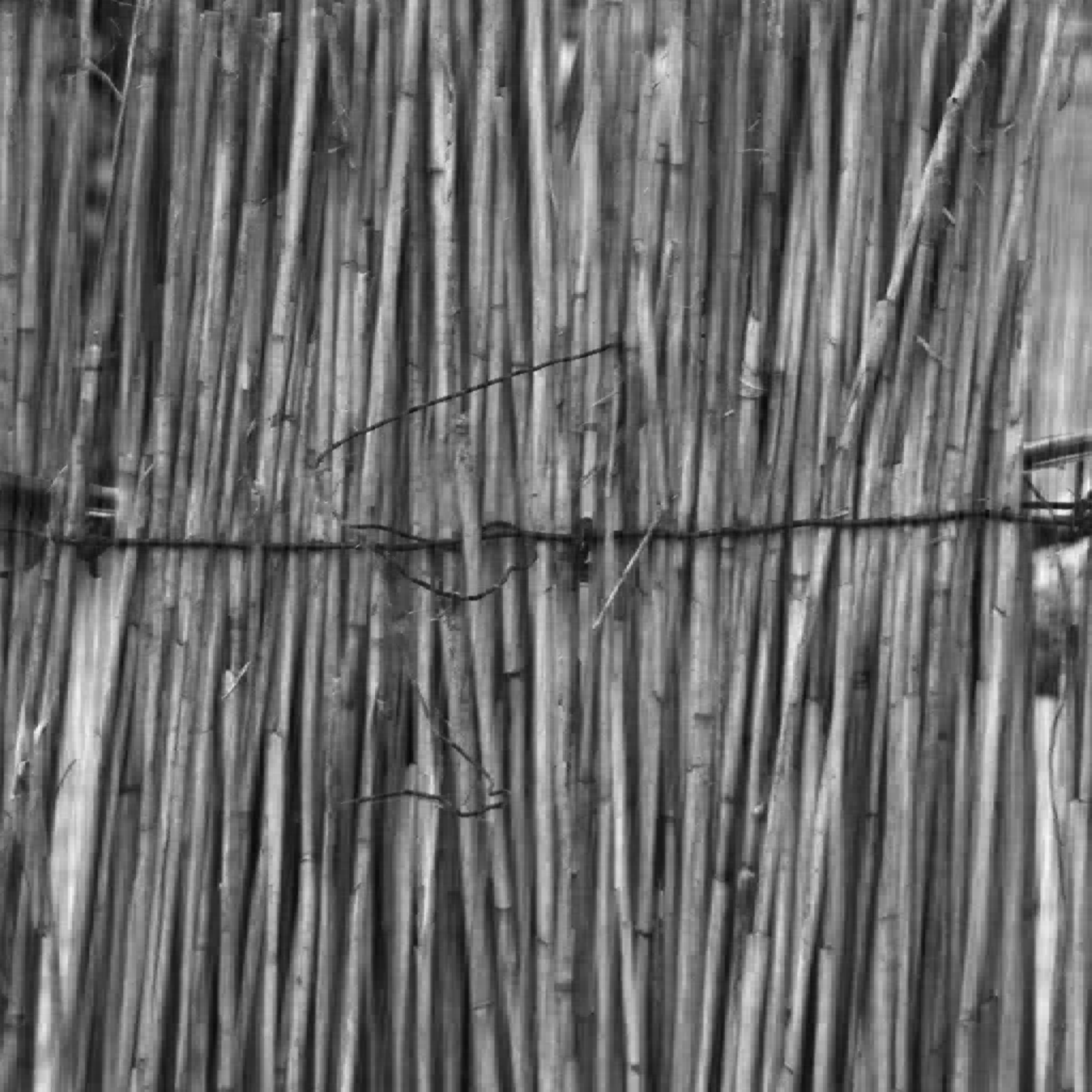} 
\\
{\small (c) Original image }  & {\hspace{-0.3cm}\small (d) SNR = 14.7 dB} \\
$s^c = (16, \,16, \, 32, \, 64, \, 124, \, 240, \, 411)$ &  \\
\etabu
\caption{\label{fig:roseau} An example of reconstruction of a $2048\times 2048$ real image sensed in the Fourier domain. In (a) (c),  Reference images to reconstruct: (c) is the same image as (a) but rotated of $90^\circ$.  We precise the value of the vector $s^c = \left( s^c_j \right)_{1\leq j \leq 7}$ for both images. Note that the quantities $s^c_j$ are larger in the case of image (b). For the reconstruction, we use the sampling scheme at the top of the Figure. It corresponds to 9.8 \% of measurements.
In (b) (d), corresponding reconstruction via $\ell_1$-minimization. We have rotated the image in (d) to facilitate the comparison between both. Note that (b) is much better reconstructed than (d). This is predicted by Corollary \ref{corol:linesHaarMRI}.}
\end{center}
\end{figure}
\FloatBarrier



\section{Extensions}

\subsection{The case of Bernoulli block sampling}
We analyzed the combination of structured acquisition and structured sparsity with i.i.d.\ drawings of random blocks. 
These results can be extended to a Bernoulli sampling setting. 
In such a setting, the sensing matrix is constructed as follows
$$ A = \left( \frac{\delta_k}{\sqrt{\pi_k}} B_k \right)_{1\leq k \leq M},
$$
where $\left( \delta_k \right)_{1\leq k \leq M}$ are independent Bernoulli random variables such that $\Pbb \left( \delta_k = 1\right) = \pi_k$, for $1\leq k \leq M$. We may set $\sum_{k=1}^M \pi_k = m$ in order to measure $m$ blocks of measurements in expectation. By considering the same definition for $\Gamma(S,\pi)$ with $\left( \pi_k \right)_{1\leq k \leq M}$ the Bernoulli weights, it is possible, for the case of Bernoulli block sampling, to give a reconstruction result that shares a similar flavor to Theorem \ref{thm:recovery}.

\subsection{Towards new sampling schemes?}

The results in Section \ref{subsec:linesMRI} lead to the conclusion that exact recovery with structured acquisition can only occur if the the signal to reconstruct possesses an adequate sparsity pattern.
We believe that the proposed theorems might help designing new efficient and feasible sampling schemes.
Ideally, this could be done by optimizing $\Gamma(S,\pi)$ assuming that $S$ belongs to some set of realistic signals. 
Unfortunately, this optimization seems unrealistic to perform numerically, owing to the huge dimensions of the objects involved.
We therefore leave this question open for future works.

However, probing the limits of a given system, as was proposed in Corollary \ref{corol:linesHaarMRI} helps designing better sampling schemes. 
To illustrate this fact, we performed a simple experiment. 
Since the quantity $s^j_{c}$ is critical to characterize a sampler efficiency, it is likely that mixing horizontal and vertical sampling lines improves the situation.
We aim at reconstructing the MR image shown in Figure \ref{fig:cerveau} and assume that it is sparse in the wavelet basis. 
In Figure \ref{fig:cerveau}(a)(d), we propose two different sampling schemes. The first one is based solely on parallel lines in the horizontal direction, while the second one is based on a combination of vertical and horizontal lines. The combination of vertical and horizontal lines provides much better reconstruction results despite a lower total number of measurements.

\begin{figure}[h!]
\begin{center}
\includegraphics[height=5cm]{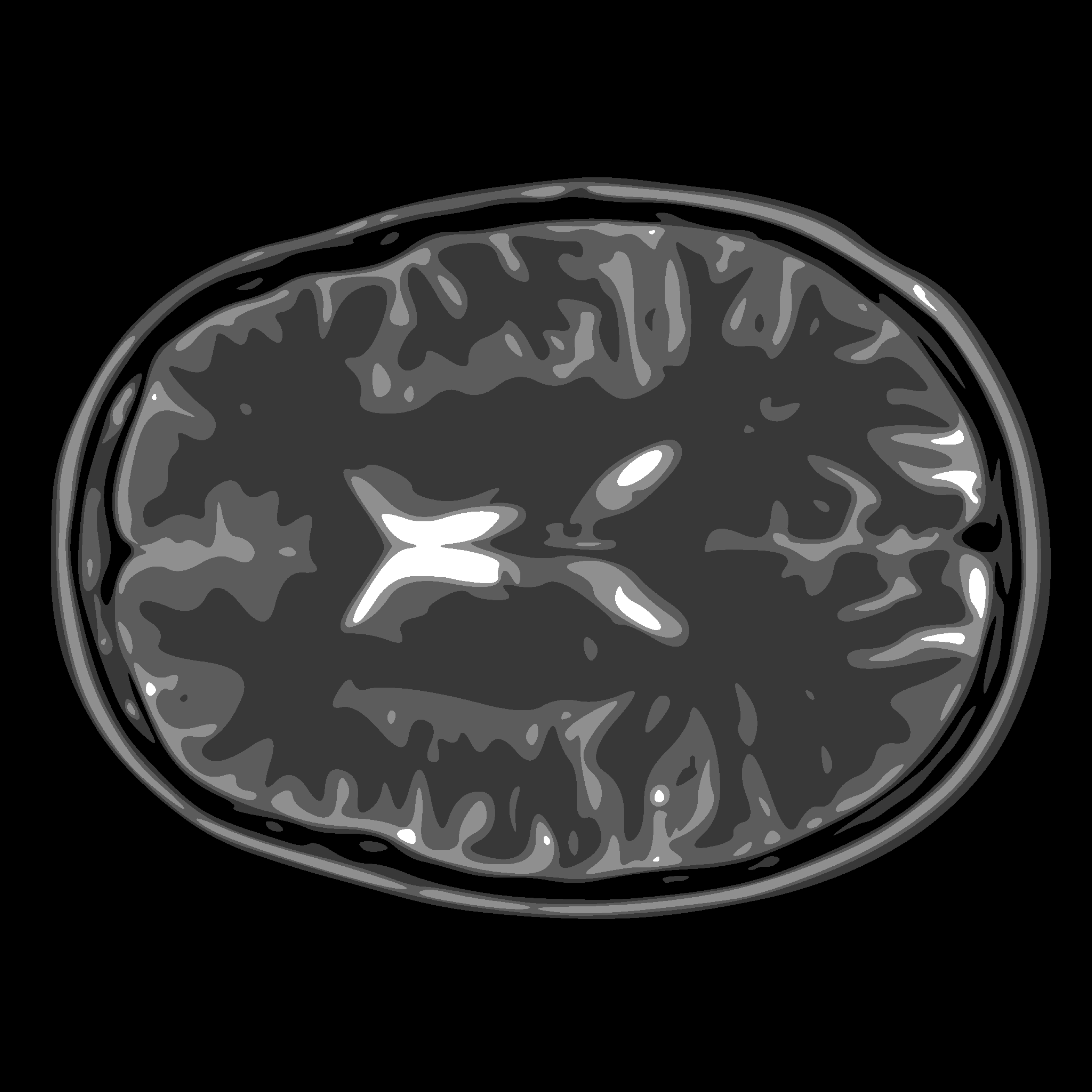} \\
{\small{Reference image}}
\btabu{@{}ccc}
\includegraphics[height=5cm]{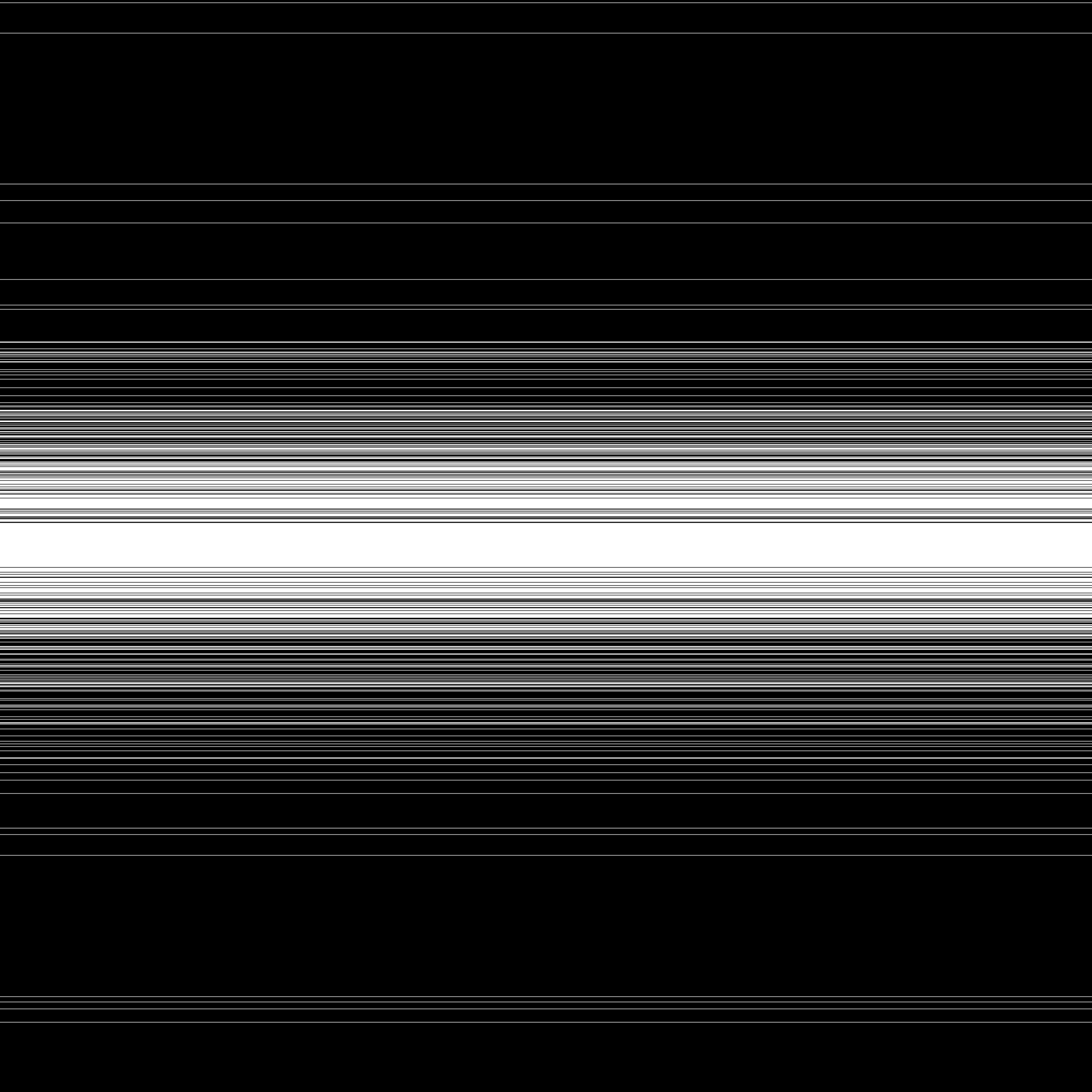}  &
\hspace{-0.3cm}\includegraphics[height=5cm]{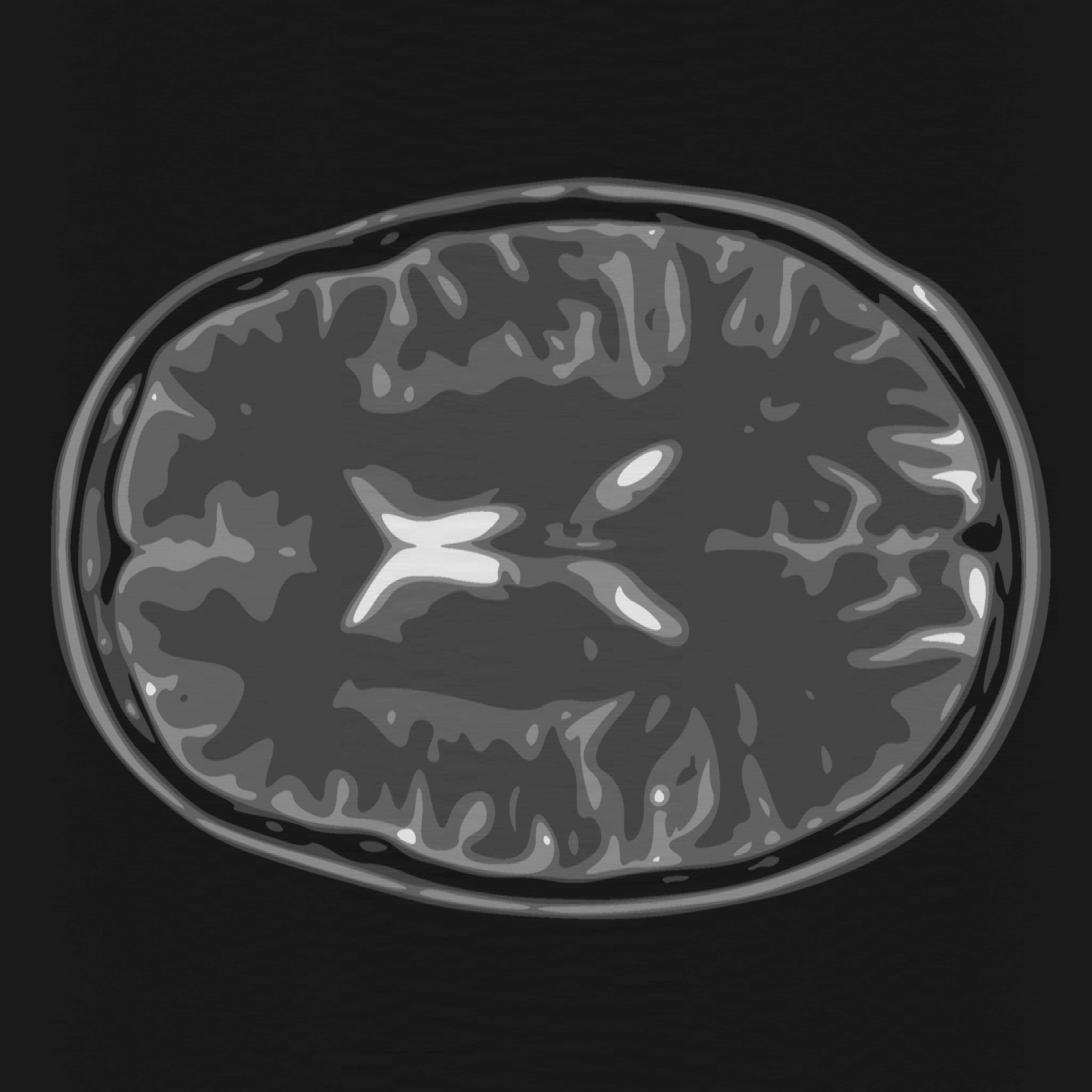} &
\hspace{-0.3cm}\includegraphics[height=5cm]{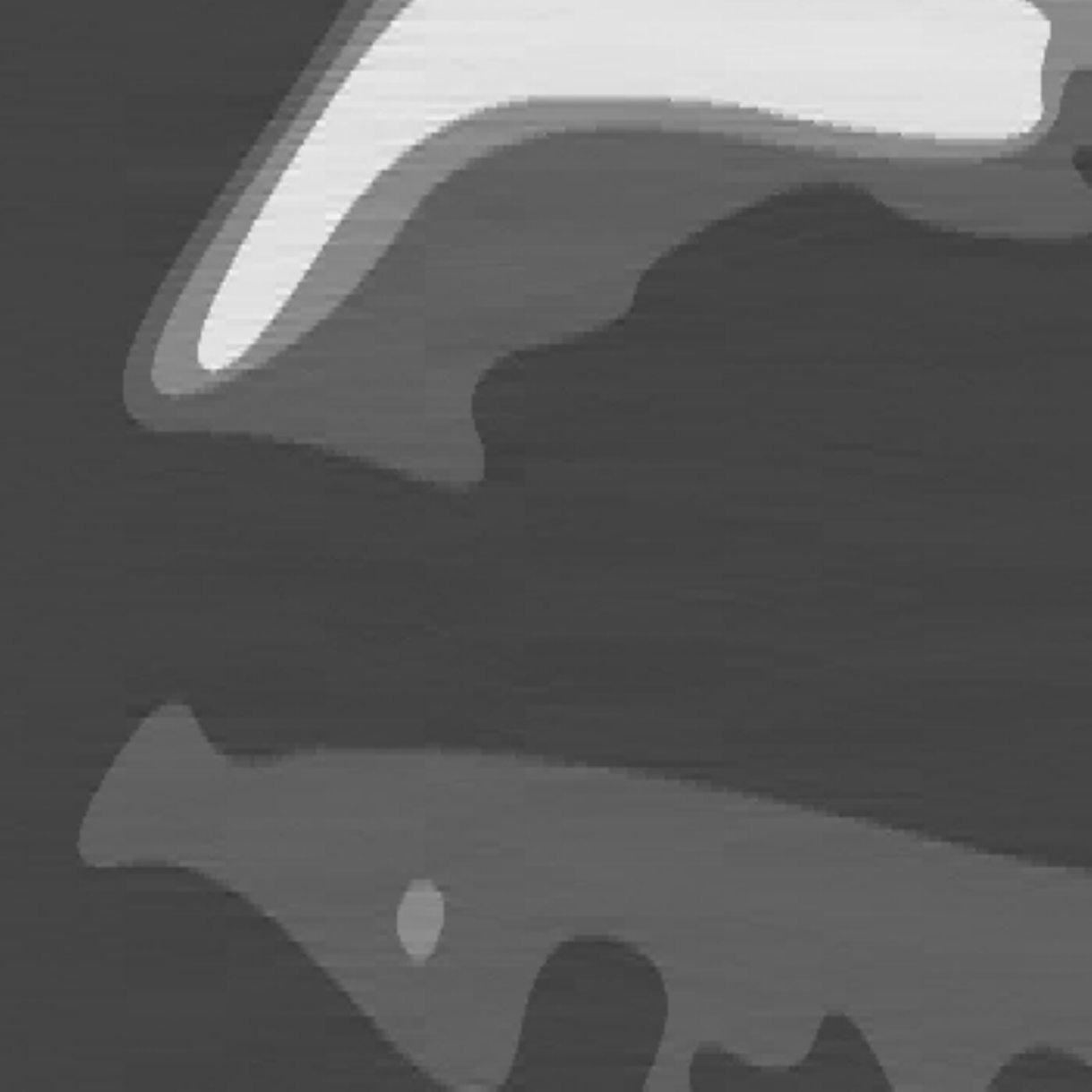} 
\\
{\small (a) Sampling scheme }& {\hspace{-0.3cm}\small (b) SNR = 24.47 dB} & {\hspace{-0.3cm}\small (c)} \\
\includegraphics[height=5cm]{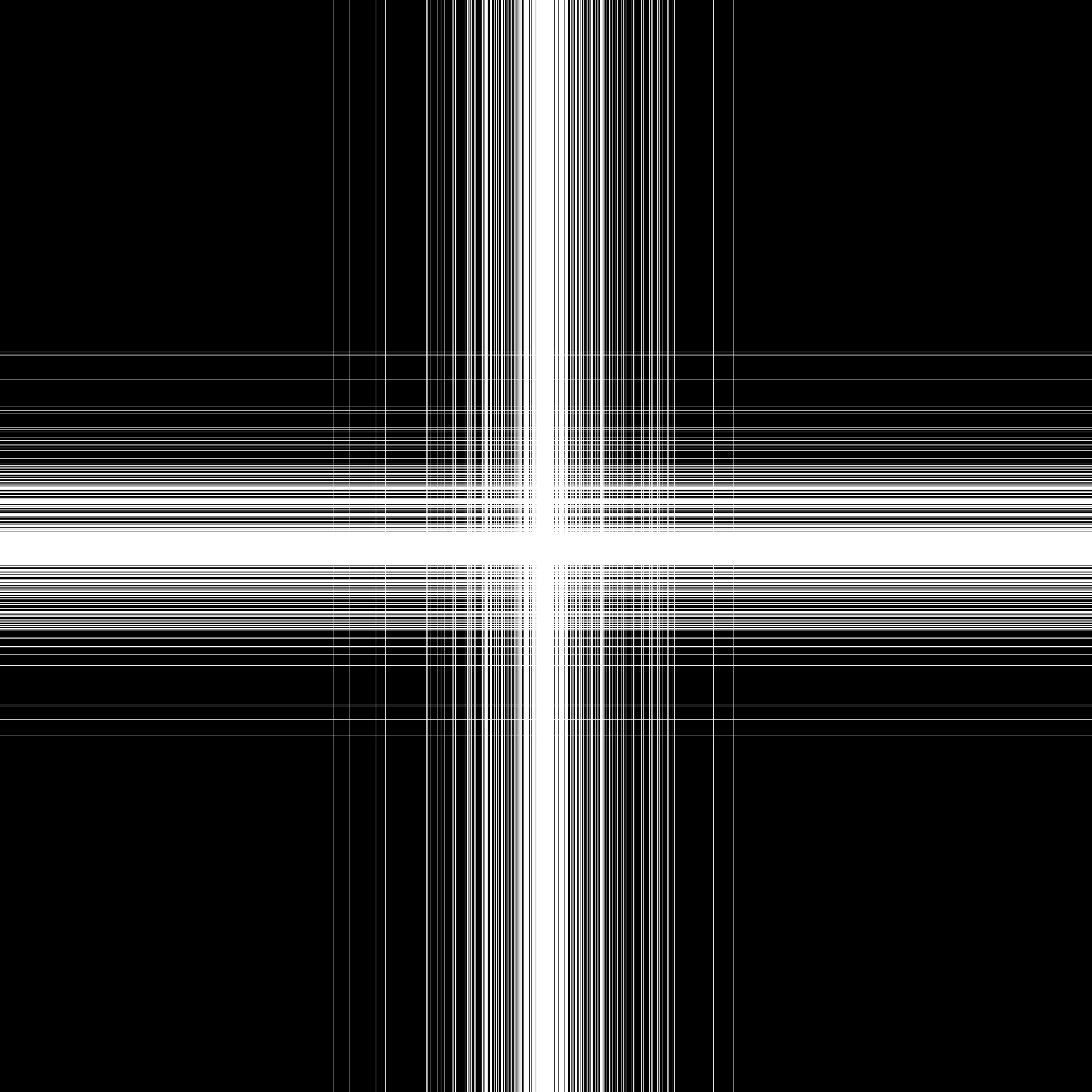} &
\hspace{-0.3cm}\includegraphics[height=5cm]{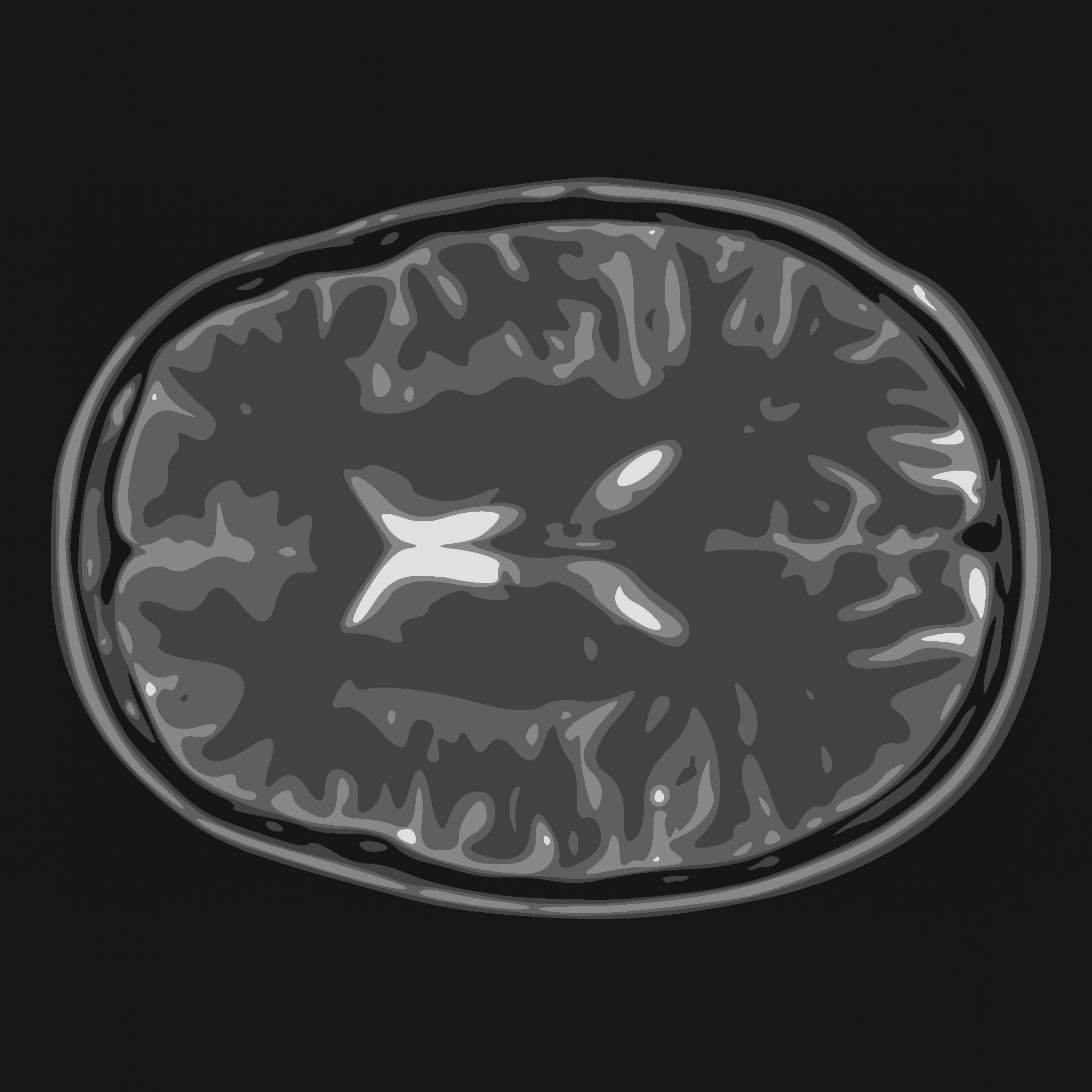} &
\hspace{-0.3cm}\includegraphics[height=5cm]{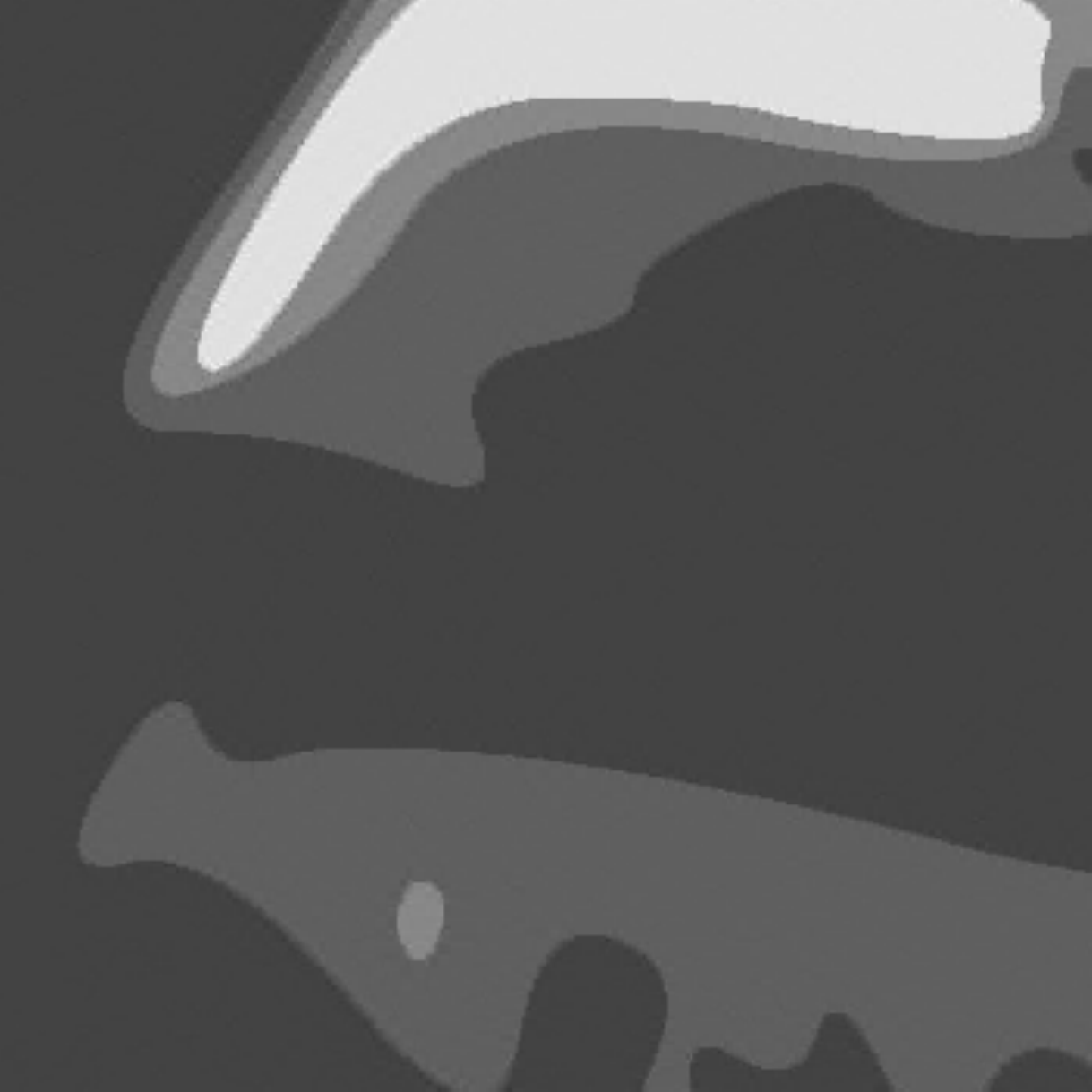} 
\\
{\small (d) Sampling scheme }  & {\hspace{-0.3cm}\small (e) SNR = 26.74 dB} & {\hspace{-0.3cm}\small (f)} \\
\etabu
\caption{\label{fig:cerveau} An example of MRI reconstruction of a $2048\times 2048$ phantom. The reference image to reconstruct is presented at the top of the figure. It is considered sparse in the wavelet domain. In (a) (d),  we present two kinds of sampling schemes with 20 \% of measurements: the samples are acquired in the 2D Fourier domain.
In (b) (e), we show the corresponding reconstruction via $\ell_1$-minimization. In (c) (f) we enhance the results by zooming on the reconstructed images. Note that the horizontal and vertical sampling scheme produces much better reconstruction results despite a smaller number of measurements since samples are overlapping. Moreover, the acquisition time would be exactly the same for an MRI.}
\end{center}
\end{figure}
\FloatBarrier


%
%

\section*{Acknowledgement}

The authors would like to thank Ben Adcock and Anders Hansen for their availibility for discussion. They are also grateful to Nicolas Chauffert for discussion.
This work was partially supported by the CIMI (Centre International de Math\'ematiques et d'Informatique)
Excellence program.

\appendix
\section{Proofs of the main results}

\subsection{Proof of Theorem \ref{thm:recovery}}
\label{sec:proof1}
In this section, we give sufficient conditions to guarantee that the vector $x$ is the unique minimizer of  \eqref{pb:minL1}, using an inexact dual certificate see \cite{candes2011probabilistic}.

\begin{lemme}[Inexact duality \cite{candes2011probabilistic}]
\label{lem:inexactDuality}
Suppose that $x \in \Rbb^{n}$ is supported on $S \subset \{1, \hdots , n\}$.  Assume that $A_S$ is full column rank and that
\begin{equation}
\| \left(A^*_S A_S \right)^{-1} \|_{2 \rightarrow 2} \leq 2 \qquad \text{and} \qquad  \max_{i\in {S^c}} \left\|  A_S^* A e_i \right\|_{2} \leq 1, \label{ass:dual1}
\end{equation}
where $\left(A^*_S A_S \right)^{-1} $ only makes sense on the set $\text{span}\{ e_i , i\in S\}$. Morever, suppose that there exists $v \in \Rbb^n$ in the row space of $A$ obeying
\begin{equation}
\| v_S - \sgn(x_S)\|_2 \leq 1/4 \qquad \text{and} \qquad  \|  v_{S^c}  \|_\infty\leq 1/4, \label{ass:dual2}
\end{equation}
Then, the vector $x$ is the unique solution of the minimization problem \eqref{pb:minL1}
\end{lemme}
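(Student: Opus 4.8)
The plan is to prove uniqueness directly, by showing that adding any nonzero feasible perturbation to $x$ strictly increases the $\ell_1$ objective. Concretely, I would take an arbitrary competitor $x+h$ satisfying $A(x+h)=Ax$, so that $h$ lies in the kernel of $A$, and argue that $\|x+h\|_1 > \|x\|_1$ unless $h=0$. Throughout I write $h_S = P_S h$ and $h_{S^c}=h-h_S$; since the feasible set is complex while the certificate $v$ is real, all inner products below are understood as their real parts, which does not affect the structure of the argument.

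First I would lower-bound the objective using convexity of $\|\cdot\|_1$ and the subgradient of $\|\cdot\|_1$ at $x$ whose restriction to $S$ is $\sgn(x_S)$ and whose restriction to $S^c$ is $\sgn(h_{S^c})$. This yields
\[
\|x+h\|_1 \ge \|x\|_1 + \langle \sgn(x_S), h_S\rangle + \|h_{S^c}\|_1 .
\]
To handle the cross term, I would invoke the inexact certificate $v$: since $v$ belongs to the row space of $A$ and $Ah=0$, we have $\langle v,h\rangle = 0$, hence $\langle v_S,h_S\rangle = -\langle v_{S^c},h_{S^c}\rangle$. Writing $\langle \sgn(x_S),h_S\rangle = \langle \sgn(x_S)-v_S,h_S\rangle + \langle v_S,h_S\rangle$ and bounding the two resulting pieces by Cauchy--Schwarz and H\"older with the estimates in \eqref{ass:dual2}, I get
\[
\|x+h\|_1 \ge \|x\|_1 - \tfrac14\|h_S\|_2 + \tfrac34\|h_{S^c}\|_1 .
\]

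The main obstacle, and the crux of the argument, is to control the negative term $\|h_S\|_2$ solely by the off-support mass $\|h_{S^c}\|_1$; this is exactly where the conditions in \eqref{ass:dual1} enter. From $Ah=0$ together with $A h_S = A_S h_S$ I would deduce $A_S^*A_S h_S = -A_S^*A h_{S^c} = -\sum_{i\in S^c} h_i\, A_S^* A e_i$, so that
\[
\|A_S^*A_S h_S\|_2 \le \Big(\max_{i\in S^c}\|A_S^*A e_i\|_2\Big)\|h_{S^c}\|_1 \le \|h_{S^c}\|_1 .
\]
Since $A_S$ has full column rank, $A_S^*A_S$ is invertible on $\mathrm{span}\{e_i, i\in S\}$, and the conditioning bound gives $\|h_S\|_2 \le \|(A_S^*A_S)^{-1}\|_{2\to 2}\,\|A_S^*A_S h_S\|_2 \le 2\|h_{S^c}\|_1$.

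Finally I would substitute this estimate back to obtain
\[
\|x+h\|_1 \ge \|x\|_1 + \tfrac14\|h_{S^c}\|_1 \ge \|x\|_1,
\]
with equality forcing $h_{S^c}=0$, whence $\|h_S\|_2 \le 2\|h_{S^c}\|_1 = 0$, i.e.\ $h=0$. Thus every nonzero feasible perturbation strictly increases the objective, which proves that $x$ is the unique minimizer of \eqref{pb:minL1}.
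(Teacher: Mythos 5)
Your proof is correct and complete: it is precisely the standard dual-certificate perturbation argument of Cand\`es and Plan \cite{candes2011probabilistic}, which the paper itself invokes by citation rather than reproving. The subgradient lower bound, the elimination of the cross term via $\langle v,h\rangle=0$, the estimate $\|h_S\|_2 \le 2\|h_{S^c}\|_1$ derived from the two conditions in \eqref{ass:dual1}, and the final strict-inequality conclusion forcing $h=0$ all match the argument the paper relies on, including your remark that one works with real parts of inner products in the complex setting.
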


First, let us focus on Conditions \eqref{ass:dual1}.  
Remark that $A_S^* A_S$ is invertible by assuming that $A_S$ is full column-rank.
Moreover,
$$ \| \left(A_S^* A_S\right)^{-1} \|_{2\rightarrow 2}  = \left\| \sum_{k=0}^\infty \left( A_S^* A_S -P_S \right)^k \right\|_{2\rightarrow 2}  \leq \sum_{k=0}^\infty  \left\|  A_S^* A_S - P_S  \right\|_{2\rightarrow 2} ^k.
$$
Therefore, if $\left\| A_S^* A_S- P_S \right\|_{2\rightarrow 2}  \leq 1/2$  is satisfied, then   $\| \left(A_S^* A_S\right)^{-1} \|_{2\rightarrow 2}  \leq 2$. Moreover, by Lemma \ref{lem:localIsometry}, $\| \left(A_S^* A_S \right)^{-1} \|_{2\rightarrow 2}  \leq 2$  with probability at least $1-\varepsilon$, provided that
$$
m \geq \frac{28}{3} \Theta (S, \pi) \ln \left( \frac{2 s  }{\varepsilon}  \right) .
$$
By definition of $\Gamma(S, \pi)$, the first inequality of Conditions \eqref{ass:dual1} is therefore ensured with probability larger than $1-\varepsilon$ if
\begin{equation}
\label{eq:condm2}
m \geq \frac{28}{3} \Gamma(S ,\pi)  \ln \left( \frac{2 s  }{\varepsilon}  \right).
\end{equation}
Furthermore, using Lemma \ref{lem:offSupportCoherence}, we obtain that 
$$
\max_{i\in {S^c}} \| A_S^* A e_{i} \|_{2}  \leq 1
$$
 with probability larger than $1-\varepsilon$ if
$$
m \geq \Theta (S ,\pi) \left( 1+ 4 \sqrt{\ln \left( \frac{n}{\varepsilon} \right)} + 4 {\ln \left( \frac{n}{\varepsilon} \right)} \right).
$$
Again by definition of $\Gamma(S ,\pi)$, the second part of Conditions \eqref{ass:dual2} is ensured if $n\geq 3$ and
\begin{equation}
\label{eq:condm3}
m \geq 9\Gamma(S ,\pi) \ln \left( \frac{n }{\varepsilon}\right). 
\end{equation}

Conditions \eqref{ass:dual2} remain to be verified. The rest of the proof of Theorem \ref{thm:recovery} relies on the construction of a vector $v$ satisfying the conditions described in Lemma \ref{lem:inexactDuality} with high probability. To do so, we adapt the so-called golfing scheme introduced by Gross \cite{gross2011recovering} to our setting. More precisely, we will iteratively construct a vector that converges to a vector $v$ satisfying \eqref{ass:dual2} with high probability.

Let us first partition the sensing matrix $A$ into blocks of blocks so that, from now on, we denote by $A^{(1)}$  the first $m_1$ blocks of $A$, $A^{(2)}$ the next $m_2$ blocks, and so on. The $L$ random matrices $\left\{ A^{(\ell)} \right\}_{\ell=1,\ldots,L}$ are independently distributed, and we have that $m = m_1+m_2+\hdots+m_L$. As explained before, $A^{(\ell)}_S$ denotes the matrix $A^{(\ell)} P_{S}$.

The golfing scheme starts by defining $v^{(0)}=0$, and then it iteratively defines
\begin{align}
v^{(\ell)} =  \frac{m}{m_\ell} A^{(\ell)^*} A^{(\ell)}_S \left( \sgn (x) - v^{(\ell-1)} \right) +v^{(\ell-1)},
\end{align}
for $\ell=1,\hdots, L$, where $\sgn(x_i)=0$ if $x_i=0$. In the rest of the proof, we set $v = v^{(L)}$. By construction, $v$ is in the row space of $A$. The main idea of the golfing scheme is then to combine the results from the various Lemmas in Section \ref{sec:aux} with an appropriate choice of $L$ to show that the random vector $v$ satisfies the assumptions of Lemma \ref{lem:inexactDuality} with large probability.
Using the shorthand notation $v_{S}^{(\ell)} = P_{S} v^{(\ell)}$, let us define
$$
w^{(\ell)} =  \sgn (x) -v_S^{(\ell)}, \; \ell = 1,\ldots,L,
$$
where $x \in \Cbb^{n}$ is the solution of Problem \eqref{pb:minL1}. 

From the definition of $v^{(\ell)}$, it follows that, for any $1 \leq \ell \leq L$,
\begin{equation}
\label{eq:qirec}
w^{(\ell)} = \left( P_S - \frac{m}{m_\ell}  A^{(\ell)*}_{S} A^{(\ell)}_{S} \right) w^{(\ell-1)} = \prod_{j=1}^\ell \left( P_S -  \frac{m}{m_j}  A^{(j)*}_{S} A^{(j)}_{S} \right)   \sgn(x),
\end{equation}
and
\begin{equation}
\label{eq:vfctq}
v = \sum_{\ell=1}^L  \frac{m}{m_\ell} A^{(\ell)*} A^{(\ell)}_{S} w^{(\ell-1)}.
\end{equation}
Note that in particular, $w^{(0)} =\sgn (x) $ and $w^{(L)} =\sgn (x)  -  v$.
In what follows, it will be shown  that the matrices $P_S - \frac{m}{m_\ell}  A^{(\ell)*}_S A^{(\ell)}_{S}$ are contractions and that the norm of the  vector $w^{(\ell)}$ decreases geometrically fast with $\ell$. Therefore, $v_S^{(\ell)}$ becomes close to $\sgn(x_S)$ as $\ell$ tends to $L$. In particular, we will prove that $\|w^{(L)} \|_2  \leq 1/4$ for a suitable choice of $L$. In addition, we also show that $v$ satisfies the condition $\| v_{S^c} \|_{\infty}  \leq 1/4$. All these conditions will be shown to be satisfied with a large probability (depending on $\varepsilon$). 

For all $1 \leq \ell \leq L$, assume that 
\begin{align}
\label{eq:control2}
\tag{C1-$\ell$}
\left\| w^{(\ell)} \right\|_2 &\leq  r_\ell  \left\| w^{(\ell-1)} \right\|_2 \\
\label{eq:control3}
\tag{C2-$\ell$}
\left\| \frac{m}{m_\ell} \left(A_{S^c}^{(\ell)}\right)^* A^{(\ell)}_{S} w^{(\ell-1)}\right\|_\infty &\leq t_\ell \| w^{(\ell-1)} \|_\infty \\
\label{eq:control4}
\tag{C3-$\ell$}
\left\| \left( \frac{m}{m_\ell} \left(A_{S}^{(\ell)}\right)^* A^{(\ell)}_{S} - P_S \right) w^{(\ell-1)}\right\|_\infty &\leq t_\ell' \| w^{(\ell-1)} \|_\infty,
\end{align}
with
\begin{enumerate}
\item $L = 2 + \left\lceil   \frac{\ln \left(s \right)}{ 2 \ln 2}   \right\rceil  $,
\item $r_\ell = \frac{1}{2},$ for $\ell=1,\hdots, L$,
\item $t_\ell = t_\ell' =  \frac{1}{5}$  for $\ell=1,\hdots, L$.
\end{enumerate}

Note that using \eqref{eq:control2}, we can write that
\begin{align}
\left\| \sgn (x_S)  - v_S \right\|_2 = \| w^{(L)}_S \|_2 \leq \| \sgn (x_S)  \|_2 \prod_{\ell=1}^L r_\ell \leq \sqrt{s} \prod_{\ell=1}^L r_\ell \leq \frac{  \sqrt{s}}{2^{L}} \leq \frac{1}{4}, 
\label{eq:conrand1}
\end{align}
where the last inequality follows from the  previously specified choice on $L$.

Furthermore, Equation \eqref{eq:control3} implies that
\begin{align}
 \| v_{S^c} \|_\infty &= \left\| \sum_{\ell=1}^L \frac{m}{m_\ell} \left(A_{S^c}^{(\ell)}\right)^* A^{(\ell)}_{S} w^{(\ell-1)} \right\|_\infty \nonumber \\
&\leq \sum_{\ell=1}^L  \left\|  \frac{m}{m_\ell}  \left(A_{S^c}^{(\ell)}\right)^*A^{(\ell)}_{S} w^{(\ell-1)} \right\|_\infty \nonumber  \\
&\leq \sum_{\ell=1}^L t_\ell \left\|   w^{(\ell-1)} \right\|_\infty
\nonumber \\ 
\nonumber
&\leq \sum_{\ell=1}^L t_\ell \prod_{j=1}^{\ell-1} t_j' \\
&=\left( \frac{1}{5}\right) \frac{1-(1/5)^L}{1-1/5} \leq \frac{1}{4}.
 \label{eq:conrand2}
\end{align}
Note that in Inequality \eqref{eq:conrand2}, the control of the operator norms $\infty \rightarrow \infty$ avoids the apparition of $\sqrt{s}$ as in the usual golfing scheme of \cite{candes2011probabilistic}. Indeed, in our proof strategy, we have used the fact that $\|w_0\|_\infty = \| \sgn (x_S )\|_\infty =1$, whereas in \cite{candes2011probabilistic} $\|w_0\|_2 = \| \sgn (x_S )\|_2 \leq \sqrt{s}$ is involved. This is a key step in the proof, since the absence of the degree of sparsity at this stage allows to derive results depending only on $S$ and not on its cardinality $s=|S|$.

We denote by $p_1(\ell)$, $p_2(\ell)$ and $p_3(\ell)$ the probabilities that the upper bounds \eqref{eq:control2}, \eqref{eq:control3} and \eqref{eq:control4} do not hold. 

Let us call "failure C" the event in which one of the $3L$ inequalities \eqref{eq:control2}, \eqref{eq:control3}, \eqref{eq:control4} is not satisfied. Then,
$$ \Pbb\left( \text{failure C} \right) \leq \sum_{\ell=1}^L \Pbb \left( \text{failure \eqref{eq:control2}}  \right) + \Pbb \left( \text{failure \eqref{eq:control3}}  \right) + \Pbb \left( \text{failure \eqref{eq:control4}}  \right).
$$
Therefore a sufficient condition for $\Pbb\left( \text{failure C} \right) \leq \varepsilon$ is $\sum_{\ell=1}^L p_1(\ell) + p_2(\ell) +p_3(\ell) \leq \varepsilon $ which holds provided that $p_1(\ell) \leq \varepsilon/3L$, $p_2(\ell) \leq \varepsilon/3L$ and $p_3(\ell) \leq \varepsilon/3L$ for every $\ell=1,\hdots,L$. 
By Lemma \ref{lem:lowDistortion}, condition $p_1(\ell) \leq \varepsilon/3L$ is satisfied if
$$ m_\ell \geq 32 \Gamma(S,\pi) \left( \ln\left(\frac{ 3L }{ \varepsilon} \right) +\frac{1}{4} \right).
$$
By Lemma \ref{lem:distortionInf2}, condition $p_2(\ell) \leq \varepsilon/3L$ is satisfied if
$$ m_\ell \geq 101 \Gamma(S,\pi)  \ln\left(\frac{ 12n L }{ \varepsilon} \right) .
$$
By Lemma \ref{lem:InfToInf}, condition $p_3(\ell) \leq \varepsilon/3L$ is satisfied if
$$ m_\ell \geq 101 \Gamma(S,\pi)  \ln\left(\frac{ 12n L }{ \varepsilon} \right) .
$$
Overall, condition 
\begin{align}
\label{ineq:condGammal}
m_\ell \geq 101 \Gamma(S,\pi)  \ln\left(\frac{ 12n L }{ \varepsilon} \right)
\end{align}
ensures that \eqref{eq:conrand1} and \eqref{eq:conrand2} are satisfied with probability $1-\varepsilon$.
Condition 
$$ m  = \sum_{\ell=1}^L m_\ell \geq 101 \left(\frac{\ln(s)}{2\ln(2)} + 3 \right) \Gamma(S ,\pi) \ln \left( 12nL \varepsilon^{-1}\right) 
$$
will imply \eqref{ineq:condGammal}. The latter condition can be simplified into
\begin{equation}
m \geq 73 \cdot {\Gamma(S ,\pi)} \ln(64s) \left(   \ln \left( \frac{9 n }{\varepsilon}\right) + \ln \ln(64s) \right).
\label{eq:condm1}
\end{equation}
 The latter condition ensures that the random vector $v$, defined by \eqref{eq:vfctq}, satisfies Assumptions \ref{ass:dual2} of Lemma \ref{lem:inexactDuality} with probability larger than $1-\varepsilon$.

Hence, we have thus shown that if  conditions  \eqref{eq:condm2}, \eqref{eq:condm3} and \eqref{eq:condm1} are satisfied, then the Assumptions  \ref{ass:dual1} and  \ref{ass:dual2} of Lemma \ref{lem:inexactDuality} simultaneously hold  with probability larger than $1-3\varepsilon$. Note that bound \eqref{eq:condm1} implies \eqref{eq:condm2} and \eqref{eq:condm3}. 

\section{Bernstein's inequalities}

\begin{thmchapter}[Scalar Bernstein Inequality]
\label{theo:scalBern}
Let $x_1, \hdots , x_m$ be independent real-valued, zero-mean, random variables such that $|x_\ell |\leq K$ almost surely for every $\ell \in \{ 1, \hdots , m \}$. Assume that $\Ebb | x_\ell |^2 \leq \sigma^2_\ell$ for $\ell \in \{ 1, \hdots , m \}$. Then for all $t>0$,
\begin{align*}
\Pbb \left( \left| \sum_{\ell=1}^m x_\ell \right| \geq t  \right) \leq 2 \exp\left(  -\frac{t^2/2}{\sigma^2 + Kt/3}\right),
\end{align*}
with $\sigma^2 \geq \sum_{\ell=1}^m \sigma_\ell^2$.
\end{thmchapter}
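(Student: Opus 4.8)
The plan is to use the classical Cram\'er--Chernoff (exponential moment) method. First I would treat the upper tail $\Pbb\left(\sum_\ell x_\ell \geq t\right)$, and then recover the two-sided bound by applying the identical argument to the variables $-x_\ell$ and adding the two tail probabilities, which accounts for the factor $2$. For the upper tail, the entry point is the exponential Markov inequality: for every $\lambda>0$,
\[
\Pbb\left( \sum_{\ell=1}^m x_\ell \geq t \right) \leq e^{-\lambda t}\, \Ebb\left[ e^{\lambda \sum_\ell x_\ell} \right] = e^{-\lambda t} \prod_{\ell=1}^m \Ebb\left[ e^{\lambda x_\ell} \right],
\]
where the factorization of the expectation uses the independence of the $x_\ell$.

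The core of the argument is a bound on the moment generating function of each summand. Expanding the exponential and invoking $\Ebb[x_\ell]=0$ kills the linear term; for $k\geq 2$ the almost sure bound $|x_\ell|\leq K$ combined with $\Ebb|x_\ell|^2\leq \sigma_\ell^2$ yields $\Ebb|x_\ell|^k \leq K^{k-2}\sigma_\ell^2$. I would then sum the resulting series using the elementary estimate $k!\geq 2\cdot 3^{k-2}$, which converts it into a geometric series and produces, for $0<\lambda<3/K$,
\[
\Ebb\left[ e^{\lambda x_\ell}\right] \leq 1 + \frac{\sigma_\ell^2\lambda^2/2}{1-\lambda K/3} \leq \exp\left( \frac{\sigma_\ell^2 \lambda^2/2}{1-\lambda K/3}\right),
\]
the final inequality by $1+u\leq e^u$. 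Taking the product over $\ell$ and using $\sum_\ell \sigma_\ell^2 \leq \sigma^2$ gives
\[
\Pbb\left( \sum_{\ell=1}^m x_\ell \geq t \right) \leq \exp\left( -\lambda t + \frac{\sigma^2 \lambda^2/2}{1-\lambda K/3}\right).
\]

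It then remains to optimize over $\lambda$. Rather than differentiating, I would substitute the explicit near-optimal choice $\lambda = t/(\sigma^2 + Kt/3)$, which lies in $(0,3/K)$ because $\sigma^2>0$. A short computation shows $1-\lambda K/3 = \sigma^2/(\sigma^2+Kt/3)$, so both terms in the exponent reduce to multiples of $t^2/(\sigma^2+Kt/3)$ and combine to give the exponent $-\tfrac12\, t^2/(\sigma^2+Kt/3)$; this is exactly the claimed one-sided bound. Repeating the computation with $-x_\ell$ and summing the two tail estimates furnishes the factor $2$. The only mildly delicate point is the moment generating function estimate, namely justifying the constant $1/3$ via $k!\geq 2\cdot 3^{k-2}$ while keeping the admissible range $\lambda<3/K$ consistent throughout; once that is in place, the optimization is purely algebraic thanks to the convenient explicit value of $\lambda$.
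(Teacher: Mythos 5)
Your proof is correct. Note that the paper itself states this scalar Bernstein inequality as a classical fact without giving any proof, so there is no in-paper argument to compare against; what you have written is the standard Cram\'er--Chernoff proof that underlies the result in the literature. All the key steps check out: the moment bound $\Ebb|x_\ell|^k \leq K^{k-2}\sigma_\ell^2$ for $k\geq 2$, the elementary inequality $k!\geq 2\cdot 3^{k-2}$ (true by induction, with equality at $k=2,3$) giving the geometric series and the admissible range $\lambda < 3/K$, and the substitution $\lambda = t/(\sigma^2+Kt/3)$, which indeed yields $1-\lambda K/3 = \sigma^2/(\sigma^2+Kt/3)$ and collapses the exponent to $-\tfrac{1}{2}t^2/(\sigma^2+Kt/3)$; the symmetrization to $-x_\ell$ then gives the factor $2$. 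The only point worth a sentence in a polished write-up is the degenerate case $\sigma^2=0$ (where your choice of $\lambda$ hits the boundary $3/K$): there all $x_\ell$ vanish almost surely, so the bound holds trivially.
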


%
%

\begin{thmchapter}[Vector Bernstein Inequality (V1)] 
\label{theo:VectBern2}
\cite[Theorem 2.6]{candes2011probabilistic}
Let $\left( y_k \right)_{1\leq k \leq m}$ be a finite sequence of  independent random complex vectors of dimension $n$. Suppose that  $\Ebb y_{k} = 0$ and $\| y_{k} \|_{2} \leq K$ a.s.\ for some constant $K > 0$ and set $\sigma^2 \geq \sum_k \Ebb \| y_k \|_2^2$.  Let $Z = \left\| \sum_{k = 1}^{m} y_k \right\|_2 $. Then, for any $0< t \leq \sigma^2/K $, we have that
$$
 \Pbb \left( Z \geq  t \right) \leq  \exp\left( -\frac{\left( t / \sigma - 1 \right)^2}{ 4} \right) \leq \exp \left( - \frac{ t^2}{8\sigma^2 } +  \frac{1}{4}\right).
 $$
\end{thmchapter}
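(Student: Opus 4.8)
The statement is a vector-valued Bernstein inequality, so the plan is to give a self-contained argument that splits into an expectation bound and a one-sided concentration estimate around the mean, and then assembles the two with elementary algebra. Throughout I would write $Y = \sum_{k=1}^m y_k$, so that $Z = \|Y\|_2$. First I would bound $\Ebb Z$. Because the $y_k$ are independent and centered, all cross terms vanish: for $j \neq k$ one has $\Ebb \langle y_j , y_k \rangle = \langle \Ebb y_j , \Ebb y_k \rangle = 0$, whence $\Ebb \|Y\|_2^2 = \sum_{k=1}^m \Ebb \|y_k\|_2^2 \leq \sigma^2$. Jensen's inequality then gives $\Ebb Z \leq (\Ebb Z^2)^{1/2} \leq \sigma$. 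Consequently, for $t \geq \sigma$ the event $\{Z \geq t\}$ is contained in $\{Z \geq \Ebb Z + (t-\sigma)\}$, and it suffices to prove a concentration inequality of the form $\Pbb(Z \geq \Ebb Z + u) \leq \exp(-u^2/(4\sigma^2))$ for $0 \leq u \leq \sigma^2/K$ and then to set $u = t-\sigma$.

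The core of the proof is this concentration step. I would view $Z = f(y_1,\dots,y_m)$ as a function of independent random vectors and exploit two structural facts. First, $f$ has bounded differences: replacing $y_k$ by an independent copy $y_k'$ and writing $Z^{(k)}$ for the resulting value, the triangle inequality for $\|\cdot\|_2$ gives $|Z - Z^{(k)}| \leq \|y_k - y_k'\|_2 \leq 2K$. Second, and crucially, the increments are variance-sensitive: since the copies are centered and independent, $\Ebb\|y_k - y_k'\|_2^2 = 2\,\Ebb\|y_k\|_2^2$, so that $\sum_{k=1}^m \Ebb\big[(Z - Z^{(k)})_+^2\big] \leq 2\sum_{k=1}^m \Ebb\|y_k\|_2^2 \leq 2\sigma^2$. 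Feeding these two facts into a Bernstein-type bounded-differences inequality (the entropy / Efron--Stein exponential method, or equivalently a Doob-martingale moment-generating-function computation with conditional-variance control) produces the sub-Gaussian bound $\exp(-u^2/(4\sigma^2))$ precisely on the range $u \lesssim \sigma^2/K$, which matches the hypothesis $t \leq \sigma^2/K$.

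It then remains to assemble the pieces, and to identify the main obstacle. For $\sigma \leq t \leq \sigma^2/K$, combining the two steps gives $\Pbb(Z \geq t) \leq \exp\!\big(-(t-\sigma)^2/(4\sigma^2)\big) = \exp\!\big(-(t/\sigma - 1)^2/4\big)$, the first claimed inequality; the second is purely elementary, since writing $a = t/\sigma$ the bound $\exp(-(a-1)^2/4) \leq \exp(-a^2/8 + 1/4)$ reduces to $(a-1)^2 \geq a^2/2 - 1$, i.e.\ to $(a-2)^2 \geq 0$. The main obstacle is obtaining a \emph{variance-sensitive} exponent in the concentration step: a naive McDiarmid bound using only the $2K$ bounded-differences yields an exponent of order $u^2/(mK^2)$, which depends on $m$ and is useless here, so genuinely replacing $mK^2$ by $\sigma^2$ requires the finer entropy method (or a careful martingale estimate), together with the bookkeeping needed to pin down the constant $4$ and the threshold $\sigma^2/K$. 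Since the inequality is quoted verbatim as \cite[Theorem 2.6]{candes2011probabilistic}, an equally valid route is simply to invoke that reference; the sketch above reconstructs the argument for completeness.
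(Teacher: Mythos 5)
The paper never proves this statement: it is quoted verbatim from \cite[Theorem 2.6]{candes2011probabilistic} (itself a restatement of Gross's vector Bernstein inequality) and used as a black box in the appendix lemmas, so your fallback of simply invoking the reference coincides exactly with what the paper does. Your self-contained reconstruction, however, has a genuine gap at precisely the step you flag as the main obstacle, and it is more than bookkeeping. The two facts you propose to feed into a ``Bernstein-type bounded-differences inequality'' --- the almost-sure bound $|Z - Z^{(k)}| \le 2K$ and the \emph{expectation} bound $\sum_{k}\Ebb\bigl[(Z-Z^{(k)})_+^2\bigr] \le 2\sigma^2$ --- do not imply the sub-Gaussian tail $\exp\bigl(-u^2/(4\sigma^2)\bigr)$. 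The entropy/Efron--Stein exponential method needs an almost-sure bound on the conditional variance proxy $V_+ = \sum_k \Ebb\bigl[(Z-Z^{(k)})_+^2 \mid y_1,\ldots,y_m\bigr]$ (expectation over the copies $y_k'$ only); with the expectation on the outside one only recovers $\mathrm{Var}(Z)\le 2\sigma^2$, i.e.\ Chebyshev-type tails. For this functional the almost-sure bound is $V_+ \le \sum_k\bigl(\|y_k\|_2^2 + \Ebb\|y_k\|_2^2\bigr)$, which can be of order $mK^2$ --- exactly the quantity you were trying to avoid. The Doob-martingale variant you mention parenthetically \emph{can} be made rigorous: since $|Z - Z^{(k)}| \le \|y_k - y_k'\|_2$ pointwise, the $k$-th martingale increment is dominated by the expectation of $\|y_k - y_k'\|_2$ over $y_k'$, a function of $y_k$ alone, so the conditional variances are deterministically bounded by $2\Ebb\|y_k\|_2^2$ and Freedman's inequality gives $\Pbb\bigl(Z \ge \Ebb Z + u\bigr) \le \exp\bigl(-u^2/(4\sigma^2 + 4Ku/3)\bigr) \le \exp\bigl(-3u^2/(16\sigma^2)\bigr)$ for $u \le \sigma^2/K$. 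That is the right shape but with constant $3/16$ in place of the stated $1/4$; recovering the constant $4$ is what requires the sharper concentration inequalities for norms of sums of bounded Banach-space-valued variables (Ledoux--Talagrand type) underlying Gross's proof. As written, then, the core of your argument is an unproven assertion whose most natural rigorous repair yields a strictly weaker inequality.

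Your outer steps are fine: $\Ebb Z \le \sigma$ by orthogonality of cross terms and Jensen, and the closing algebra indeed reduces to $(t/\sigma - 2)^2 \ge 0$. One further point: your assembly covers only $t \ge \sigma$, whereas the statement claims all $0 < t \le \sigma^2/K$. This restriction is unavoidable, because for $t < \sigma$ the first inequality is false in general: take $m=1$ and $y_1 = \pm K e_1$ with equal probability, so $\sigma = K$ and $Z = K$ almost surely, while the claimed bound at $t = K/2$ is $e^{-1/16} < 1$. The statement inherits this sloppiness from \cite{candes2011probabilistic}; the honest formulation (Gross's) bounds deviations above $\sigma$. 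Note that only the weaker right-hand bound $\exp\bigl(-t^2/(8\sigma^2)+1/4\bigr)$ holds on the whole range (it is vacuous for $t \le \sqrt{2}\,\sigma$), and the paper's downstream applications in Lemmas \ref{lem:lowDistortion} and \ref{lem:offSupportCoherence} only ever invoke the inequality in regimes where $t \ge \sigma$ or where the bound is vacuous, so nothing breaks there.
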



\begin{thmchapter}[Bernstein Inequality for self-adjoint matrices]
 \label{theo:MatBernSquare}
 Let $(Z_k)_{1 \leq k \leq n}$ be a finite sequence of  independent, random, self-adjoint matrices of dimension $d$, and let $a_{k}$ be a sequence of fixed self-adjoint matrices. Suppose that $Z_k$ is such that $\Ebb Z_k = 0$ and $\| Z_k \|_{2\rightarrow 2} \leq K$ a.s.\ for some constant $K > 0$ that is independent of $k$. Moreover, assume that  $\Ebb  Z_k^2  \preceq A_{k}^2$  for each $1 \leq k \leq n$. Define 
$$ \sigma^2 =  \left\|\sum_{k = 1}^{n}  A_{k}^2 \right\|_{2\rightarrow 2}  
$$
Then, for any $t > 0$, we have that
$$ \Pbb \left( \left\|\sum_{k = 1}^{n} Z_k  \right\|_{2\rightarrow 2} \geq t \right) \leq d \exp\left( -\frac{t^2/2}{\sigma^2 + Kt/3} \right).
$$
\end{thmchapter}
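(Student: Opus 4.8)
The plan is to prove the bound by the matrix Laplace-transform (Chernoff) method for self-adjoint matrices, in the spirit of Tropp. Write $Y=\sum_{k=1}^n Z_k$. Since $Y$ is self-adjoint, $\norm{Y}_{2\rightarrow 2}=\max\bigpth{\lambda_{\max}(Y),\lambda_{\max}(-Y)}$, so it suffices to bound $\Pbb(\lambda_{\max}(Y)\geq t)$: the family $(-Z_k)_k$ satisfies identical hypotheses (same $K$, same $A_k^2$), and a union bound over the two tails then controls the operator norm. For the one-sided estimate I would start from the exponential Markov inequality in matrix form: for every $\theta>0$,
\[
\Pbb(\lambda_{\max}(Y)\geq t)=\Pbb\bigpth{e^{\theta\lambda_{\max}(Y)}\geq e^{\theta t}}\leq e^{-\theta t}\,\Ebb\,\lambda_{\max}\bigpth{e^{\theta Y}}\leq e^{-\theta t}\,\Ebb\,\tr e^{\theta Y},
\]
where the last step uses that $e^{\theta Y}$ is positive definite, so its top eigenvalue is dominated by its trace.

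The key structural step is the subadditivity of the matrix cumulant generating function, a consequence of Lieb's concavity theorem (concavity of $A\mapsto\tr\exp(H+\log A)$): because the $Z_k$ are independent,
\[
\Ebb\,\tr e^{\theta Y}\leq\tr\exp\biggpth{\sum_{k=1}^n\log\Ebb\,e^{\theta Z_k}}.
\]
I expect this to be the main obstacle, in the sense that it is the one genuinely deep ingredient: it is precisely what lets the expectation be pushed inside a sum of matrix logarithms, whereas everything around it is elementary. (If one prefers to avoid Lieb, a slightly lossier Golden--Thompson argument also works.)

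It then remains to bound each summand's matrix mgf. From $\Ebb Z_k=0$ and $\norm{Z_k}_{2\rightarrow 2}\leq K$, the scalar estimate $e^{\theta x}\leq 1+\theta x+g(\theta)x^2$ for $|x|\leq K$, with $g(\theta)=\bigpth{e^{\theta K}-\theta K-1}/K^2$, lifts to the eigenvalues of $Z_k$ and yields $\Ebb\,e^{\theta Z_k}\preceq\Id+g(\theta)\,\Ebb Z_k^2\preceq\exp\bigpth{g(\theta)\,\Ebb Z_k^2}$, hence $\log\Ebb\,e^{\theta Z_k}\preceq g(\theta)\,\Ebb Z_k^2\preceq g(\theta)A_k^2$ by the hypothesis $\Ebb Z_k^2\preceq A_k^2$. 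Using operator monotonicity of the trace exponential together with $\tr e^M\leq d\,e^{\lambda_{\max}(M)}$, I obtain
\[
\Ebb\,\tr e^{\theta Y}\leq d\,\exp\biggpth{g(\theta)\,\Bignorm{\sum_{k=1}^n A_k^2}_{2\rightarrow 2}}=d\,\exp\biggpth{\frac{\theta^2\sigma^2/2}{1-\theta K/3}},
\]
where the equality invokes the bound $g(\theta)\leq\frac{\theta^2/2}{1-\theta K/3}$, valid for $0<\theta<3/K$; this last bound follows by expanding $e^{\theta K}-\theta K-1=\sum_{j\geq2}(\theta K)^j/j!$ and using $2/j!\leq 3^{-(j-2)}$ for $j\geq2$.

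Finally I would optimize the resulting tail bound $d\exp\bigpth{-\theta t+\frac{\theta^2\sigma^2/2}{1-\theta K/3}}$ over $\theta\in(0,3/K)$. The choice $\theta=t/(\sigma^2+Kt/3)$ gives $1-\theta K/3=\sigma^2/(\sigma^2+Kt/3)$ and reduces the exponent to $-\tfrac{1}{2}\theta t=-\frac{t^2/2}{\sigma^2+Kt/3}$, which is exactly the claimed exponent. Combining this with the two-sided reduction from the first paragraph completes the proof.
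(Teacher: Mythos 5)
Your argument is correct in substance and follows essentially the same route as the paper: both are instances of Tropp's matrix Laplace-transform method, i.e.\ the exponential Markov inequality, subadditivity of the matrix cumulant generating function via Lieb's concavity theorem, and the semidefinite mgf estimate $\log \Ebb\, e^{\theta Z_k} \preceq g(\theta)\,\Ebb Z_k^2 \preceq g(\theta) A_k^2$. The paper outsources exactly these steps to citations (Lemma 6.7 and Corollary 3.7 of \cite{tropp2012user}), whereas you reprove them inline; that is a presentational difference, not a mathematical one. The only real divergence is the endgame: the paper minimizes $-\theta t + \sigma^2 g(\theta)$ exactly at $\theta = \ln(1+t/\sigma^2)$, obtains the Bennett-type bound $d\exp\left(-\sigma^2 h(t/\sigma^2)\right)$ with $h(u)=(1+u)\ln(1+u)-u$, and then weakens it via $h(u)\geq \frac{u^2/2}{1+u/3}$; you instead bound $g(\theta)\leq \frac{\theta^2/2}{1-\theta K/3}$ first and substitute the near-optimal $\theta = t/(\sigma^2+Kt/3)$. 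The two routes are interchangeable and yield the same exponent; yours is slightly more direct, the paper's passes through the sharper Bennett inequality. (Cosmetic point: the display where you pass from $g(\theta)$ to $\frac{\theta^2\sigma^2/2}{1-\theta K/3}$ should be an inequality, not an equality.)

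One point of substance: your two-sided reduction (union bound over $Y$ and $-Y$) produces the constant $2d$, not the $d$ claimed in the statement --- and your version is the defensible one. Tropp's master bound controls $\Pbb\left(\lambda_{\max}\left(\sum_k Z_k\right) \geq t\right)$ with constant $d$; for the two-sided operator norm the factor $2$ cannot be removed in general. Indeed, already for $d=1$, taking $Z_1 = \pm K$ equiprobable, $A_1^2=\sigma^2 = K^2$ and $t=K$, the claimed bound would give $\Pbb\left(|Z_1|\geq K\right) \leq e^{-3/8} < 1$, which is false. The paper's proof incurs precisely this slippage by citing the $\lambda_{\max}$ master bound as if it applied directly to $\|\cdot\|_{2\rightarrow 2}$. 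Note that where the theorem is actually invoked (Lemma \ref{lem:localIsometry}), the paper uses the bound $2s\exp(\cdot)$, i.e.\ the $2d$ version that your proof establishes; so the $d$ in the statement is best read as a typo for $2d$, and your proof, not the paper's, proves the statement that is actually used.
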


\begin{proof}
This result is as an application of the techniques developed in  \cite{tropp2012user} to obtain tail bounds for sum of random matrices. Our arguments follow those in the proof of Theorem 6.1 in \cite{tropp2012user}. We assume that $K=1$ since the general result follows by a scaling argument. Using the assumption that $\Ebb  Z_k^2  \preceq A_{k}^2$,  and by applying the arguments in the  proof of Lemma 6.7 in \cite{tropp2012user}, we obtain that
$$
\Ebb \exp \left( \theta  Z_k \right) \preceq \exp \left( g(\theta) A_{k}^2 \right), 
$$
for any real $\theta > 0$, where $g(\theta) = e^{\theta} - \theta - 1$, and the notation $\exp(A)$ denotes the matrix exponential of a self-adjoint matrix $A$ (see  \cite{tropp2012user} for further details). Therefore, by Corollary 3.7 in   \cite{tropp2012user}, it follows that 
\begin{equation}
 \Pbb \left( \left\|\sum_{k = 1}^{n} Z_k  \right\|_{2\rightarrow 2} \geq t \right) \leq d \inf_{\theta > 0} \left\{ e^{-\theta t + \sigma^2 g(\theta)} \right\}, \label{eq:proofbernstein}
\end{equation}
where $ \sigma^2 =  \left\|\sum_{k = 1}^{n}  A_{k}^2 \right\|_{2\rightarrow 2}$. To conclude, we follow the proof of  Theorem 6.1 in \cite{tropp2012user}. The function $\theta \mapsto -\theta t + \sigma^2 g(\theta)$ attains its minimum for $\theta = \ln(1+t/\sigma^2)$, which implies that the minimal value of the right-hand size of Inequality \eqref{eq:proofbernstein} is $d  \exp \left( - \sigma^2 h(t/\sigma^2) \right)$  where $h(u) = (1+u) \ln(1+u) - u$ for $u \geq 0$. To complete the proof, it suffices to use the standard lower bound $h(u) \geq \frac{u^2/2}{1+u/3}$ for $u \geq 0$.
\end{proof}

\section{Estimates: auxiliary results} \label{sec:aux}

Let $S$ be the support of the signal to be reconstructed such that $|S|=s$. We set 
$$ \Lambda(S , \pi) := \max_{1\leq k \leq M}  \frac{1}{\pi_k}  \left\| B_{k,S}^* B_{k,S} \right\|_{2\rightarrow 2}.
$$
Note that $  \left\| B_{k,S}^* B_{k,S} \right\|_{2\rightarrow 2} \leq  \left\| B_{k,S}^* B_{k,S} \right\|_{\infty\rightarrow \infty} \leq \left\| B_{k}^* B_{k,S} \right\|_{\infty\rightarrow \infty},$
therefore, 
$$ \Lambda(S,\pi)\leq \Theta(S,\pi).
$$
To make the notation less cluttered, we will write $\Lambda$, $\Theta$, $\Upsilon$ and $\Gamma$ instead of $\Lambda(S,\pi)$, $\Theta(S,\pi)$, $\Upsilon(S,\pi)$ and $\Gamma(S,\pi)$.

\begin{lemme}
\label{lem:localIsometry}
Let $S \subset \{1,\hdots , n\}$ be of cardinality of $s$. Suppose that $\Theta\geq 1$. Then, for any $\delta >0$, one has that
\begin{align}
\tag{E1}
\label{lem:E1eq}
\displaystyle \Pbb \left( \| A_S^* A_S - P_S \|_{2\rightarrow 2} \geq \delta \right) &\leq 2 s \exp \left( -  \frac{m\delta^2/2}{\Theta (1+\delta /3)} \right).
\end{align}
\end{lemme}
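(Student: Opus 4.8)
The plan is to recognise $A_S^*A_S-P_S$ as a sum of $m$ independent, centred, self-adjoint random matrices and to invoke the matrix Bernstein inequality of Theorem \ref{theo:MatBernSquare}. From the construction \eqref{eq:modsamp},
$$ A_S^*A_S = P_S A^* A P_S = \frac{1}{m}\sum_{\ell=1}^m \frac{1}{\pi_{K_\ell}} B_{K_\ell,S}^* B_{K_\ell,S}, $$
so that $A_S^*A_S - P_S = \sum_{\ell=1}^m Y_\ell$ with $Y_\ell = \frac{1}{m}\bigl( \frac{1}{\pi_{K_\ell}} B_{K_\ell,S}^* B_{K_\ell,S} - P_S \bigr)$. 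Multiplying the isotropy identity \eqref{isotropyCondition} by $P_S$ on each side gives $\Ebb\bigl( \pi_K^{-1} B_{K,S}^* B_{K,S} \bigr) = P_S$, so each $Y_\ell$ has zero mean; moreover the $Y_\ell$ are i.i.d., self-adjoint, and all supported on the $s$-dimensional subspace $\mathrm{span}\{e_i , i\in S\}$.

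Next I would bound the two quantities feeding into Theorem \ref{theo:MatBernSquare}. For the almost-sure bound, note that $\pi_{K_\ell}^{-1} B_{K_\ell,S}^* B_{K_\ell,S}$ and $P_S$ are both positive semidefinite, with operator norms at most $\Lambda\le\Theta$ and $1\le\Theta$ respectively (using $\Theta\ge 1$ and $\Lambda\le\Theta$). Since the operator norm of a difference of two positive semidefinite matrices is at most the larger of their norms, $\|Y_\ell\|_{2\rightarrow 2}\le \Theta/m =: K$.

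The crux is the matrix variance. Writing $W=\pi_K^{-1} B_{K,S}^* B_{K,S}$ and using $WP_S=P_SW=W$, $\Ebb W=P_S$, one gets $\Ebb\bigl[(W-P_S)^2\bigr]=\Ebb[W^2]-P_S\preceq \Ebb[W^2]$. The key operator inequality $(B_{K,S}^* B_{K,S})^2\preceq \|B_{K,S}^* B_{K,S}\|_{2\rightarrow 2}\, B_{K,S}^* B_{K,S}$, together with $\pi_K^{-1}\|B_{K,S}^* B_{K,S}\|_{2\rightarrow 2}\le\Lambda$, yields $W^2\preceq \Lambda\,\pi_K^{-1} B_{K,S}^* B_{K,S}$, and taking expectation with the isotropy identity gives $\Ebb[W^2]\preceq \Lambda P_S\preceq\Theta P_S$. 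Hence $\Ebb Y_\ell^2\preceq (\Theta/m^2)P_S=:A_\ell^2$, so $\sigma^2=\bigl\| \sum_{\ell=1}^m A_\ell^2\bigr\|_{2\rightarrow 2}=\Theta/m$.

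Applying Theorem \ref{theo:MatBernSquare} with these values of $K$ and $\sigma^2$, with $t=\delta$, and with dimension $s$ (the $Y_\ell$ act on $\mathrm{span}\{e_i, i\in S\}$), produces
$$ \Pbb\bigl( \|A_S^*A_S - P_S\|_{2\rightarrow 2}\ge\delta\bigr)\le s\,\exp\!\left( -\frac{\delta^2/2}{\Theta/m + (\Theta/m)\delta/3}\right)=s\,\exp\!\left( -\frac{m\delta^2/2}{\Theta(1+\delta/3)}\right), $$
which gives \eqref{lem:E1eq} (the prefactor $s$ being even slightly sharper than the stated $2s$). I expect the matrix-variance estimate — establishing $\Ebb Y_\ell^2\preceq (\Theta/m^2)P_S$ in the semidefinite order via $(B^*B)^2\preceq\|B^*B\|_{2\rightarrow 2}B^*B$ and the isotropy condition — to be the main technical step; the remaining arguments are routine bookkeeping of constants.
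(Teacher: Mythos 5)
Your proof is correct and follows essentially the same route as the paper's: the same decomposition of $A_S^*A_S-P_S$ into i.i.d.\ centred self-adjoint summands, the same almost-sure bound via $\Lambda\leq\Theta$ and $\Theta\geq 1$, the same variance estimate $\Ebb W^2\preceq\Lambda P_S\preceq\Theta P_S$ obtained from $M^2\preceq\|M\|_{2\rightarrow 2}M$ together with the isotropy condition, and the same application of the matrix Bernstein inequality (Theorem \ref{theo:MatBernSquare}) on the $s$-dimensional support subspace. The only differences are cosmetic (you absorb the $1/m$ into the summands, and your a.s.\ bound $\max(\Lambda,1)$ replaces the paper's $\max(\Lambda-1,1)$, both dominated by $\Theta$), and your dimensional prefactor $s$ is in fact slightly sharper than the stated $2s$.
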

\begin{proof}
We decompose the matrix $A_S^* A_S-P_S$ as 
$$
A_S^* A_S- P_S= \frac{1}{m} \sum_{k=1}^m \frac{ B^*_{J_k,S} B_{J_k,S}}{\pi_{J_k}} - P_S  =\frac{1}{m} \sum_{k=1}^m X_{k},
$$
where $X_{k} :=\displaystyle \left( \frac{ B^*_{J_k,S} B_{J_k,S}}{\pi_{J_k}} - P_S\right)$.
It is clear that $\Ebb X_{k} = 0$, and since for all $1\leq k \leq M$, 
$
\frac{\left\|  B_{k,S} B^*_{k,S} \right\|_{2\rightarrow 2}}{\pi_k} \leq  \Lambda \leq \Theta,
$
we have that
$$\|X_{k} \|_{2\rightarrow 2} \leq  \max \left( \frac{\left\|  B^*_{J_k,S} B_{J_k,S} \right\|_{2\rightarrow 2}}{\pi_{J_k}}  -1 ,1\right)\leq   \Theta.$$
Lastly, we remark that
\begin{align*}
0 \quad \preceq \quad  \Ebb X_{k}^2 = \Ebb\left[  \frac{B^*_{J_k,S} B_{J_k,S}}{\pi_{J_k}} \right]^2 - P_S \quad &\preceq \quad \max_{1\leq k \leq M} \frac{\left\|  B^*_{k,S}  B_{k,S}\right\|_{2\rightarrow 2}}{\pi_k} \Ebb\left[  \frac{B^*_{J_k,S} B_{J_k,S}}{\pi_{J_k}} \right] \\
&\preceq \max_{1\leq k \leq M} \frac{\left\|  B^*_{k,S}  B_{k,S}\right\|_{2\rightarrow 2}}{\pi_k} P_S \preceq   \Lambda  P_S \\
&\preceq \Theta P_S.
\end{align*}
Therefore, using Theorem \ref{theo:MatBernSquare}, we can set $\sigma^2= \left\| \sum_{k=1}^{m} \Ebb X_k^2 \right\|_{2 \rightarrow 2} \leq m\Theta$.
Hence, inequality \eqref{lem:E1eq} immediately follows from  Bernstein's inequality for random matrices (see Theorem \ref{theo:MatBernSquare}).
\end{proof}

\begin{lemme}
\label{lem:lowDistortion}
Let $S \subset \{1, \hdots, n\}$, such that $|S|=s$. Let $w$ be a vector in $\Cbb^n$. Then, for any $0 \leq t \leq 1$, one has that
\begin{align}
\tag{E2}
\label{lem:E2eqt}
\Pbb & \left( \left\| \left( A_S^{*} A_S - P_S   \right) w \right\|_2\geq  t  \|w\|_2 \right)  \leq \exp \left( - \frac{m t^2}{8 \Theta} + \frac{1}{4} \right).\nonumber
\end{align}
\end{lemme}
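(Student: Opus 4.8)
The plan is to rewrite the target quantity as a sum of independent, zero-mean random vectors and then invoke the Vector Bernstein Inequality (Theorem \ref{theo:VectBern2}), reusing verbatim the two moment estimates already established in the proof of Lemma \ref{lem:localIsometry}.

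First I would decompose, exactly as in Lemma \ref{lem:localIsometry},
$$
\left(A_S^* A_S - P_S\right) w = \frac{1}{m} \sum_{k=1}^m X_k w, \qquad X_k := \frac{B_{J_k,S}^* B_{J_k,S}}{\pi_{J_k}} - P_S,
$$
and set $y_k := \frac{1}{m} X_k w$. The vectors $y_k$ are independent, and since $\Ebb\!\left[ \frac{B_{J_k,S}^* B_{J_k,S}}{\pi_{J_k}} \right] = \sum_{k=1}^M B_{k,S}^* B_{k,S} = P_S$ (the isotropy condition \eqref{isotropyCondition} restricted to $S$), we have $\Ebb y_k = 0$.

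Next I would extract the two constants required by Theorem \ref{theo:VectBern2} directly from Lemma \ref{lem:localIsometry}. For the uniform bound, the estimate $\|X_k\|_{2\rightarrow 2} \leq \Theta$ gives $\|y_k\|_2 \leq \frac{\Theta}{m}\|w\|_2 =: K$. For the variance, using that each $X_k$ is self-adjoint together with the inequality $\Ebb X_k^2 \preceq \Theta P_S$ already proved in Lemma \ref{lem:localIsometry}, one obtains
$$
\Ebb\|y_k\|_2^2 = \frac{1}{m^2}\, w^* \Ebb\!\left[X_k^2\right] w \leq \frac{\Theta}{m^2}\, w^* P_S w = \frac{\Theta}{m^2}\|w_S\|_2^2 \leq \frac{\Theta}{m^2}\|w\|_2^2,
$$
so that $\sum_{k=1}^m \Ebb\|y_k\|_2^2 \leq \frac{\Theta}{m}\|w\|_2^2 =: \sigma^2$ is an admissible choice.

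Finally I would apply Theorem \ref{theo:VectBern2} with $t' = t\|w\|_2$. The admissibility requirement $t' \leq \sigma^2/K$ reads $t\|w\|_2 \leq \|w\|_2$, i.e.\ $t \leq 1$, which is precisely the hypothesis of the lemma. The vector Bernstein bound then gives probability at most $\exp\!\left( -\frac{t'^2}{8\sigma^2} + \frac14 \right)$, and substituting $t' = t\|w\|_2$ and $\sigma^2 = \Theta\|w\|_2^2/m$ yields $\frac{t'^2}{8\sigma^2} = \frac{mt^2}{8\Theta}$, which is exactly \eqref{lem:E2eqt}. There is essentially no genuine obstacle: the whole argument is a direct transcription of the two scalar/matrix moment bounds of Lemma \ref{lem:localIsometry} into the vector setting, and the single point deserving care is verifying that the range restriction $t \leq 1$ coincides with the admissibility condition $t' \leq \sigma^2/K$ of the vector Bernstein inequality.
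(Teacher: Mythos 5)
Your proposal is correct and follows essentially the same route as the paper: the same decomposition into i.i.d.\ zero-mean vectors $y_k = \frac{1}{m}\bigl(\frac{B_{J_k,S}^* B_{J_k,S}}{\pi_{J_k}} - P_S\bigr)w$, the same application of the Vector Bernstein Inequality (Theorem \ref{theo:VectBern2}) with $K = \Theta\|w\|_2/m$ and $\sigma^2 = \Theta\|w\|_2^2/m$, and the same observation that the admissibility condition $t\|w\|_2 \leq \sigma^2/K$ is exactly $t\leq 1$. The only (harmless) difference is that you import the bounds $\|X_k\|_{2\rightarrow 2}\leq \Theta$ and $\Ebb X_k^2 \preceq \Theta P_S$ from Lemma \ref{lem:localIsometry} instead of re-deriving the moment estimates by expanding the quadratic forms as the paper does; this yields the identical constants.
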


\begin{proof}
Without loss of generality we may assume that $\| \wb \|_2 =1$. We remark that
$$
\left( A_S^* A_S - \Id_s \right) w_S =  \frac{1}{m} \sum_{k=1}^m \left( \frac{ B_{J_k,S}^* B_{J_k,S}}{\pi_{J_k}}  - P_S\right) w = \frac{1}{m} \sum_{k=1}^m  y_k,  
$$
where $ y_k = \left( \frac{ B_{J_k,S}^* B_{J_k,S}}{\pi_{J_k}}  - P_S\right) w$ is a random vector with zero mean.  Simple calculations yield that
\begin{align*}
 \left\| \frac{1}{m} y_k \right\|_2^2 &= \frac{1}{m^2} \left(  w^* \left( \frac{ B^*_{J_k,S} B_{J_k,S}}{\pi_{J_k}}  \right)^2 w   -2 w^* \frac{ B^*_{J_k,S} B_{J_k,S}}{\pi_{J_k}}  w +  w^*w \right) \\
 &\leq \frac{1}{m^2} \left( \Lambda w^*  \frac{ B^*_{J_k,S} B_{J_k,S}}{\pi_{J_k}}  w   -2 w^* \frac{ B^*_{J_k,S} B_{J_k,S}}{\pi_{J_k}}  w + 1\right) \\
 &= \frac{1}{m^2} \left( \left( \Lambda -2 \right) w^* \frac{ B^*_{J_k,S} B_{J_k,S}}{\pi_{J_k}}  w + 1  \right) \\
 &\leq \frac{1}{m^2} \left( \left( \Lambda -2 \right) \Lambda \| w\|_2^2  + 1  \right) = \frac{1}{m^2} \left( \left( \Lambda -2 \right) \Lambda+ 1  \right)  \\
 &\leq \frac{1}{m^2}  \left( \Lambda- 1  \right)^2 \leq \frac{1}{m^2}  \Lambda^2 \leq \frac{1}{m^2} \Theta^2.
\end{align*}
Now, let us define 
$
Z = \left\| \frac{1}{m} \sum_{k=1}^m  y_k\right\|_2.
$
By independence of the random vectors $y_{k}$, it follows that
\begin{align*}
\Ebb \left[ Z^2 \right] &= \frac{1}{m} \Ebb \left\|   y_1 \right\|^2_2 = \frac{1}{m} \Ebb \left[ \left\langle  \frac{ B^*_{J,S} B_{J,S}}{\pi_{J}} w ,   \frac{ B^*_{J,S} B_{J,S}}{\pi_{J}} w \right\rangle - 2 \left\langle \frac{ B^*_{J,S} B_{J,S}}{\pi_{J}}  w, w \right\rangle + \left\langle w , w \right\rangle   \right] \\
&= \frac{1}{m} \Ebb \left[ \left\langle \left(  \frac{ B^*_{J,S} B_{J,S}}{\pi_{J}}  \right)^2 w , w \right\rangle - 2 \frac{\left\|    B_{J,S}  w \right\|_2^2}{\pi_J} + 1   \right]. 
\end{align*}
To bound the first term in the above equality, one can write
\begin{align*} 
\Ebb &\left[ \left\langle \left(  \frac{ B^*_{J_1,S} B_{J_1,S}}{\pi_{J_1}} \right)^2 w ,w \right\rangle \right] = \left\langle \Ebb\left[\left(  \frac{ B^*_{J_1,S} B_{J_1,S}}{\pi_{J_1}} \right)^2\right] w , w \right\rangle \\
&\leq \Lambda \left\langle \Ebb\left[\left(  \frac{ B^*_{J_1,S} B_{J_1,S}}{\pi_{J_1}}  \right)\right] w , w \right\rangle \leq   \Lambda \| w\|_2^2 \leq  \Theta. 
\end{align*}
One immediately has that
$
\Ebb \frac{\left\| B_{J,S}  w  \right\|_2^2}{\pi_k}  =  \| w \|_2^2 = 1.
$
Therefore, one finally obtains that
$$  \Ebb \left[ Z^2 \right] \leq \frac{\Theta -1}{m} \leq \frac{\Theta}{m}.
$$
Using the above upper bounds, namely $ \left\| \frac{1}{m} y_k  \right\|_2 \leq  \frac{\Theta }{m} $ and $\Ebb \left[ Z^2 \right] \leq \frac{\Theta}{m}$, the result of the lemma is thus a consequence of the Bernstein's inequality for random vectors  (see Theorem \ref{theo:VectBern2}), which completes the proof.
 \end{proof}

\begin{lemme}
\label{lem:distortionInf2}
Let $S \subset \{1, \hdots, n\}$, such that $|S|=s$. Let $v$ be a vector of $\Cbb^n$. Then we have
\begin{align}
\tag{E3}
\label{lem:E3eq}
\Pbb\left( \left\| A_{S^c}^* A_S v \right\|_\infty \geq t \| v\|_\infty \right) \leq 4n \exp\left( -\frac{m t^2/4}{\Upsilon + \Theta t/3} \right).
\end{align}
\end{lemme}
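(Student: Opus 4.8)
The plan is to reduce the statement to a coordinatewise application of the scalar Bernstein inequality (Theorem \ref{theo:scalBern}). By homogeneity I may assume $\|v\|_\infty = 1$, the case $v=0$ being trivial. Writing out the sensing matrix, one has
$$A_{S^c}^* A_S v = \frac{1}{m}\sum_{\ell=1}^m \frac{P_{S^c} B_{K_\ell}^* B_{K_\ell,S}\, v}{\pi_{K_\ell}},$$
so that $\|A_{S^c}^* A_S v\|_\infty = \max_{i\in S^c} |(A_{S^c}^* A_S v)_i|$, and for each fixed $i\in S^c$ I would study the scalar sum $\frac1m\sum_\ell Y_\ell$ with $Y_\ell = \pi_{K_\ell}^{-1}\, e_i^* B_{K_\ell}^* B_{K_\ell,S}\, v$.

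First I would check that the $Y_\ell$ are zero-mean for $i\in S^c$: using the isotropy identity $\sum_k B_k^* B_k = \Id$ from \eqref{isotropyCondition}, the expectation telescopes to $\Ebb Y_\ell = e_i^* P_S v$, which vanishes because $i\notin S$. Next I would extract the two Bernstein inputs directly from Definition \ref{def:quantities}. The almost-sure bound $|Y_\ell|\le \| e_i^* B_{K_\ell}^* B_{K_\ell,S}\|_1/\pi_{K_\ell}\le \Theta$ follows from $|e_i^* M v|\le \|e_i^* M\|_1\|v\|_\infty$ together with the definition \eqref{ineq:Theta} of $\Theta$; this supplies $K=\Theta$. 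The second moment $\Ebb|Y_\ell|^2 = \sum_{k} \pi_k^{-1}|e_i^* B_k^* B_{k,S} v|^2 \le \Upsilon$ is exactly the quantity appearing in \eqref{ineq:Upsi}, so the variance proxy is $\sigma^2 = m\Upsilon$. The point worth stressing is that $\Theta$ and $\Upsilon$ are tailored precisely to control, respectively, the uniform bound and the variance entering the Bernstein denominator $\Upsilon+\Theta t/3$.

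Because the entries are complex, I would apply the real scalar Bernstein inequality to $\mathrm{Re}\,Y_\ell$ and $\mathrm{Im}\,Y_\ell$ separately, each of which is zero-mean, bounded by $\Theta$, and of variance at most $\Upsilon$. Using the elementary implication $|z|\ge r \Rightarrow |\mathrm{Re}\,z|\ge r/\sqrt2$ or $|\mathrm{Im}\,z|\ge r/\sqrt2$ with $r=mt$, and then the crude estimate $\Theta t/(3\sqrt2)\le \Theta t/3$ in the denominator, each of the two parts contributes $2\exp\!\big(-\tfrac{mt^2/4}{\Upsilon+\Theta t/3}\big)$, giving $4\exp(\cdots)$ for a single coordinate $i$. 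A final union bound over the at most $n$ indices $i\in S^c$ yields the factor $4n$ and the claimed inequality \eqref{lem:E3eq}.

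The only genuinely delicate points I anticipate are bookkeeping rather than conceptual: keeping the factor $\sqrt2$ from the complex-to-real reduction on the correct (favourable) side of the inequality, and recognizing that the seemingly opaque definition of $\Upsilon$ is exactly the per-sample second moment $\Ebb|Y_\ell|^2$. Everything else is a routine substitution into Theorem \ref{theo:scalBern}.
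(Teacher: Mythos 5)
Your proposal is correct and follows essentially the same route as the paper's proof: fix $i\in S^c$, verify the summands are centered via isotropy, bound them almost surely by $\Theta$ (H\"older) and their second moments by $\Upsilon$, apply the real scalar Bernstein inequality to real and imaginary parts with the $t/\sqrt{2}$ split, and finish with a union bound over $S^c$. The bookkeeping you flag (the $\sqrt{2}$ landing on the favourable side of the denominator, yielding $\Upsilon+\Theta t/(3\sqrt{2})\leq \Upsilon+\Theta t/3$) is exactly how the paper obtains the stated constant.
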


\begin{proof}
Suppose without loss of generality that $\| v\|_\infty=1$.
Then,
\begin{align*} 
 \left\| A_{S^c}^* A_S v \right\|_\infty &= \max_{i \in S^c} \left| \left\langle e_i ,  A^* A_S v \right\rangle \right|  = \max_{i \in S^c } \frac{1}{m} \left| \sum_{k = 1}^m\left\langle e_i ,  \frac{B^*_{J_k} B_{J_k,S}}{\pi_{J_k}}  v \right\rangle \right|.
\end{align*}
Let us define $Z_k = \left\langle e_i ,  \frac{B^*_{J_k} B_{J_k,S}}{\pi_{J_k}}  v \right\rangle$. Note that $\Ebb  Z_k =0$, since for $i\in S^c$, $\Ebb \left\langle e_i ,  \frac{B^*_{J_k} B_{J_k,S}}{\pi_{J_k}} v \right\rangle= e_i^* \sum_{k=1}^M \pi_k \frac{B^*_{k} B_{k,S}}{\pi_{k}} v = e_i^* P_S v = 0$.
From Holder's inequality, we get
\begin{align*}
|Z_k | &= \left|  \left\langle e_i ,  \frac{B^*_{J_k} B_{J_k,S}}{\pi_{J_k}}  v \right\rangle \right|  = \left| e_i^*   \frac{B^*_{J_k} B_{J_k,S}}{\pi_{J_k}}   v  \right|  \leq  \max_{j\in S^c \atop 1\leq k \leq M}    \frac{1}{\pi_k}\left\|B_{k,S}^* B_k e_j \right\|_1  \|v\|_\infty \\
&\leq  \max_{j\in S^c \atop 1\leq k \leq M}  \frac{1}{\pi_k}  \left\|e_j^* B_k^*B_{k,S} \right\|_1 = \Theta.
\end{align*}
Furthermore,
\begin{align*} 
\Ebb | Z_k |^2 &=   \Ebb \left| \left\langle e_i ,  \frac{B^*_{J_k} B_{J_k,S}}{\pi_{J_k}}  v \right\rangle \right|^2 = \sum_{\ell=1}^M \frac{\left| e_i^* B^*_{\ell} B_{\ell,S} v \right|^2}{\pi_\ell}\\
 &\leq  \Upsilon.
\end{align*}
Therefore $\sum_{k=1}^m \Ebb | Z_k |^2 \leq  m \Upsilon$.
Using real-valued Bernstein's inequality \ref{theo:scalBern} in the case of complex random variables, we obtain
\begin{align*}
\Pbb &\left(  \frac{1}{m}\left| \sum_{k = 1}^m \left\langle e_i ,  \frac{B^*_{J_k} B_{J_k,S}}{\pi_{J_k}}  v \right\rangle \right| \geq t \right) \\
&\leq  \Pbb \left( \frac{1}{m}\left|  \sum_{k = 1}^m \text{Re} \left\langle e_i ,  \frac{B^*_{J_k} B_{J_k,S}}{\pi_{J_k}}  v \right\rangle \right| \geq t / \sqrt{2}\right) ...\\
& \qquad + \Pbb \left( \frac{1}{m}\left|  \sum_{k = 1}^m \text{Im} \left\langle e_i ,  \frac{B^*_{J_k} B_{J_k,S}}{\pi_{J_k}}  v \right\rangle \right| \geq t /\sqrt{2} \right)
\\
&\leq 4 \exp\left( -\frac{ m t^2/4}{\Upsilon + \Theta t/3} \right).
\end{align*}
Taking the union bound over $i\in S^c$ completes the proof.
\end{proof}

\begin{lemme}
\label{lem:InfToInf}
Let $S \subset \{1, \hdots, n\}$, such that $|S|=s$. Suppose that $\Theta\geq 1$.Let $v$ be a vector of $\Cbb^n$. Then we have
\begin{align}
\tag{E4}
\label{lem:E4eq}
\Pbb\left( \left\| \left( A_{S}^* A_S - P_S  \right) v \right\|_\infty \geq t \| v\|_\infty \right) \leq 4s \exp\left( -\frac{m t^2/4}{\Upsilon + \Theta t/3} \right).
\end{align}
\end{lemme}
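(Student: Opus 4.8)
The plan is to mirror the proof of Lemma~\ref{lem:distortionInf2} almost verbatim, since the operator $A_S^* A_S - P_S$ differs from $A_{S^c}^* A_S$ only in which coordinates survive and in the presence of the centering term $P_S$. The first observation I would make is that, because $A_S^* A_S = P_S A^* A P_S$ and $P_S$ are both supported on $S\times S$, so is their difference; indeed $A_S^* A_S - P_S = P_S(A^* A - \Id)P_S$. Consequently $(A_S^* A_S - P_S)v$ has every entry outside $S$ equal to zero, and
\[
\bignorm{(A_S^* A_S - P_S)v}_\infty = \max_{i\in S}\Bigbars{e_i^*(A_S^* A_S - P_S)v}.
\]
This is exactly where the factor $4s$ (rather than the $4n$ of Lemma~\ref{lem:distortionInf2}) originates: the concluding union bound runs only over the $s$ indices of $S$.

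Next, assuming $\norm{v}_\infty = 1$, I would express each coordinate as a normalized sum of i.i.d.\ contributions, $e_i^*(A_S^* A_S - P_S)v = \frac{1}{m}\sum_{k=1}^m Z_k$ with
\[
Z_k = e_i^*\Bigpth{\frac{B_{J_k,S}^* B_{J_k,S}}{\pi_{J_k}} - P_S}v .
\]
The isotropy condition~\eqref{isotropyCondition} yields $\Ebb\bigpth{\frac{B_{J_k,S}^* B_{J_k,S}}{\pi_{J_k}}} = P_S\,\Ebb\bigpth{\frac{B_{J_k}^* B_{J_k}}{\pi_{J_k}}}\,P_S = P_S$, so that $\Ebb Z_k = 0$, as required to launch a Bernstein argument.

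The two ingredients feeding the scalar Bernstein inequality (Theorem~\ref{theo:scalBern}) are the variance and the almost-sure bound. Writing $Y_k = e_i^*\frac{B_{J_k,S}^* B_{J_k,S}}{\pi_{J_k}}v$, one has $\Ebb|Z_k|^2 \le \Ebb|Y_k|^2 = \sum_{\ell=1}^M \frac{1}{\pi_\ell}\bigbars{e_i^* B_{\ell,S}^* B_{\ell,S}v}^2$. The key algebraic point is that for $i\in S$ we have $e_i^* B_{\ell,S}^* = e_i^* P_S B_\ell^* = e_i^* B_\ell^*$, so this collapses onto $\sum_\ell \frac{1}{\pi_\ell}\bigbars{e_i^* B_\ell^* B_{\ell,S}v}^2 \le \Upsilon(S,\pi)$ by the very definition~\eqref{ineq:Upsi}. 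For the almost-sure bound, the same identity together with Hölder's inequality and the definition~\eqref{ineq:Theta} of $\Theta$ gives $|Y_k| \le \frac{1}{\pi_{J_k}}\norm{e_i^* B_{J_k}^* B_{J_k,S}}_1 \le \Theta$, while the centering contributes $|\Ebb Y_k| = |v_i| \le 1 \le \Theta$, whence $|Z_k| \le 2\Theta$. With an almost-sure bound of order $\Theta$ and total variance $\sigma^2 \le m\Upsilon$, Theorem~\ref{theo:scalBern}, applied to the real and imaginary parts separately (each with threshold $t/\sqrt2$) and followed by a union bound over $i\in S$, delivers the tail $4s\exp\pth{-\frac{mt^2/4}{\Upsilon + \Theta t/3}}$.

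I do not expect a deep obstacle here: the argument is a controlled adaptation of Lemma~\ref{lem:distortionInf2}. The two points that demand genuine care are (i) recognizing the $S\times S$ support of $A_S^* A_S - P_S$, which replaces $n$ by $s$, and (ii) making the variance estimate land exactly on $\Upsilon$ despite the mismatch between $B_{\ell,S}^*$ and $B_\ell^*$ in~\eqref{ineq:Upsi} — resolved by the identity $e_i^* P_S = e_i^*$, valid precisely because $i$ now ranges over $S$. Handling the extra centering term $P_S$ in the almost-sure bound is the only estimate that was absent in the $S^c$ case (where $e_i^* P_S = 0$), and it costs at most a harmless constant.
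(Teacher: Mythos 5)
Your proposal follows the paper's own proof essentially step for step: same reduction to coordinates $i\in S$ (hence the factor $s$), same decomposition $e_i^*(A_S^*A_S-P_S)v=\frac1m\sum_k Z_k$, same mean-zero observation via isotropy, same variance bound landing on $\Upsilon$ through the identity $e_i^*P_S=e_i^*$ for $i\in S$, same real/imaginary split of the scalar Bernstein inequality, and the same union bound. The one place where you deviate is the almost-sure bound, and that is precisely where your constants fail to produce the stated inequality. You bound $|Z_k|\le|Y_k|+|\Ebb Y_k|\le\Theta+1\le2\Theta$. Feeding $K=2\Theta$ into Theorem \ref{theo:scalBern} at threshold $mt/\sqrt2$ gives the exponent
\[
-\frac{(mt/\sqrt2)^2/2}{m\Upsilon+2\Theta\,(mt/\sqrt2)/3}=-\frac{mt^2/4}{\Upsilon+\sqrt2\,\Theta t/3},
\]
and since $\sqrt2/3>1/3$ this is a strictly weaker tail than the right-hand side of \eqref{lem:E4eq}: the bound $4s\exp\bigl(-\frac{mt^2/4}{\Upsilon+\Theta t/3}\bigr)$ you announce at the end does not follow from your estimates. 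Even the sharper choice $K=\Theta+1$ only recovers the stated constant when $\Theta+1\le\sqrt2\,\Theta$, i.e.\ $\Theta\ge1+\sqrt2$, whereas the lemma assumes only $\Theta\ge1$. So, as written, you prove Lemma \ref{lem:InfToInf} with a degraded constant in the denominator, not the inequality \eqref{lem:E4eq} itself.

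The paper avoids this loss by never splitting off the centering term: it bounds the centered matrix as a whole,
\[
|Z_k|\le\Bigl\|\frac{B_{J_k,S}^*B_{J_k,S}}{\pi_{J_k}}-P_S\Bigr\|_{\infty\rightarrow\infty}\le\max(\Theta-1,\,1)\le\Theta,
\]
the last step being exactly where the hypothesis $\Theta\ge1$ is consumed (this is its only role in the lemma; you noticed the hypothesis but used it only to write $\Theta+1\le2\Theta$). The intuition is that subtracting $P_S$ alters the diagonal entry of each row indexed by $S$ rather than adding mass to it: when the diagonal entry of $\frac{B_{J_k,S}^*B_{J_k,S}}{\pi_{J_k}}$ is at least $1$, the row $\ell_1$-norm of the difference is at most $\Theta-1$, and the remaining case is handled by the $\max$ with $1$. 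With $K=\Theta$, Bernstein yields the denominator $\Upsilon+\Theta t/(3\sqrt2)\le\Upsilon+\Theta t/3$, and \eqref{lem:E4eq} follows. The fix to your argument is therefore local — replace the triangle inequality by a direct row-wise estimate of $\bigl\|e_i^*\bigl(\frac{B_{J_k,S}^*B_{J_k,S}}{\pi_{J_k}}-P_S\bigr)\bigr\|_1$ — and the discrepancy would only perturb the universal constants downstream (the ``$101$'' in the proof of Theorem \ref{thm:recovery}); but the distinction matters if the lemma is to be quoted with the constants it states.
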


\begin{proof}
Suppose without loss of generality that $\| v\|_\infty=1$.
Then,
\begin{align*} 
 \left\| \left( A_{S}^* A_S - P_S  \right) v \right\|_\infty &= \max_{i \in S} \left|\left\langle e_i ,  \left( A_S^* A_S - P_S  \right) v \right\rangle  \right|= \max_{i \in S } \frac{1}{m} \left| \sum_{k = 1}^m \left\langle e_i ,  \left( \frac{B^*_{J_k,S} B_{J_k,S}}{\pi_{J_k}} -P_S \right)  v \right\rangle \right|.
\end{align*}
Let us define $Z_k = \left\langle e_i ,  \left( \frac{B^*_{J_k,S} B_{J_k,S}}{\pi_{J_k}} -P_S\right)  v \right\rangle$. Note that $\Ebb Z_k =0$.
From Holder's inequality, we get
\begin{align*}
|Z_k | &= \left| \left\langle e_i ,  \left( \frac{B^*_{J_k,S} B_{J_k,S}}{\pi_{J_k}} -P_S\right)  v \right\rangle \right|  
\leq \left\| \frac{B^*_{J_k,S} B_{J_k,S}}{\pi_{J_k}} -P_S \right\|_{\infty \rightarrow \infty} {\leq \max(\Theta -1 , 1) \leq  \Theta,}
\end{align*}
since $\| B_{k,S}^* B_{k,S} \|_{\infty \rightarrow \infty} \leq \| B_{k}^* B_{k,S} \|_{\infty \rightarrow \infty}$, and using the same argument as in Lemma \ref{lem:distortionInf2}.
Furthermore,
\begin{align*}
\Ebb | Z_k |^2 &=  \Ebb \left| \left\langle e_i , \left( \frac{B^*_{J_k,S} B_{J_k,S}}{\pi_{J_k}} -P_S\right)  v \right\rangle \right|^2   \\
&= \Ebb \left| \left\langle e_i ,   \frac{B^*_{J_k,S} B_{J_k,S}}{\pi_{J_k}}   v \right\rangle \right|^2 -  \left\langle e_i , v \right\rangle \Ebb \left\langle e_i ,   \frac{B^*_{J_k,S} B_{J_k,S}}{\pi_{J_k}}  v \right\rangle -  \left\langle e_i , v \right\rangle^* \Ebb \left\langle e_i ,   \frac{B^*_{J_k,S} B_{J_k,S}}{\pi_{J_k}}  v \right\rangle \\
& \qquad +  \left|\left\langle e_i , v \right\rangle\right|^2\\
&= \Ebb \left| \left\langle e_i ,   \frac{B^*_{J_k,S} B_{J_k,S}}{\pi_{J_k}}  v \right\rangle \right|^2 - \left| \left\langle e_i , v \right\rangle\right|^2 \leq \Ebb \left|\left\langle e_i ,  
\frac{B^*_{J_k,S} B_{J_k,S}}{\pi_{J_k}}  v \right\rangle\right|^2 =\sum_{\ell=1}^M \frac{\left| e_i^*  {B^*_{\ell,S} B_{\ell,S}} v \right|^2}{\pi_\ell} \\
&\leq \Upsilon.
\end{align*}
Therefore, $\sum_{k=1}^m \Ebb |Z_k|^2  \leq m\Upsilon$, and using real-valued Bernstein's inequality \ref{theo:scalBern} in the case of complex random variables, we obtain
\begin{align*}
\Pbb &\left( \frac{1}{m}\left| \sum_{k = 1}^m \left\langle e_i ,  \left( \frac{B^*_{J_k,S} B_{J_k,S}}{\pi_{J_k}} -P_S\right) v \right\rangle \right| \geq t \right) \\
&\leq  \Pbb \left( \frac{1}{m} \left|  \sum_{k = 1}^m \text{Re} \left\langle e_i ,  \left( \frac{B^*_{J_k,S} B_{J_k,S}}{\pi_{J_k}} -P_S\right)  v \right\rangle \right| \geq t / \sqrt{2}\right)  + \Pbb \left( \frac{1}{m} \left|  \sum_{k = 1}^m \text{Im} \left\langle e_i , \left( \frac{B^*_{J_k,S} B_{J_k,S}}{\pi_{J_k}} -P_S\right) v \right\rangle \right| \geq t /\sqrt{2} \right)
\\
&\leq 4 \exp\left( -\frac{mt^2/4}{\Upsilon + \Theta t/3} \right).
\end{align*}
Taking the union bound over $i\in S$ completes the proof.
\end{proof}

\begin{lemme}
\label{lem:offSupportCoherence}
Let $S$ be a subset of $\{1, \hdots , n \}$. Then, for any $0 \leq t \leq  m$, one has that
\begin{align}
\tag{E5}
\label{lem:E5eq}
\Pbb & \left( \max_{i\in {S^c}} \left\| A_S^* A e_{i} \right\|_{2} \geq   t \right)  \leq n \exp \left( - \frac{\left( \sqrt{m} t / \sqrt{\Theta}  - 1\right)^2}{4} \right).
\end{align}
\end{lemme}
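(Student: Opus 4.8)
The plan is to fix an index $i\in S^c$, control $\|A_S^* A e_i\|_2$ with the vector Bernstein inequality (Theorem \ref{theo:VectBern2}), and then take a union bound over the at most $n$ indices of $S^c$. First I would rewrite the quantity as a normalized sum of i.i.d.\ centered vectors. Since $A_S^* A = P_S A^* A$ and $P_S B_k^* = B_{k,S}^*$, I have
\[
A_S^* A e_i = \frac{1}{m}\sum_{\ell=1}^m y_\ell,\qquad y_\ell := \frac{1}{\pi_{K_\ell}}\, B_{K_\ell,S}^* B_{K_\ell} e_i,
\]
where the $y_\ell$ are i.i.d.\ copies of $y:=\frac{1}{\pi_K}B_{K,S}^* B_K e_i$. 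The isotropy identity \eqref{isotropyCondition} gives $\Ebb\, y=\sum_{k=1}^M B_{k,S}^* B_k e_i = P_S\big(\sum_{k=1}^M B_k^* B_k\big)e_i = P_S e_i = 0$ because $i\in S^c$, so the summands are centered and Theorem \ref{theo:VectBern2} applies once I supply an almost sure bound $K$ on $\|y\|_2$ and a variance proxy $\sigma^2\ge\sum_\ell \Ebb\|y_\ell\|_2^2$.

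For the almost sure bound I would use $\|\cdot\|_2\le\|\cdot\|_1$ together with the definition \eqref{ineq:Theta} of $\Theta$: since $\|B_{k,S}^* B_k e_i\|_1 = \|e_i^* B_k^* B_{k,S}\|_1 \le \|B_k^* B_{k,S}\|_{\infty\rightarrow\infty}\le \pi_k\Theta$, one gets $\|y\|_2\le\|y\|_1\le\Theta$ almost surely, so I may take $K=\Theta$.

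The crux is the variance estimate, and here a naive bound is too lossy: using only $\|y\|_2\le\Theta$ would give $\Ebb\|y\|_2^2\le\Theta^2$, hence $\sigma^2\approx m\Theta^2$ and a spurious factor $\Theta$ (rather than $\sqrt\Theta$) inside the exponent. To obtain the sharp $\sqrt\Theta$ I would instead combine submultiplicativity of the operator norm with the isotropy condition. Writing $\|B_{k,S}^* B_k e_i\|_2\le \|B_{k,S}\|_{2\rightarrow2}\,\|B_k e_i\|_2$ and using $\|B_{k,S}\|_{2\rightarrow2}^2=\|B_{k,S}^* B_{k,S}\|_{2\rightarrow2}\le\pi_k\Lambda$, I find
\[
\Ebb\|y\|_2^2=\sum_{k=1}^M\frac{1}{\pi_k}\|B_{k,S}^* B_k e_i\|_2^2\le \Lambda\sum_{k=1}^M\|B_k e_i\|_2^2=\Lambda\, e_i^*\Big(\sum_{k=1}^M B_k^* B_k\Big)e_i=\Lambda\le\Theta,
\]
where the last identity is again \eqref{isotropyCondition} and $\Lambda\le\Theta$ was recorded at the start of this section. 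Thus $\sum_{\ell=1}^m\Ebb\|y_\ell\|_2^2\le m\Theta$, and I may take $\sigma^2=m\Theta$.

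Finally, applying Theorem \ref{theo:VectBern2} to $Z=\big\|\sum_{\ell=1}^m y_\ell\big\|_2$ with $K=\Theta$ and $\sigma^2=m\Theta$, and evaluating at the threshold $mt$ (admissible in the range of $t$ allowed by the hypothesis $0<mt\le\sigma^2/K$ of that theorem), yields
\[
\Pbb\big(\|A_S^* A e_i\|_2\ge t\big)=\Pbb(Z\ge mt)\le \exp\!\left(-\frac{(mt/\sigma-1)^2}{4}\right)=\exp\!\left(-\frac{(\sqrt m\,t/\sqrt\Theta-1)^2}{4}\right).
\]
A union bound over the $|S^c|\le n$ indices $i$ then produces the factor $n$ and gives \eqref{lem:E5eq}. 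I expect the variance estimate of the third paragraph to be the only delicate point: the need to route through $\Lambda$ and the isotropy-driven identity $\sum_k\|B_k e_i\|_2^2=1$, rather than through the crude almost sure bound, is exactly what prevents the loss of a factor $\Theta$; everything else is routine once the summands are identified and centered.
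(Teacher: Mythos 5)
Your proof is correct and follows essentially the same route as the paper's: the same decomposition into i.i.d.\ centered vectors $\frac{1}{\pi_{K_\ell}}B_{K_\ell,S}^*B_{K_\ell}e_i$ (centered via isotropy since $i\in S^c$), the same almost-sure bound $K=\Theta$ through $\|\cdot\|_2\le\|\cdot\|_1$ and the $\infty\to\infty$ norm, the same variance bound $\sigma^2\le m\Lambda\le m\Theta$ obtained by factoring $\|B_{k,S}\|_{2\to 2}^2\le\pi_k\Lambda$ and invoking $\sum_k\|B_ke_i\|_2^2=1$, followed by the vector Bernstein inequality and a union bound over $S^c$. The point you single out as delicate (routing the variance through $\Lambda$ and isotropy rather than the crude bound $\Theta^2$) is indeed exactly the step the paper takes.
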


\begin{proof}
Let us fix some $i \in  {S^c}$. For $k=1, \hdots, M$, we define the random vector $$x_k :=  \frac{B^*_{J_k,S} B_{J_k}}{\pi_{J_k}} e_{i} .$$
Then, since  $i \in S^c$ one easily gets $\Ebb x_k = \sum_{\ell=1}^M B^*_{\ell,S} B_{\ell}e_{i} = \sum_{\ell=1}^M \left( B_{\ell} P_S\right)^* B_{\ell}e_{i}  = P_S \sum_{\ell=1}^M B^*_{\ell} B_{\ell}e_{i} = P_S e_i =0$ (note that $P_S$ is self-adjoint). In addition, we can write
\begin{align*}
\left\| A_S^* A e_{i} \right\|_{2}  = \left\|  \frac{1}{m} \sum_{k=1}^m \frac{B^*_{J_k,S} B_{J_k}}{\pi_{J_k}} e_{i} \right\|_{2 }   =   \left\| \frac{1}{m}\sum_{k=1}^M x_k \right\|_{2}.
\end{align*}
Then,
\begin{align*}
\|x_k\|_{2 }  &= \left\| 
\frac{B^*_{J_k,S} B_{J_k}}{\pi_{J_k}}e_{i} \right\|_{2}  \leq \left\| 
\frac{B^*_{J_k,S} B_{J_k}}{\pi_{J_k}}e_{i} \right\|_{1}  =  \left\| e_i^* \frac{B_{J_k}^* B_{J_k,S}}{\pi_{J_k}}\right\|_1 \\ 
&\leq \frac{1}{\pi_{J_k}} \left\| B_{J_k}^* B_{J_k,S} \right\|_{\infty\rightarrow\infty} \leq \Theta.
\end{align*}
Furthermore, one has that
\begin{align*}
\Ebb \left\| x_k \right\|_2^2 &=  \Ebb  \left\|\frac{B^*_{J_k,S} B_{J_k}}{\pi_{J_k}} e_i \right\|_2^2  \leq \Ebb  {\left\| \frac{B_{J_k,S}}{\sqrt{\pi_{J_k}}} \right\|_{2\rightarrow 2}^2} \left\|\frac{B_{J_k}}{\sqrt{\pi_{J_k}}} e_i \right\|_2^2\leq    \Lambda \Ebb \left\|\frac{B_{J_k}}{\sqrt{\pi_{J_k}}} e_i \right\|_2^2 = \Lambda \| e_i\|_2^2 = \Lambda, \\
\sum_{k=1}^m \Ebb \left\| x_k \right\|_2^2 &\leq m\Lambda  \leq m\Theta.
\end{align*}
Hence, using the above upper bounds, it follows from Bernstein's inequality for random vectors (see Theorem \ref{theo:VectBern2}) that
\begin{align*}
\Pbb \left( \left\| A_S^* A e_{i} \right\|_{2}  \geq  t\right) \leq  \exp \left( - \frac{\left(\sqrt{m} t / \sqrt{\Theta} - 1\right)^2}{4} \right),
\end{align*}
Finally, Inequality \eqref{lem:E4eq} follows from a union bound over  $i\in {S^c}$, which completes the proof.
\end{proof}

\section{Proof of results in Applications}

\subsection{Proof of Corollary \ref{corol:isolatedUniform}}
\label{app:proofIsolatedCandes}

The proof relies on the evaluation of $\Theta$ and $\Upsilon$ in the case of isolated measurements. 
In this case, we have $n$ blocks composed of isolated measurements. Then, each block corresponds to one of the rows $(a_k^*)_{1\leq k \leq n}$ of $A_0$. Recall that $\|a_k a_{k,S}^* \|_{\infty \rightarrow \infty} = \max_{1\leq i \leq n} \sup_{\| v \|_\infty \leq 1} | e_i^* a_k a_{k,S}^* v|$, so the norm $\| a_k a_{k,S}^* \|_{\infty \rightarrow \infty}$ is the maximum $\ell_1$-norm of the rows of the matrix $a_k a_{k,S}^*$. Therefore, the quantities in Definition \ref{def:quantities} can be rewritten as follows
\begin{align}
\notag
\Theta(S,\pi) &:= \max_{1\leq k \leq n} \frac{\| a_k a_{k,S}^* \|_{\infty \rightarrow \infty}}{\pi_k} \\
\label{eq:thetaIso}
&= \max_{1\leq k \leq n} \frac{\| a_k \|_{\infty} \| a_{k,S} \|_1}{\pi_k} \\
\notag
&\leq s \cdot \max_{1\leq k \leq n} \frac{\| a_k \|_{\infty}^2}{\pi_k}, \\
\label{eq:UpsiIso}
\Upsilon(S,\pi) &= \max_{1\leq i \leq n} \sup_{\|v \|_\infty\leq 1} \sum_{k=1}^n \frac{1}{\pi_k} |e_i^* a_k|^2 |a_{k,S}^*v|^2 \\
\notag
&\leq  \sup_{\|v \|_\infty\leq 1} \sum_{k=1}^n \frac{1}{\pi_k} \| a_k \|_\infty^2 |a_{k,S}^*v|^2 \\
\notag
& \leq \sup_{\|v \|_\infty\leq 1} \max_{1\leq \ell \leq n }    \frac{\| a_\ell \|_\infty^2}{\pi_\ell}  \sum_{k=1}^n  |a_{k,S}^*v|^2 =  \sup_{\|v \|_\infty\leq 1}  \| A_0 P_S v \|_2^2 \max_{1\leq \ell \leq n }    \frac{\| a_\ell \|_\infty^2}{\pi_\ell} \\
\notag
&=   \sup_{\|v \|_\infty\leq 1}  \|  P_S v \|_2^2 \max_{1\leq \ell \leq n }    \frac{\| a_\ell \|_\infty^2}{\pi_\ell} \\
\notag
&\leq s \cdot \max_{1\leq k \leq n} \frac{\| a_k \|_{\infty}^2}{\pi_k}.
\end{align}
Therefore we can choose $\Gamma(S,\pi) = s \cdot \max_{1\leq k \leq n} \frac{\| a_k \|_{\infty}^2}{\pi_k}$, and the result follows by Theorem \ref{thm:recovery}.

%
%

\subsection{Around Corollary \ref{corol:isolatedBlocksSparsity}}

\subsubsection{Proof of Corollary \ref{corol:isolatedBlocksSparsity}}
\label{app:proofIsolatedBlocksSparsity}

Again, this is all about evaluating $\Theta$ and $\Upsilon$ in this specific case. Concerning the evaluation of $\Upsilon$, we can use the expression \eqref{eq:UpsiIso}
to conclude that
$$ \Upsilon(S,\pi) =\max_{1\leq i \leq n} \sup_{\|v \|_\infty\leq 1} \sum_{k=1}^n \frac{1}{\pi_k} |e_i^* a_k| ^2 |a_{k,S}^*v|^2.
$$
To control $\Theta$, using \eqref{eq:thetaIso}, it suffices to write:
\begin{align*}
\Theta(S,\pi) &=  \max_{1\leq k \leq n} \frac{\| a_k a_{k,S}^* \|_{\infty \rightarrow \infty}}{\pi_k} \leq \max_{1\leq k \leq n} \frac{\| a_k \|_{\infty} \| a_{k,S} \|_1}{\pi_k} \\
&\leq \max_{1\leq k \leq n} \frac{\| a_k \|_{\infty}\sum_{\ell=1}^N \| a_{k,\Omega_\ell}\|_\infty s_\ell}{\pi_k}.
\end{align*}
By Theorem \ref{thm:recovery}, the two conditions 
\begin{eqnarray*}
m &\geq & C \left(\max_{1\leq k \leq n} \frac{\sum_{\ell=1}^N s_\ell \|a_{k,\Omega_\ell} \|_\infty \|a_k \|_\infty}{\pi_k}  \right)\ln(s) \ln \left( \frac{n }{\varepsilon}\right), \\
m &\geq & C \left(\max_{1\leq i \leq n}\sup_{\|v\|_\infty \leq 1 }\sum_{k=1}^n   \frac{1}{\pi_k}  \left| e_i^*  a_{k}\right|^2 \left| a_{k,S}^* v  \right|^2\right) \ln(s) \ln \left( \frac{ n }{\varepsilon}\right), 
\end{eqnarray*}
lead to the desired conclusion.

\subsubsection{Comparison of Corollary \ref{corol:isolatedBlocksSparsity} and the results in \cite{adcock2013breaking}.}
\label{app:comparisonAdcock}

Note that the sampling in \cite{adcock2013breaking} is based on Bernoulli drawings structured by level. Their results are then easily transposable to the case of i.i.d.\ sampling with constant probability by level.
The first condition on $m$ in Corollary \ref{corol:isolatedBlocksSparsity} is similar to condition (4.4) in Theorem 4.4 of \cite{adcock2013breaking}, since we recognize the term $\frac{\|a_{k,\Omega_\ell} \|_\infty \|a_k \|_\infty}{\pi_k}$ as the $(k,\ell)$-local coherence defined in \cite{adcock2013breaking}. 
Let us show that the second condition on $m$ is similar to equation (4.5) in \cite{adcock2013breaking}. 
First, observe that
\begin{align*}
\max_{1\leq i \leq n} \sup_{\|v\|_\infty \leq 1 }\sum_{k=1}^n   \frac{1}{\pi_k}  \left| e_i^*  a_{k}\right|^2 \left| a_{k,S}^* v  \right|^2 
&\leq \max_{1\leq \ell \leq N} \sup_{\|v\|_\infty \leq 1 }\sum_{k=1}^n   \frac{1}{\pi_k}  \left\|a_{k, \Omega_\ell}\right\|_\infty^2 \left| a_{k,S}^* v  \right|^2 \\
&\leq   \max_{1\leq \ell \leq N} \sup_{\|v\|_\infty \leq 1 }\sum_{k=1}^n   \frac{1}{\pi_k}  \left\|a_{k, \Omega_\ell}\right\|_\infty \left\|a_{k}\right\|_\infty \left| a_{k,S}^* v  \right|^2.
\end{align*}
Let $\tilde{v}$ denote the maximizer in the last expression, and define $\widetilde{s_k} = \left| a_{k,S}^* \tilde{v}  \right|^2$ for $1\leq k \leq n$. 
It follows,
 \begin{align}
 \label{eq:verifAdcock}
\max_{1\leq i \leq n} \sup_{\|v\|_\infty \leq 1 }\sum_{k=1}^n   \frac{1}{\pi_k}  \left| e_i^*  a_{k}\right|^2 \left| a_{k,S}^* v  \right|^2 
&\leq   \max_{1\leq \ell \leq N} \sum_{k=1}^n   \frac{1}{\pi_k}  \left\|a_{k, \Omega_\ell}\right\|_\infty \left\|a_{k}\right\|_\infty \widetilde{s_k},
\end{align}
and $\sum_{k=1}^n \widetilde{s_k} = \sum_{k=1}^n \left| a_{k,S}^* v  \right|^2 = \| A_0 P_S \widetilde{v} \|_2^2 = \| P_S \widetilde{v} \|_2^2 \leq \sum_{\ell=1}^N s_\ell$. 
The last inequality and Equation \eqref{eq:verifAdcock} for i.i.d\ sampling correspond to the condition (4.5) in Theorem 4.4 of \cite{adcock2013breaking} in the case of Bernoulli sampling.
This completes the comparison between Corollary \ref{corol:isolatedBlocksSparsity} and the results in \cite{adcock2013breaking}.

\subsection{Proof of Corollary \ref{corol:isolatedMRI}}
\label{app:proofIsolatedMRI}


Recall that $\left( \Omega_j\right)_{0\leq j \leq J}$ the dyadic partition of the set of indexes $\{1,\hdots ,n\}$. Recall also the function $j : \{ 1 , \hdots , n \} \rightarrow \{ 0 , \hdots , J\}$ defined by $j(u) = j$ if $u\in \Omega_{j}$. In the interests of simplifying notation, in this section, the symbol '$\gtrsim$' will be equivalent to '$\geq C \cdot$',  with $C$ a universal constant.
The following lemma will be useful to bound above the coefficients of $A_0$ in absolute value, and to derive Lemmas \ref{lem:thetaIsoMRI} and \ref{lem:upsilonIsoMRI}.

\begin{lemme}{\cite{adcock2014note}}
\label{lem:adcock1}
The magnitude of the coefficients of matrix $A_0=\Fc\phi^*$, where $\Fc$ is the 1D Fourier transform and $\phi$ is the 1D Haar transform,  satisfies
\begin{align}
\label{estimCoherence}
 \| P_{\Omega_j} A_0 P_{\Omega_\ell} \|_{1\rightarrow \infty}^2 \lesssim 2^{-j} 2^{-|j-\ell |}, \quad \text{for} \quad 0\leq j,\ell\leq J.
\end{align}
\end{lemme}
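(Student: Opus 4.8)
The plan is to reduce this operator-norm estimate to a pointwise bound on the individual entries of $A_0$, and then to compute the Fourier transform of a Haar wavelet in closed form. First I would record the elementary identity $\|M\|_{1\to\infty}=\max_{i,j}|M_{ij}|$, valid for any matrix $M$, since $\|Mv\|_\infty\le\max_i\sum_j|M_{ij}|\,|v_j|$ with equality attained at a canonical basis vector. Consequently $\|P_{\Omega_j}A_0P_{\Omega_\ell}\|_{1\to\infty}$ is exactly the largest magnitude of an entry of $A_0=\Fc\phi^*$ whose row index lies in the frequency level $\Omega_j$ (so the frequency $\omega$ has magnitude of order $2^j$) and whose column index lies in the wavelet level $\Omega_\ell$. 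Each such entry is, up to the unitary normalisation $1/\sqrt n$, the value at frequency $\omega$ of the discrete Fourier transform of a Haar wavelet living at scale $\ell$.

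Next I would compute this transform explicitly. A unit-norm discrete Haar wavelet at scale $\ell$ is supported on $L=n/2^\ell$ consecutive samples, equal to $+c$ on the first half and $-c$ on the second, with $c=\sqrt{2^\ell/n}$. Summing the two geometric series gives
\[
\Big|\sum_t e^{-2\pi i\omega t/n}\psi_{\ell,k}(t)\Big|
= c\,\frac{\bigl|1-e^{-\pi i\omega L/n}\bigr|^2}{\bigl|1-e^{-2\pi i\omega/n}\bigr|}
= c\,\frac{2\sin^2\!\bigl(\pi\omega/2^{\ell+1}\bigr)}{|\sin(\pi\omega/n)|},
\]
so that, after dividing by $\sqrt n$, the entry magnitude is of order $2^{\ell/2}\sin^2(\pi\omega/2^{\ell+1})/|\omega|$ whenever $|\omega|\sim 2^j\ll n$.

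Then I would bound this by elementary sine inequalities, distinguishing two regimes. When $j\ge\ell$ one uses $\sin^2(\cdot)\le 1$, which yields a squared entry of order at most $2^{\ell-2j}=2^{-j}2^{-(j-\ell)}$, matching \eqref{estimCoherence} since $|j-\ell|=j-\ell$ there. When $j\le\ell$ the argument $\pi\omega/2^{\ell+1}$ is small, so $|\sin x|\le|x|$ gives $\sin^2(\pi\omega/2^{\ell+1})\lesssim (2^{j-\ell})^2$ and hence a squared entry of order at most $2^{\ell-2j}2^{2(j-\ell)}=2^{2j-3\ell}\le 2^{-\ell}=2^{-j}2^{-(\ell-j)}$, again consistent with the claim. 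Combining the two regimes produces the uniform bound $2^{-j}2^{-|j-\ell|}$; the coarse scaling-function level ($\ell=0$, together with $\omega$ near the origin, i.e. $j=0$) is handled by a direct computation.

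The hard part will be the bookkeeping: pinning down the precise normalisations of the unitary DFT and of the $L^2$-normalised discrete Haar basis, and translating the abstract dyadic partition $(\Omega_j)$ into the quantitative facts ``$|\omega|\sim 2^j$ on the frequency side'' and ``scale $\ell$ on the wavelet side'', so that the two sine factors carry exactly the exponents appearing in \eqref{estimCoherence}. The decay in the low-frequency regime $j\le\ell$ rests on the single vanishing moment of the Haar wavelet---its transform vanishes linearly at the origin---which is precisely what produces the factor $2^{-|j-\ell|}$ in that case. Alternatively, since the statement is quoted verbatim from \cite{adcock2014note}, one may simply invoke that reference.
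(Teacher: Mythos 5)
Your proposal is correct, but there is nothing in the paper to compare it against: the paper contains no proof of this lemma, which is imported verbatim from \cite{adcock2014note}, so the paper's entire ``proof'' is the citation --- precisely the fallback you mention in your closing sentence. Your argument is essentially the one underlying that reference: reduce $\|\cdot\|_{1\rightarrow\infty}$ to the largest matrix entry, compute the discrete Fourier transform of a Haar wavelet in closed form as a ratio of sines, and split into the regimes $j\geq\ell$ (where $\sin^2\leq 1$ and the $1/|\omega|$ tail, together with the normalisation $2^{\ell/2}$, give $2^{\ell-2j}=2^{-j}2^{-|j-\ell|}$) and $j\leq\ell$ (where the single vanishing moment --- the double zero of the transform at $\omega=0$ --- supplies the decay). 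What your write-up buys over the citation is self-containedness and a transparent accounting of where each exponent comes from. Two bookkeeping corrections are needed. First, in the regime $j\leq\ell$ the \emph{squared} entry involves $\sin^4$, so the correct chain is $2^{\ell-2j}\cdot 2^{4(j-\ell)}=2^{2j-3\ell}$; your intermediate expression $2^{\ell-2j}2^{2(j-\ell)}$ actually equals $2^{-\ell}$, not $2^{2j-3\ell}$. This slip is harmless here, since both quantities are at most $2^{-\ell}=2^{-j}2^{-|j-\ell|}$, but the exponents as written are inconsistent. Second, the identification of the row block $\Omega_j$ with frequencies $|\omega|\sim 2^j$ (under a symmetric ordering of the DFT frequencies), and the degenerate columns $\ell\in\{0,1\}$ corresponding to the scaling function and the coarsest wavelet, must indeed be pinned down as you acknowledge; both are routine and do not affect the structure of the argument.
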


\begin{lemme}
\label{lem:thetaIsoMRI}
In the case of isolated measurements, with $A_0 = \Fc\phi^*$ with $\phi$ to be the inverse 1D Haar transform, suppose that the signal to reconstruct $x$ is sparse by levels, meaning that $\| P_{\Omega_j} x \|_0 \leq s_j$ for $0\leq j \leq J$. Then,
\begin{align}
\Theta \lesssim \max_{1\leq k \leq n} \frac{2^{-j(k)}}{\pi_k} \left( s_{j(k)} + \sum_{\ell = 0 \atop \ell \neq j(k)}^J s_\ell 2^{-|j(k) - \ell|/2} \right).
\end{align}
Choosing $\pi_k$ to be constant by level, i.e. $\pi_k = \tilde{\pi}_{j(k)}$, the last expression can be rewritten as follows
\begin{align}
\Theta \lesssim \max_{0\leq j \leq J} \frac{2^{-j}}{\tilde{\pi}_j} \left( s_{j} + \sum_{\ell = 0 \atop \ell \neq j}^J s_\ell 2^{-|j - \ell|/2} \right).
\end{align}
\end{lemme}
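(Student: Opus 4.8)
The plan is to start from the evaluation of $\Theta$ already obtained for isolated measurements and then feed in the entrywise decay estimates for the Fourier--Haar matrix. Recall from \eqref{eq:thetaIso} and the derivation in Appendix \ref{app:proofIsolatedBlocksSparsity} that in the isolated setting $\Theta(S,\pi) \leq \max_{1\leq k \leq n} \frac{\|a_k\|_\infty \|a_{k,S}\|_1}{\pi_k}$, and that the sparsity-by-levels hypothesis $|S\cap \Omega_\ell| = s_\ell$ lets us bound $\|a_{k,S}\|_1 = \sum_{\ell=0}^J \|a_{k, S\cap \Omega_\ell}\|_1 \leq \sum_{\ell=0}^J s_\ell \|a_{k,\Omega_\ell}\|_\infty$, since on each level $\Omega_\ell$ there are at most $s_\ell$ active coordinates of $x$, each contributing a coefficient of magnitude at most $\|a_{k,\Omega_\ell}\|_\infty$. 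Thus the whole quantity reduces to controlling the scalars $\|a_{k,\Omega_\ell}\|_\infty$ and $\|a_k\|_\infty$.

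The second step is to invoke Lemma \ref{lem:adcock1}. Since $\|P_{\Omega_{j(k)}} A_0 P_{\Omega_\ell}\|_{1\rightarrow\infty}$ is exactly the largest entry, in absolute value, of the block of $A_0$ with rows in $\Omega_{j(k)}$ and columns in $\Omega_\ell$, it dominates $\|a_{k,\Omega_\ell}\|_\infty$ for every row index $k\in\Omega_{j(k)}$. Hence $\|a_{k,\Omega_\ell}\|_\infty \lesssim 2^{-j(k)/2} 2^{-|j(k)-\ell|/2}$. Taking the maximum over $\ell$, attained at $\ell = j(k)$, gives $\|a_k\|_\infty \lesssim 2^{-j(k)/2}$. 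Substituting both estimates yields $\Theta \lesssim \max_{1\leq k \leq n} \frac{2^{-j(k)/2}}{\pi_k} \sum_{\ell=0}^J s_\ell\, 2^{-j(k)/2} 2^{-|j(k)-\ell|/2} = \max_{1\leq k\leq n} \frac{2^{-j(k)}}{\pi_k} \sum_{\ell=0}^J s_\ell\, 2^{-|j(k)-\ell|/2}$. Isolating the $\ell = j(k)$ term, where the exponential factor equals $1$, produces exactly the claimed bound.

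Finally, under the constant-by-level choice $\pi_k = \tilde{\pi}_{j(k)}$ the summand depends on $k$ only through $j(k)$, so the maximum over the $n$ rows collapses to a maximum over the $J+1$ levels, giving the stated simplified form. I expect the only delicate point to be the bookkeeping of the half-powers: one must track that the factor $2^{-j(k)/2}$ coming from $\|a_k\|_\infty$ and the factor $2^{-j(k)/2}$ sitting inside $\|a_{k,\Omega_\ell}\|_\infty$ multiply to the full $2^{-j(k)}$, while the cross-level decay $2^{-|j(k)-\ell|/2}$ survives. Confusing $\|\cdot\|_{1\rightarrow\infty}$, which is an entrywise supremum, with a genuine operator norm, or overlooking that Lemma \ref{lem:adcock1} is phrased for the \emph{squared} quantity, are the natural places where one could slip a factor of two in the exponent.
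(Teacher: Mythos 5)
Your proposal is correct and follows essentially the same route as the paper's proof: both start from the isolated-measurement identity $\Theta = \max_{k} \frac{\|a_k\|_\infty \|a_{k,S}\|_1}{\pi_k}$, bound $\|a_{k,S}\|_1 \leq \sum_{\ell=0}^J s_\ell \|a_{k,\Omega_\ell}\|_\infty$ via the sparsity-by-levels assumption, and then invoke Lemma \ref{lem:adcock1} (taking square roots of \eqref{estimCoherence}) to obtain $\|a_{k,\Omega_\ell}\|_\infty \lesssim 2^{-j(k)/2}2^{-|j(k)-\ell|/2}$ and $\|a_k\|_\infty \lesssim 2^{-j(k)/2}$, which combine to the stated bound. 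Your extra remarks on the entrywise interpretation of $\|\cdot\|_{1\to\infty}$ and the squared statement of the lemma are exactly the bookkeeping the paper performs implicitly.
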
 
 \begin{proof} Using \eqref{eq:thetaIso}, we can write
 \begin{align*}
 \Theta &= \max_{1\leq k \leq n} \frac{\| a_k \|_{\infty} \| a_{k,S} \|_1}{\pi_k} \leq \max_{1\leq k \leq n} \frac{\| a_k \|_{\infty} \sum_{\ell=0}^J \| a_{k,\Omega_\ell} \|_\infty s_\ell}{\pi_k} \\
 &\lesssim \max_{1\leq k \leq n} \frac{1}{\pi_k}  2^{-j(k)/2} \sum_{\ell=0}^J 2^{-j(k)/2} 2^{-|j(k)-\ell|/2}  s_\ell  \\
 &\lesssim \max_{1\leq k \leq n}\frac{1}{\pi_k}  2^{-j(k)} \sum_{\ell=0}^J  2^{-|j(k)-\ell|/2}  s_\ell ,
 \end{align*}
 where we use \eqref{estimCoherence} to bound above $\| a_{k,\Omega_\ell} \|_\infty$.
 \end{proof}
 
\begin{lemme}
\label{lem:upsilonIsoMRI}
In the case of isolated measurements, with $A_0 = \Fc\phi^*$ with $\phi$ to be the inverse Haar transform, suppose that the signal to reconstruct $x$ is sparse by levels, meaning that $\| P_{\Omega_j} x \|_0 \leq s_j$ for $0\leq j \leq J$. Choosing $\pi_k$ to be constant by level, i.e. $\pi_k = \tilde{\pi}_{j(k)}$, we have
\begin{align}
\Upsilon \lesssim \max_{0\leq j \leq J} \frac{1}{\tilde{\pi}_j} 2^{-j} \sum_{p=0}^J 2^{-|j-p|/2} s_p.
\end{align}
\end{lemme}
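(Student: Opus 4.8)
The plan is to start from the reformulation \eqref{eq:UpsiIso} of $\Upsilon$ valid for isolated measurements,
\[
\Upsilon = \max_{1\le i\le n}\ \sup_{\|v\|_\infty\le 1}\ \sum_{k=1}^n \frac{1}{\pi_k}\,|e_i^* a_k|^2\,|a_{k,S}^* v|^2 ,
\]
and to organize the sum over the rows $k$ according to their level $j=j(k)$, using the constant-by-level choice $\pi_k=\tilde\pi_{j(k)}$. For a fixed $i$, writing $r=j(i)$, the coherence estimate \eqref{estimCoherence} of Lemma \ref{lem:adcock1} yields the uniform bound $|e_i^* a_k|^2=|(A_0)_{ki}|^2\lesssim 2^{-j}2^{-|j-r|}$ for every $k\in\Omega_j$, so this factor can be pulled out of the inner sum. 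Since $a_{k,S}^* v=e_k^* A_0 P_S v$, one is left with the block energies $\sum_{k\in\Omega_j}|a_{k,S}^*v|^2=\|P_{\Omega_j}A_0 P_S v\|_2^2$, and therefore
\[
\Upsilon \lesssim \max_{0\le r\le J}\ \sup_{\|v\|_\infty\le 1}\ \sum_{j=0}^J \frac{1}{\tilde\pi_j}\,2^{-j}\,2^{-|j-r|}\,\|P_{\Omega_j}A_0 P_S v\|_2^2 .
\]

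The heart of the argument is a level-refined bound on $\|P_{\Omega_j}A_0 P_S v\|_2^2$. First I would decompose $P_S v=\sum_{p=0}^J P_{S\cap\Omega_p}v$ and apply the triangle inequality, bounding each piece by $\|P_{\Omega_j}A_0 P_{\Omega_p}\|_{2\rightarrow 2}\,\|P_{S\cap\Omega_p}v\|_2$, where $\|P_{S\cap\Omega_p}v\|_2\le\sqrt{s_p}\,\|v\|_\infty$ because $P_{\Omega_p}x$ is $s_p$-sparse. Using the block operator-norm decay $\|P_{\Omega_j}A_0 P_{\Omega_p}\|_{2\rightarrow 2}^2\lesssim 2^{-|j-p|}$ for the Fourier-Haar system (the near-orthogonality of Fourier frequencies against Haar levels, a refinement of \eqref{estimCoherence}), a Cauchy-Schwarz step together with $\sum_p 2^{-|j-p|/2}\lesssim 1$ collapses the square of a sum into a single sum:
\[
\|P_{\Omega_j}A_0 P_S v\|_2^2 \lesssim \Big(\sum_{p=0}^J 2^{-|j-p|/2}\sqrt{s_p}\Big)^2 \lesssim \sum_{p=0}^J 2^{-|j-p|/2}s_p ,
\]
a bound that is crucially linear in the sparsities $s_p$.

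Substituting this estimate and writing $T_j:=\frac{1}{\tilde\pi_j}2^{-j}\sum_{p}2^{-|j-p|/2}s_p$ for the quantity appearing in the claim, the remaining sum over $j$ is handled by the two-sided geometric summation $\sum_{j}2^{-|j-r|}\le 3$, giving $\sum_j 2^{-|j-r|}T_j\le 3\max_j T_j$ uniformly in $r$; taking the maximum over $i$ then produces $\Upsilon\lesssim\max_{0\le j\le J}T_j$, which is exactly the announced bound. I expect the main obstacle to be the block operator-norm decay $\|P_{\Omega_j}A_0 P_{\Omega_p}\|_{2\rightarrow 2}^2\lesssim 2^{-|j-p|}$: the pointwise estimate \eqref{estimCoherence} alone is insufficient here, since bounding a block by its Frobenius norm loses a factor $|\Omega_p|\sim 2^p$ and would only produce a bound quadratic in $s_p$. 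Capturing the genuine $\ell_2\rightarrow\ell_2$ cancellation between Fourier and Haar blocks, rather than a mere entrywise size, is the one place where the unitary structure of $A_0$ must be used in an essential way.
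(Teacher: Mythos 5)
Your proposal is correct and follows essentially the same route as the paper's own proof: bounding $|e_i^*a_k|^2$ by the level coherence of Lemma \ref{lem:adcock1}, grouping the sum by levels into the block energies $\|P_{\Omega_j}A_0P_S v\|_2^2$, decomposing over levels $p$ with the operator-norm decay $\|P_{\Omega_j}A_0P_{\Omega_p}\|_{2\rightarrow 2}\lesssim 2^{-|j-p|/2}$ (which the paper imports from Lemma 4.3 of \cite{adcock2014note}), linearizing via Cauchy--Schwarz, and concluding with the geometric summation $\sum_j 2^{-|j-\ell|}\lesssim 1$. You also correctly identified that this $2\rightarrow 2$ block bound is the genuinely non-elementary ingredient that cannot be recovered from the entrywise estimate \eqref{estimCoherence} alone, which is precisely why the paper cites it as an external lemma.
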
 
\begin{proof}
Denoting $\tilde{v} = \tilde{v}(i)$ the argument of the supremum in the definition of $\Upsilon$, we get
\begin{align*}
\Upsilon &:= \max_{1\leq i \leq n} \sum_{k=1}^n \frac{1}{\pi_k} | e_i^* a_k |^2 | a_{k,S} \tilde{v} |^2 \leq \max_{0\leq \ell \leq J} \sum_{k=1}^n \frac{1}{\pi_k} \| a_{k,\Omega_\ell} \|_\infty^2 | a_{k,S} \tilde{v} |^2 \\
&\lesssim \max_{0\leq \ell \leq J} \sum_{k=1}^n \frac{1}{\pi_k} 2^{-j(k)} 2^{-|j(k) - \ell |}  | a_{k,S} \tilde{v} |^2 \lesssim \max_{0\leq \ell \leq J} \sum_{j=0}^J \frac{1}{\tilde{\pi}_j} 2^{-j} 2^{-|j - \ell |} \underbrace{\sum_{k\in \Omega_j}  | a_{k,S} \tilde{v} |^2}_{=:K_j} 
\end{align*}
We can rewrite $K_j$ as follows $K_j = \| P_{\Omega_j} A_0 P_S \tilde{v} \|_2^2$. Therefore, since $\|\tilde{v}\|_\infty \leq 1$,
\begin{align*}
\sqrt{K_j} &= \| P_{\Omega_j} A_0 P_S \tilde{v} \|_2 = \| P_{\Omega_j} A_0 \sum_{p=0}^J P_{\Omega_p} P_S \tilde{v} \|_2 \leq \sum_{p=0}^J \| P_{\Omega_j} A_0  P_{\Omega_p} P_S \tilde{v} \|_2 \\
&\leq \sum_{p=0}^J \| P_{\Omega_j} A_0  P_{\Omega_p} \|_{2\rightarrow 2} \| P_{\Omega_p} P_S\tilde{v} \|_{2 } \leq \sum_{p=0}^J \| P_{\Omega_j} A_0  P_{\Omega_p} \|_{2\rightarrow 2} \sqrt{s_p}.
\end{align*}
Using Lemma 4.3 of \cite{adcock2014note}, we have the following upper bound
$$ \| P_{\Omega_j} A_0  P_{\Omega_p} \|_{2\rightarrow 2}  \lesssim 2^{-|j-p|/2}, \qquad \text{for} \quad 0 \leq j,p \leq J.
$$
Then, $\sqrt{K_j} \lesssim \sum_{p=0}^J 2^{-|j-p|/2} \sqrt{s_p}$, and thus
\begin{align*}
K_j &\lesssim \left( \sum_{p=0}^J 2^{-|j-p|/2} \sqrt{s_p} \right)^2  \lesssim \left( \sum_{p=0}^J 2^{-|j-p|/2}  \right)\left( \sum_{p=0}^J 2^{-|j-p|/2} s_p \right) \\
& \lesssim \left( \sum_{p=0}^J 2^{-|j-p|/2} s_p \right)
\end{align*}
where in the second inequality we use Cauchy-Schwarz inequality. Therefore,
\begin{align*}
\Upsilon &\lesssim \max_{0\leq \ell \leq J} \sum_{j=0}^J  2^{-|j-\ell |} \frac{1}{\tilde{\pi}_j} 2^{-j} \sum_{p=0}^J 2^{-|j-p|/2} s_p \\
&\lesssim  \left( \max_{0\leq \ell \leq J} \sum_{j=0}^J  2^{-|j-\ell |} \right) \left( \max_{0\leq j\leq J} \frac{1}{\tilde{\pi}_j} 2^{-j} \sum_{p=0}^J 2^{-|j-p|/2} s_p \right)\\
&\lesssim \max_{0\leq j \leq J} \frac{1}{\tilde{\pi}_j} 2^{-j} \sum_{p=0}^J 2^{-|j-p|/2} s_p.
\end{align*}
\end{proof}
Note that the upper bounds given in Lemmas \ref{lem:thetaIsoMRI} and \ref{lem:upsilonIsoMRI} coincide. Therefore, we can apply Theorem \ref{thm:recovery} with the following upper bound for $\Gamma(S,\pi)$
$$ \Gamma(S,\pi)\lesssim  \max_{0\leq j \leq J} \frac{1}{\tilde{\pi}_j} 2^{-j} \sum_{p=0}^J 2^{-|j-p|/2} s_p,
$$
and conclude the proof for Corollary \ref{corol:isolatedMRI}.

\subsection{Proof of Corollary \ref{corol:Fourier2DLines}}
\label{app:proofFourier2DLines}

Recall that $A_0 = \phi \otimes \phi \in \Cbb^{n \times n}$, where $\phi \in \Cbb^{\sqrt{n}\times \sqrt{n}}$ is a 1D orthogonal transform. Consider a blocks dictionary made of $\sqrt{n}$ horizontal lines, i.e. for $1\leq k \leq \sqrt{n}$
$$ B_k = \left( \phi_{k,1} \phi ,  \hdots , \phi_{k,\sqrt{n} } \phi \right), \quad \text{ and thus } \quad B_k^*B_k = \left( \phi_{k,i}^* \phi_{k,j} \Id_{\sqrt{n}} \right)_{1\leq i,j \leq \sqrt{n} }.
$$
Now, let us fix that the signal support $S$ is concentrated on $q$ horizontal lines of the spatial plane. Formally, 
\begin{align}
S \subset \{ (j-1) \sqrt{n} + \{1, \hdots , \sqrt{n}\}, j\in J\}
\end{align} 
where $J\subset \{1, \hdots , \sqrt{n}\}$ and $|J |=q$. Therefore, 
$$ B_k^*B_{k,S} = \left( \delta_{j\in J}\phi_{k,i}^* \phi_{k,j} \Id_{\sqrt{n}} \right)_{1\leq i,j \leq \sqrt{n}},
$$
where $\delta_{j \in J}=1$ if $j\in J$ and $0$ otherwise.
In such a setting, the quantities in Definition \ref{def:quantities} can be rewritten as follows:
\begin{align}
\Theta(S,\pi) &=\max_{1\leq k \leq M} \max_{1\leq i \leq n} \frac{ \| e_i^* B_k^* B_{k,S} \|_1}{\pi_k} =  \max_{1\leq k \leq \sqrt{n}}  \max_{1\leq \tilde{i} \leq \sqrt{n}} \frac{|\phi_{k,\tilde{i}}| \sum_{j\in J} | \phi_{k,j}|}{\pi_k} \leq \max_{1\leq k \leq \sqrt{n}} q  \frac{\|\phi_{k,:}\|_\infty^2}{\pi_k}.
\end{align}

Recall that 
$$ \Upsilon(S,\pi)  :=\max_{1\leq i \leq n} \sup_{\|v\|_\infty \leq 1 }\sum_{k=1}^M   \frac{1}{\pi_k}  \left| e_i^*  B_{k}^* B_{k,S} v  \right|^2,
$$
and call $(i^{\star}, v)$ the argument of the supremum over $\{ 1, \hdots , n\}$ and $\{u, \|u\|_\infty \leq 1 \}$. 
Therefore,
$$\Upsilon(S,\pi)  =  \sum_{k=1}^M   \frac{1}{\pi_k}  \left| e_{i^\star}^*  B_{k}^* B_{k,S} v  \right|^2.
$$
We can decompose $i^\star = (i_1-1)\sqrt{n}+i_2$ with $i_1$, $i_2$ integers of $\{ 1, \hdots , \sqrt{n}\}$. We can write
\begin{align*}
\Upsilon(S,\pi)  &=  \sum_{k=1}^{\sqrt{n}} \frac{1}{\pi_k} \left|\sum_{j=1}^{\sqrt{n}} \delta_{j \in J} \phi_{k,i_1}^* \phi_{k,j} e_{i_2}^* {v}[j] \right|^2 =  \sum_{k=1}^{\sqrt{n}} \frac{1}{\pi_k} |\phi_{k,i_1} |^2\left|\sum_{j=1}^{\sqrt{n}} \delta_{j \in J} \phi_{k,j} w_j \right|^2,
\end{align*}
where $w\in \Cbb^{\sqrt{n}}$ such that $w_j = e_{i_2}^* {v}[j]$ and ${v}[j] \in \Cbb^{\sqrt{n}}$ is the restriction of ${v}$ to the $j$-th horizontal line, i.e. to the components of $v$ indexed by $\{(j-1)\sqrt{n}+1,\hdots, j\sqrt{n}\}$. 
We can rewrite the last expression as follows
\begin{align*}
\Upsilon(S,\pi)  &=  \sum_{k=1}^{\sqrt{n}} \frac{1}{\pi_k} |\phi_{k,i_1} |^2\left| \left\langle e_k , \phi P_J w \right\rangle \right|^2 \leq  \max_{1\leq \ell \leq \sqrt{n}}  \frac{1}{\pi_\ell} |\phi_{\ell,i_1} |^2   \sum_{k=1}^{\sqrt{n}} \left| \left\langle e_k , \phi P_J w \right\rangle \right|^2 \\
&= \max_{1\leq \ell \leq \sqrt{n}}   \frac{1}{\pi_\ell} |\phi_{\ell,i_1} |^2    \| \phi P_J w\|_2^2 = \max_{1\leq \ell \leq \sqrt{n}}  \frac{1}{\pi_\ell} |\phi_{\ell,i_1} |^2    \| P_J w\|_2^2 \\
&\leq   \max_{1\leq \ell \leq \sqrt{n}} \frac{1}{\pi_\ell} |\phi_{\ell,i_1} |^2  \cdot q,
\end{align*}
where in the last expression we use that $\|w\|_\infty\leq 1$.
Choosing $\phi$ as the 1D Fourier transform gives $\| \phi_{\ell,:}\|_\infty = \frac{1}{n^{1/4}}$ and choosing a uniform sampling among the $\sqrt{n}$ horizontal lines, i.e. $\pi_\ell^\star = 1/\sqrt{n}$ for $1\leq \ell \leq \sqrt{n}$, leads to
\begin{align*}
\Gamma(S, \pi^{\star}) \leq q,
\end{align*} 
which ends the proof of Corollary \ref{corol:Fourier2DLines}.

 \subsection{Proof of Corollary \ref{corol:linesShannonMRI}}
 \label{app:proofLinesShannonMRI}

We recall that the sampling matrix is then constructed from the full sampling matrix $A_0\in \Cbb^{n\times n}$, in the 2D setting, where $A_0 = \Fc_{2D} \Psi^*$ with $\Fc_{2D} \in \Cbb^{n\times n}$ the 2D Fourier transform and $\Psi^* \in \Cbb^{n\times n}$ the 2D inverse wavelet transform. Since both transforms are separable, $\Fc_{2D} = \Fc \otimes \Fc$, $\Psi = \psi \otimes \psi$, with $\otimes$ the Kronecker product and $\Fc, \psi \in \Cbb^{\sqrt{n}\times \sqrt{n}}$ the corresponding 1D transforms. Then $A_0$ can also be rewritten as $A_0=\phi \otimes \phi$, the Kronecker product of the 1D transforms $\phi :=\Fc\psi^*\in \Cbb^{\sqrt{n} \times \sqrt{n}}$. 

In this section, in order to avoid any confusion, we will denote by $\left( e_i^{(n)} \right)_{1\leq i \leq n}$ the canonical basis in dimension $n$.


In Corollary \ref{corol:linesShannonMRI}, we focus on the case where $A_0 = \phi \otimes \phi \in \Cbb^{n \times n}$ is the 2D Fourier-Shannon wavelet transform, then $\phi\in \Cbb^{\sqrt{n} \times \sqrt{n}}$ is the 1D Fourier-Shannon wavelets transform. Therefore, $\phi$ and $A_0$ are block-diagonal orthogonal matrices.
The sensing schemes are based on horizontal lines on the 2D plane, meaning that
$$ B_k = \left( \phi_{k,1} \phi \hdots \phi_{k,\sqrt{n}} \phi \right),
$$
for $k=1,\hdots , \sqrt{n}$.
By defintion of the Fourier-Shannon transform, we have that
 $$ B_k^* B_k =  \left( \phi_{k,\ell}^* \phi_{k,m} \Id_{\sqrt{n}} \right)_{1\leq \ell, m \leq \sqrt{n}}   = \frac{1}{2^{j(k)}} \left( \delta_{\ell \in \tau_{j(k)}} \delta_{m \in \tau_{j(k)}} \Id_{\sqrt{n}} \right)_{1\leq \ell, m \leq \sqrt{n}}, 
 $$
 for $k = 1, \hdots , \sqrt{n}$, where  $\delta_{\ell \in \tau_{j}} = 1$ if $\ell \in \tau_{j}$, and 0 otherwise.
 
 First let us start with the evaluation of $\Theta$. 
By definition of $\|\cdot \|_{\infty \rightarrow \infty}$, we have
$$ \| B_k^* B_{k,S} \|_{\infty \rightarrow \infty} = \max_{1\leq \ell \leq n} \sup_{\| v \|_\infty \leq 1 \atop v \in \Cbb^n} 
\left| \left( e_\ell^{(n)}\right)^* B_k^* B_k P_S {v} \right|.
$$ 
Setting $\tilde{v}=\tilde{v}(k)$ the argument of the supremum in the last expression, then
 \begin{align*}
 \Theta := \max_{1\leq k \leq \sqrt{n}} \max_{1 \leq \ell \leq n} \frac{1}{\pi_k} \left| \left( e_\ell^{(n)}\right)^* B_k^* B_k P_S \tilde{v} \right|,
 \end{align*}
Note that $\| \tilde{v} \|_\infty \leq 1$. The index $\ell$ can be rewritten as $\ell = (\ell_1 - 1)\sqrt{n} + \ell_2$, with $1\leq \ell_1 , \ell_2 \leq \sqrt{n}$.
  \begin{align*}
 \Theta &:= \max_{1\leq k \leq \sqrt{n}} \max_{1 \leq \ell_1 , \ell_2 \leq \sqrt{n}} \frac{1}{\pi_k} \left| \phi_{k,\ell_1}^* \left( \phi_{k,m}  \left( e_{\ell_2}^{(\sqrt{n})} \right)^*\right)_{1\leq m \leq \sqrt{n}} P_S \tilde{v} \right|, \\
 & = \max_{1\leq k \leq \sqrt{n}} \max_{1 \leq \ell_1 , \ell_2 \leq \sqrt{n}} \frac{1}{\pi_k} \left| \phi_{k,\ell_1}^* \sum_{m=1}^{\sqrt{n}} \phi_{k,m}  \left( e_{\ell_2}^{(\sqrt{n})} \right)^* \left( P_S \tilde{v} \right)[m] \right|,\\
 \end{align*}
where $\left( v \right)[m] \in \Cbb^{\sqrt{n}}$ is the restriction of the vector $v$ to the $m$-th horizontal line, i.e. to the components indexed by $\{(m-1)\sqrt{n}+1,
\hdots, m\sqrt{n}\}$.  Set $w^{\mid (m)} :=  \left( P_S \tilde{v} \right)[m] 
\in \Cbb^{\sqrt{n}}$, the restriction of $P_S \tilde{v}$ to the $m$-th horizontal line. Then the $\ell_2$-th component of $w^{\mid (m)}$, 
written as $w^{\mid (m)}_{\ell_2}$ is equal to $\left( e_{\ell_2}^{(\sqrt{n})} 
\right)^*  \left( P_S \tilde{v} \right)[m]$. Note that $\left| w^{\mid (m)}_{\ell_2}\right| \leq 1$ if $(m-1)\sqrt{n} + \ell_2 \in S$, and it is equal to 0 otherwise. Then,
  \begin{align}
  \label{exp:ThetaShannon}
 \Theta 
 & \leq \max_{1\leq k \leq \sqrt{n}} \max_{1 \leq \ell_1 , \ell_2 \leq \sqrt{n}} \frac{1}{\pi_k} \left| \phi_{k,\ell_1}^* \sum_{m=1}^{\sqrt{n}} \phi_{k,m}  w^{\mid (m)}_{\ell_2}\right|.
 \end{align}
 By the properties of block-diagonality of the Fourier-Shannon transform, we have
   \begin{align}
   \label{exp:normInf}
 \Theta 
 & \leq \max_{1\leq k \leq \sqrt{n}} \max_{1 \leq \ell_1 , \ell_2 \leq \sqrt{n}} \frac{1}{\pi_k} \left| \phi_{k,\ell_1}^*\sum_{m\in \tau_{j(k)}} \phi_{k,m}  w^{\mid (m)}_{\ell_2}\right| \\
 \notag
 & \leq \max_{1\leq k \leq \sqrt{n}} \max_{1 \leq \ell_2 \leq \sqrt{n}} \frac{1}{\pi_k} \left\| \phi_{k,:} \right\|_\infty^2\left| \sum_{m\in \tau_{j(k)}}  w^{\mid (m)}_{\ell_2}\right| \\
 \notag
  & \leq \max_{1\leq k \leq \sqrt{n}} \max_{1 \leq \ell_2 \leq \sqrt{n}} \frac{1}{\pi_k} \left\| \phi_{k,:} \right\|_\infty^2\sum_{m\in \tau_{j(k)}} \left| w^{\mid (m)}_{\ell_2} \right| \\
  \label{eq:thetaUp}
  & \lesssim \max_{1\leq k \leq \sqrt{n}}  \frac{1}{\pi_k} \frac{1}{2^{j(k)}} s^{c}_{j(k)}.
 \end{align}
 Indeed, $\sum_{m\in \tau_{j(k)}} \left| w^{\mid (m)}_{\ell_2} \right|$ is bounded above by $ \sum_{m \in \tau_{j(k)}} \delta_{(m-1)\sqrt{n} + \ell_2 \in S}$, which counts the number of intersections between $S$, the $\ell_2$th-column and the $j(k)$ (horizontal) level, see the blue line in Figure \ref{fig:illusEns}. Taking the maximum over $1\leq \ell_2 \leq \sqrt{n}$ leads to $\sum_{m \in \tau_{j(k)}} \delta_{(m-1)\sqrt{n} + \ell_2 \in S} \leq s_{j(k)}^c$.

Secondly, let us evaluate $\Upsilon$. We have that
 \begin{align*}
\Upsilon := \max_{1\leq \ell \leq n} \sum_{k=1}^{\sqrt{n}} \frac{1}{\pi_k} \left| \left( e_\ell^{(n)} \right)^* B_k^* B_k \tilde{v} \right|^2,
\end{align*} 
where $\tilde{v}=\tilde{v}(\ell)$ is the argument of the supremum on the $\ell_\infty$ unit-ball. Using \eqref{exp:normInf}, with $\ell = (\ell_1-1)\sqrt{n}+\ell_2$, we can rewrite 
\begin{align*}
\Upsilon =  \max_{1\leq \ell_1, \ell_2 \leq \sqrt{n}} \sum_{k=1}^{\sqrt{n}} \frac{1}{\pi_k} \left| \phi_{k,\ell_1 }^* \sum_{m=1}^{\sqrt{n}} \phi_{k,m} w^{\mid (m)}_{\ell_2}\right|^2,
\end{align*} 
 where $w^{\mid (m)}_{\ell_2} := \left( e_{\ell_2}^{(\sqrt{n})} \right)^*  \left( P_S \tilde{v} \right)[m]$. Note again that $\left| w^{\mid (m)}_{\ell_2} \right| \leq 1$ if $(m-1)\sqrt{n} + \ell_2 \in S$, and it is equal to 0 otherwise.
 By denoting $w^{|(:,\ell_2)}$ the vector with components 
 \begin{align}
 \label{def:w}
 w^{|(:,\ell_2)} &:=\left( w^{\mid (1)}_{\ell_2} , \quad w^{\mid (2)}_{\ell_2} ,  \hdots, w^{\mid (\sqrt{n})}_{\ell_2}  \right)^*,
 \end{align}
 we can rewrite the previous quantity as follows
\begin{align}
\label{exp:UpsiShannon}
\Upsilon &=  \max_{1\leq \ell_1, \ell_2 \leq \sqrt{n}} \sum_{k=1}^{\sqrt{n}} \frac{1}{\pi_k} \left| \phi_{k,\ell_1 }^* \left\langle \phi_{k,:}^* , w^{|(:,\ell_2)} \right\rangle  \right|^2 \\
\notag
&= \max_{1\leq \ell_1, \ell_2 \leq \sqrt{n}} \sum_{k=1}^{\sqrt{n}} \frac{1}{\pi_k} \left| \phi_{k,\ell_1 } \right|^2 \left| \left\langle \phi_{k,:}^* , w^{|(:,\ell_2)} \right\rangle  \right|^2.
\end{align}  
Since $\phi$ is an orthogonal block-diagonal transform, we have
\begin{align*}
\Upsilon  &= \max_{1\leq \ell_1, \ell_2 \leq \sqrt{n}} \sum_{k\in\tau_{j(\ell_1)}} \frac{1}{\pi_k} \left| \phi_{k,\ell_1 } \right|^2 \left| \left\langle \phi_{k,:}^* , w^{|(:,\ell_2)} \right\rangle  \right|^2.
\end{align*}
Choosing $\pi_k = \tilde{\pi}_j$ for $k \in \tau_j$ meaning that the probability of drawing lines is constant by levels, we can write that
\begin{align*}
\Upsilon  &= \max_{1\leq \ell_1, \ell_2 \leq \sqrt{n}} \frac{1}{\tilde{\pi}_{j(\ell_1)}} \sum_{k\in\tau_{j(\ell_1)}}  \left| \phi_{k,\ell_1 } \right|^2 \left| \left\langle \phi_{k,:}^* , w^{|(:,\ell_2)} \right\rangle  \right|^2, \\
&\leq  \max_{1\leq \ell_1, \ell_2 \leq \sqrt{n}} \frac{1}{\tilde{\pi}_{j(\ell_1)}} \sum_{k\in \tau_{j(\ell_1)}} \| \phi_{k,:} \|_\infty^2 \left| \left\langle \phi_{k,:}^* ,w^{|(:,\ell_2)}\right\rangle  \right|^2, \\
&\lesssim  \max_{1\leq \ell_1, \ell_2 \leq \sqrt{n}} \frac{2^{-j(\ell_1)} }{\tilde{\pi}_{j(\ell_1)}} \sum_{k\in \tau_{j(\ell_1)}}\left| \left\langle \phi_{k,:}^* , w^{|(:,\ell_2)}\right\rangle  \right|^2, \\
&=  \max_{1\leq \ell_1, \ell_2 \leq \sqrt{n}} \frac{2^{-j(\ell_1)} }{\tilde{\pi}_{j(\ell_1)}} \left\| P_{\tau_{j(\ell_1)}} \phi w^{|(:,\ell_2)} \right\|_2^2.
\end{align*}
Since $\phi$ is orthogonal and block diagonal we have $\left\| P_{\tau_{j(\ell_1)}} \phi w^{|(:,\ell_2)} \right\|_2^2 = \| P_{\tau_{j(\ell_1)}} w^{|(:,\ell_2)} \|_2^2$. Then,
\begin{align}
\notag
\Upsilon 
&\lesssim  \max_{1\leq \ell_1, \ell_2 \leq \sqrt{n}} \frac{2^{-j(\ell_1)} }{\tilde{\pi}_{j(\ell_1)}} \left\| P_{\tau_{j(\ell_1)}}  w^{|(:,\ell_2)} \right\|_2^2, \\
  \label{eq:UpsiUp}
&\lesssim \max_{1\leq \ell_1 \leq \sqrt{n}} \frac{2^{-j(\ell_1)} }{\tilde{\pi}_{j(\ell_1)}}  s^c_{j(\ell_1)},
\end{align}
where the last step invokes that $\| P_{\tau_{j(\ell_1)}} w^{|(:,\ell_2)} \|_2^2 \leq \sum_{m \in \tau_{j(\ell_1)}} \delta_{(m-1)\sqrt{n} + \ell_2 \in S} \leq s^c_{j(\ell_1)}$.
Note that the upper bounds \eqref{eq:thetaUp} and \eqref{eq:UpsiUp} on $\Upsilon$ and $\Theta$ coincide.
They lead to the following choice for $1\leq k \leq \sqrt{n}$,
$$ \pi_k = \tilde{\pi}_{j(k)} =\frac{s^{c}_{j(k)}2^{-j(k)}}{\sum_{\ell=1}^{\sqrt{n}} s^{c}_{j(\ell)}2^{-j(\ell)}} = \frac{s^{c}_{j(k)}2^{-j(k)}}{\sum_{j=0}^{J} \sum_{\ell \in \tau_j} s^{c}_{j}2^{-j}} = \frac{s^{c}_{j(k)}2^{-j(k)}}{\sum_{j=0}^{J}  s^{c}_{j}}.
$$
Then for this particular choice, we can rewrite
\begin{align*}
\max (\Theta, \Upsilon ) & \lesssim \sum_{j=0}^{J}  s^{c}_{j}.
\end{align*}
To conclude, by Theorem \ref{thm:recovery}, a lower bound on the required number of horizontal lines to acquire is thus
$$ m \gtrsim \sum_{j=0}^{J}  s^{c}_{j}\ln(s) \ln(n/\varepsilon).
$$

\subsection{Proof of Corollary \ref{corol:linesHaarMRI}}
\label{app:prooflinesHaarMRI}

In this part, using the formalism introduced in the last section, $\psi$ is the 1D Haar transform, and $\phi$ is then the Fourier-Haar's wavelet transform.
In such a case, we can reuse \eqref{exp:ThetaShannon} in Section \ref{app:proofLinesShannonMRI} to evaluate $\Theta$:
\begin{align*}
 \Theta 
 & \leq \max_{1\leq k \leq \sqrt{n}} \max_{1 \leq \ell_1 , \ell_2 \leq \sqrt{n}} \frac{1}{\pi_k} \left| \phi_{k,\ell_1} \sum_{m=1}^{\sqrt{n}} \phi_{k,m}  w_m[\ell_2] \right|.
\end{align*}
Using Lemma \ref{lem:adcock1}, we have for $1\leq k,m \leq \sqrt{n}$,
$$\left| \phi_{k,m} \right|  \lesssim 2^{-j(k)/2} 2^{-|j(k) - j(m)|/2}.$$
Therefore,
\begin{align}
\notag
 \Theta 
 & \leq \max_{1\leq k \leq \sqrt{n}} \max_{1 \leq \ell_1 , \ell_2 \leq \sqrt{n}} \frac{1}{\pi_k} \left| \phi_{k,\ell_1}^* \right|  \sum_{m=1}^{\sqrt{n}} \left| \phi_{k,m} w^{\mid (m)}_{\ell_2} \right| \\
 \notag
  & \leq \max_{1\leq k \leq \sqrt{n}} \max_{1 \leq \ell_1 , \ell_2 \leq \sqrt{n}} \frac{1}{\pi_k} \left| \phi_{k,\ell_1}^* \right|  \sum_{j=0}^{J} \sum_{m \in \tau_{j}} \left| \phi_{k,m} \right| \left|  w^{\mid (m)}_{\ell_2}  \right| \\
  \notag
    & \leq \max_{1\leq k \leq \sqrt{n}} \max_{1 \leq \ell_1 , \ell_2 \leq \sqrt{n}} \frac{1}{\pi_k} \left| \phi_{k,\ell_1}^* \right|  \sum_{j=0}^{J} \sum_{m \in \tau_{j}} \left| \phi_{k,m} \right| \left| w^{\mid (m)}_{\ell_2} \right| \\
    \notag
    &\lesssim \max_{1\leq k \leq \sqrt{n}} \max_{1 \leq  \ell_2 \leq \sqrt{n}} \frac{1}{\pi_k} 2^{-j(k)} \sum_{j=0}^{J} 2^{-|j(k) - j|/2} \sum_{m \in \tau_{j}}    \left|  w^{\mid (m)}_{\ell_2} \right| \\
    \label{exp:ThetaUpHaar}
        &\lesssim \max_{1\leq k \leq \sqrt{n}}  \frac{1}{\pi_k} 2^{-j(k)} \sum_{j=0}^{J} 2^{-|j(k) - j|/2} s^c_{j}.
\end{align}
 
 Now let us study $\Upsilon$.
Recall the definition of $w^{|(:,\ell_2)}$ depending on $\ell_2$ in \eqref{def:w}, we can reuse \eqref{exp:UpsiShannon} to have
\begin{align*}
\Upsilon &=  \max_{1\leq \ell_1, \ell_2 \leq \sqrt{n}} \sum_{k=1}^{\sqrt{n}} \frac{1}{\pi_k} \left| \phi_{k,\ell_1 }^* \left\langle \phi_{k,:}^* , w^{|(:,\ell_2)}\right\rangle  \right|^2 \\
&= \max_{1\leq \ell_1, \ell_2 \leq \sqrt{n}} \sum_{k=1}^{\sqrt{n}} \frac{1}{\pi_k} \left| \phi_{k,\ell_1 }^* \right|^2 {\left| \left\langle \phi_{k,:}^* , w^{|(:,\ell_2)} \right\rangle  \right|^2}, \\
&= \max_{1\leq \ell_1 , \ell_2 \leq \sqrt{n}} \sum_{j=0}^{J} \frac{1}{\tilde{\pi}_j} \sum_{k\in\tau_j}  \left| \phi_{k,\ell_1 }^* \right|^2 {\left| \left\langle \phi_{k,:}^* , w^{|(:,\ell_2)} \right\rangle  \right|^2}, 
\end{align*}  
by choosing $\pi_k = \tilde{\pi}_j$ for $k\in \tau_j$, meaning that the drawing probability is constant by level. Since for $k\in \tau_j$, we have $\left| \phi_{k,\ell_1 }^* \right|^2 \leq 2^{-j} 2^{-|j - j(\ell_1)|}$ by Lemma \ref{lem:adcock1}. Then,
\begin{align*}
\Upsilon 
&= \max_{1\leq \ell_1, \ell_2 \leq \sqrt{n}} \sum_{j=0}^{J} \frac{1}{\tilde{\pi}_j} 2^{-j} 2^{-|j - j(\ell_1)|} \underbrace{\sum_{k\in\tau_j}  \left| \left\langle \phi_{k,:}^* , w^{|(:,\ell_2)} \right\rangle  \right|^2}_{=:K_j}.
\end{align*}  
Dealing with $K_j$, we can derive that
\begin{align*}
\sqrt{K_j} &= \left\| P_{\tau_j} \phi^* w^{|(:,\ell_2)}\right\|_2 = \left\| P_{\tau_j} \phi^* \sum_{r=0}^J P_{\tau_r} w^{|(:,\ell_2)} \right\|_2
 &\leq \sum_{r=0}^J \left\| P_{\tau_{j}} \phi^* P_{\tau_r} \right\|_{2 \rightarrow 2} \left\| P_{\tau_r} w^{|(:,\ell_2)} \right\|_2 \\
 &\lesssim \sum_{r=0}^J 2^{-| j - r|/2} \sqrt{s^c_{r}},
\end{align*}
where the upper bound $\left\| P_{\tau_{j}} \phi^* P_{\tau_r} \right\|_{2 \rightarrow 2}  \lesssim 2^{-| j - r|/2} $ can be found in \cite[Lemma 4.3]{adcock2014note}.
Then,
\begin{align*}
K_k &\lesssim  \left( \sum_{r=0}^J 2^{-| j  - r|/2} \sqrt{s^c_{r}} \right)^2 \lesssim \left( \sum_{r=0}^J 2^{-| j  - r|/2} \right) \left( \sum_{r=0}^J 2^{-| j  - r|/2} {s^c_{r}} \right) \\
&\lesssim \sum_{r=0}^J 2^{-| j - r|/2} {s^c_{r}}.
\end{align*}
Therefore, 
\begin{align}
\notag
\Upsilon 
&\lesssim \max_{1\leq \ell_1\leq \sqrt{n}} \sum_{j=0}^{J} \frac{1}{\tilde{\pi}_j} 2^{-j} 2^{-|j-j(\ell_1)|}  \sum_{r=0}^J 2^{-| j - r|/2} {s^c_{r}} \\
\notag
&\lesssim \left( \max_{1\leq \ell_1\leq \sqrt{n}}   \sum_{j=0}^J 2^{-|j-j(\ell_1)|}  \right) \left(  \max_{0\leq j \leq J}  \frac{2^{-j}}{\tilde{\pi}_j}  \sum_{r=0}^J 2^{-| j - r|/2} {s^c_{r}} \right) \\
    \label{exp:UpsiUpHaar}
&\lesssim  \max_{0\leq j \leq J} \frac{2^{-j}}{\tilde{\pi}_j}  \sum_{r=0}^J 2^{-| j - r|/2} {s^c_{r}} .
\end{align}   
The upper bounds \eqref{exp:ThetaUpHaar} and \eqref{exp:UpsiUpHaar} give
$$ \max(\Theta, \Upsilon) \lesssim \max_{0\leq j \leq J} \frac{2^{-j}}{\tilde{\pi}_j}  \sum_{r=0}^J 2^{-| j - r|/2} {s^c_{r}}.
$$
Therefore, by Theorem \ref{thm:recovery}, a lower bound on the required number of horizontal lines is
$$ m \gtrsim \max_{0\leq j \leq J}\frac{2^{-j}}{\tilde{\pi}_j}  \sum_{r=0}^J 2^{-| j - r|/2} {s^c_{r}} \ln(n/\varepsilon) \ln(s).
$$
By choosing
$$ {\pi}_k = \tilde{\pi}_{j(k)} = \frac{2^{-j(k) \sum_{r=0}^J 2^{-|j(k)-r|/2} s^c_{r}}}{ 
\sum_{\ell=1}^{\sqrt{n}}  
2^{-j(\ell)} \sum_{r=0}^J 
2^{-|j(\ell) -r|/2} s^c_{r}  },
$$
for $1 \leq k \leq \sqrt{n}$, the lower bound on the required number of horizontal lines can be rewritten as
\begin{align*} 
m & \gtrsim \sum_{\ell=1}^{\sqrt{n}}  
2^{-j(\ell)} \sum_{r=0}^J 
2^{-|j(\ell) -r|/2} s^c_{r}  \cdot  \ln(n/\varepsilon) \ln(s) \\
& \gtrsim \sum_{j=0}^J \sum_{\ell \in \tau_j} 2^{-j} \sum_{r=0}^J 
2^{-|j -r|/2} s^c_{r}  \cdot  \ln(n/\varepsilon) \ln(s) \\
&\gtrsim \sum_{j=0}^J \sum_{r=0}^J 
2^{-|j -r|/2} s^c_{r}  \cdot  \ln(n/\varepsilon) \ln(s) \\
&\gtrsim \sum_{j=0}^J \left(  s^c_{j}  + \sum_{r=0 \atop r\neq j}^J 
2^{-|j -r|/2} s^c_{r}  \right) \cdot  \ln(n/\varepsilon) \ln(s),
\end{align*}
which concludes the proof of Corollary \ref{corol:linesHaarMRI}.

\bibliographystyle{alpha}
\bibliography{mybib}

\newcommand{\etalchar}[1]{$^{#1}$}
\begin{thebibliography}{CWKC16}

\bibitem[AH15]{adcock2015generalized}
Ben Adcock and Anders~C Hansen.
\newblock Generalized sampling and infinite-dimensional compressed sensing.
\newblock {\em Foundations of Computational Mathematics}, pages 1--61, 2015.

\bibitem[AHPR13]{adcock2013breaking}
Ben Adcock, Anders~C. Hansen, Clarice Poon, and Bogdan Roman.
\newblock Breaking the coherence barrier: A new theory for compressed sensing.
\newblock {\em arXiv preprint arXiv:1302.0561}, 2013.

\bibitem[AHR14a]{adcock2014note}
Ben Adcock, Anders~C Hansen, and Bogdan Roman.
\newblock A note on compressed sensing of structured sparse wavelet
  coefficients from subsampled fourier measurements.
\newblock {\em arXiv preprint arXiv:1403.6541}, 2014.

\bibitem[AHR14b]{adcock2014quest}
Ben Adcock, Anders~C. Hansen, and Bogdan Roman.
\newblock The quest for optimal sampling: Computationally efficient,
  structure-exploiting measurements for compressed sensing.
\newblock {\em Book Chapter, Compressed Sensing and its Applications, Springer
  (to appear), arXiv preprint arXiv:1403.6540}, 2014.

\bibitem[BBW14]{bigot2014analysis}
J\'er\'emie Bigot, Claire Boyer, and Pierre Weiss.
\newblock An analysis of blocks sampling strategies in compressed sensing.
\newblock {\em arXiv preprint arXiv:1310.4393}, 2014.

\bibitem[BCDH10]{baraniuk2010model}
Richard~G Baraniuk, Volkan Cevher, Marco~F Duarte, and Chinmay Hegde.
\newblock Model-based compressive sensing.
\newblock {\em Information Theory, IEEE Transactions on}, 56(4):1982--2001,
  2010.

\bibitem[BH14]{bastounis2014absence}
Alexander Bastounis and Anders~C Hansen.
\newblock On the absence of the rip in real-world applications of compressed
  sensing and the rip in levels.
\newblock {\em arXiv preprint arXiv:1411.4449}, 2014.

\bibitem[BJMO12]{bach2012optimization}
Francis Bach, Rodolphe Jenatton, Julien Mairal, and Guillaume Obozinski.
\newblock Optimization with sparsity-inducing penalties.
\newblock {\em Foundations and Trends{\textregistered} in Machine Learning},
  4(1):1--106, 2012.

\bibitem[CCKW14]{chauffert2014}
Nicolas Chauffert, Philippe Ciuciu, Jonas Kahn, and Pierre Weiss.
\newblock Variable density sampling with continous sampling trajectories.
\newblock {\em SIAM Journal on Imaging Sciences, in press}, 2014.

\bibitem[CCW13]{chauffert2013variable}
N.~Chauffert, P.~Ciuciu, and P.~Weiss.
\newblock Variable density compressed sensing in {MRI}. theoretical vs
  heuristic sampling strategies.
\newblock In {\em proceedings of IEEE ISBI}, 2013.

\bibitem[CP11]{candes2011probabilistic}
Emmanuel Cand{\`e}s and Yaniv Plan.
\newblock A probabilistic and ripless theory of compressed sensing.
\newblock {\em Information Theory, IEEE Transactions on}, 57(11):7235--7254,
  2011.

\bibitem[CRT06a]{candes2006robust}
Emmanuel Cand{\`e}s, Justin Romberg, and Terence Tao.
\newblock Robust uncertainty principles: Exact signal reconstruction from
  highly incomplete frequency information.
\newblock {\em Information Theory, IEEE Transactions on}, 52(2):489--509, 2006.

\bibitem[CRT06b]{candes2006stable}
Emmanuel Cand{\`e}s, Justin Romberg, and Terence Tao.
\newblock Stable signal recovery from incomplete and inaccurate measurements.
\newblock {\em Communications on pure and applied mathematics},
  59(8):1207--1223, 2006.

\bibitem[CT06]{candes2006near}
Emmanuel Cand{\`e}s and Terence Tao.
\newblock Near-optimal signal recovery from random projections: Universal
  encoding strategies?
\newblock {\em Information Theory, IEEE Transactions on}, 52(12):5406--5425,
  2006.

\bibitem[CWKC16]{chauffert2016TMI}
Nicolas Chauffert, Pierre Weiss, Jonas Kahn, and Philippe Ciuciu.
\newblock Gradient waveform design for variable density sampling in magnetic
  resonance imaging.
\newblock {\em IEEE Transactions on Medical Imaging, in press}, 2016.

\bibitem[DE11]{duarte2011structured}
Marco~F Duarte and Yonina~C Eldar.
\newblock Structured compressed sensing: From theory to applications.
\newblock {\em Signal Processing, IEEE Transactions on}, 59(9):4053--4085,
  2011.

\bibitem[Don06]{donoho2006compressed}
David Donoho.
\newblock Compressed sensing.
\newblock {\em Information Theory, IEEE Transactions on}, 52(4):1289--1306,
  2006.

\bibitem[EM09]{eldar2009robust}
Yonina~C Eldar and Moshe Mishali.
\newblock Robust recovery of signals from a structured union of subspaces.
\newblock {\em Information Theory, IEEE Transactions on}, 55(11):5302--5316,
  2009.

\bibitem[Fel08]{feller2008introduction}
Willliam Feller.
\newblock {\em An introduction to probability theory and its applications},
  volume~2.
\newblock John Wiley \& Sons, 2008.

\bibitem[FR13]{foucart2013mathematical}
Simon Foucart and Holger Rauhut.
\newblock {\em A mathematical introduction to compressive sensing}.
\newblock Springer, 2013.

\bibitem[GN08]{gribonval2008beyond}
R{\'e}mi Gribonval and Morten Nielsen.
\newblock Beyond sparsity: Recovering structured representations by
  $\{$$\backslash$ ell$\}$\^{} 1 minimization and greedy algorithms.
\newblock {\em Advances in computational mathematics}, 28(1):23--41, 2008.

\bibitem[Gro11]{gross2011recovering}
David Gross.
\newblock Recovering low-rank matrices from few coefficients in any basis.
\newblock {\em Information Theory, IEEE Transactions on}, 57(3):1548--1566,
  2011.

\bibitem[GRUV14]{grochenig2014minimal}
Karlheinz Gr{\"o}chenig, Jos{\'e}~Luis Romero, Jayakrishnan Unnikrishnan, and
  Martin Vetterli.
\newblock On minimal trajectories for mobile sampling of bandlimited fields.
\newblock {\em Applied and Computational Harmonic Analysis}, 2014.

\bibitem[HSIG13]{herzet2013exact}
C{\'e}dric Herzet, Charles Soussen, J{\'e}r{\^o}me Idier, and R{\'e}mi
  Gribonval.
\newblock Exact recovery conditions for sparse representations with partial
  support information.
\newblock {\em Information Theory, IEEE Transactions on}, 59(11):7509--7524,
  2013.

\bibitem[KW14]{krahmer2013stable}
Felix {K}rahmer and Rachel {W}ard.
\newblock Stable and robust sampling strategies for compressive imaging.
\newblock {\em IEEE Trans. Image Proc.}, 23(2):612--622, 2014.

\bibitem[LDP07]{lustig2007sparse}
Michael Lustig, David Donoho, and John~M. Pauly.
\newblock Sparse {MRI}: The application of compressed sensing for rapid {MR}
  imaging.
\newblock {\em Magnetic resonance in medicine}, 58(6):1182--1195, 2007.

\bibitem[LSMH13]{leary2013compressed}
Rowan Leary, Zineb Saghi, Paul~A Midgley, and Daniel~J Holland.
\newblock Compressed sensing electron tomography.
\newblock {\em Ultramicroscopy}, 131:70--91, 2013.

\bibitem[PDG15]{polak2012performance}
Adam~C. Polak, Marco~F. Duarte, and Dennis~L. Goeckel.
\newblock Performance bounds for grouped incoherent measurements in compressive
  sensing.
\newblock {\em To appear in IEEE Signal Processing}, 2015.

\bibitem[PMG{\etalchar{+}}12]{puy2012spread}
Gilles Puy, Jose~P. Marques, Rolf Gruetter, J.~Thiran, Dimitri Van De~Ville,
  Pierre Vandergheynst, and Yves Wiaux.
\newblock Spread spectrum magnetic resonance imaging.
\newblock {\em Medical Imaging, IEEE Transactions on}, 31(3):586--598, 2012.

\bibitem[PSV09]{pan2009commercial}
Xiaochuan Pan, Emil~Y Sidky, and Michael Vannier.
\newblock Why do commercial ct scanners still employ traditional, filtered
  back-projection for image reconstruction?
\newblock {\em Inverse problems}, 25(12):123009, 2009.

\bibitem[PVW11]{puy2011variable}
Gilles Puy, Pierre Vandergheynst, and Yves Wiaux.
\newblock On variable density compressive sampling.
\newblock {\em Signal Processing Letters, IEEE}, 18(10):595--598, 2011.

\bibitem[Rau10]{rauhut2010compressive}
Holger Rauhut.
\newblock Compressive sensing and structured random matrices.
\newblock {\em Theoretical foundations and numerical methods for sparse
  recovery}, 9:1--92, 2010.

\bibitem[RHA14]{roman2014asymptotic}
Bogdan Roman, Anders Hansen, and Ben Adcock.
\newblock On asymptotic structure in compressed sensing.
\newblock {\em arXiv preprint arXiv:1406.4178}, 2014.

\bibitem[TH08]{taubock2008compressed}
Georg Taub{\"o}ck and Franz Hlawatsch.
\newblock A compressed sensing technique for ofdm channel estimation in mobile
  environments: Exploiting channel sparsity for reducing pilots.
\newblock In {\em Acoustics, speech and signal processing, 2008. ICASSP 2008.
  IEEE international conference on}, pages 2885--2888. IEEE, 2008.

\bibitem[Tro06]{tropp2006just}
Joel~A Tropp.
\newblock Just relax: Convex programming methods for identifying sparse signals
  in noise.
\newblock {\em Information Theory, IEEE Transactions on}, 52(3):1030--1051,
  2006.

\bibitem[Tro12]{tropp2012user}
Joel~A. Tropp.
\newblock User-friendly tail bounds for sums of random matrices.
\newblock {\em Foundations of Computational Mathematics}, 12(4):389--434, 2012.

\bibitem[UV13a]{unnikrishnan2013signal}
J.~Unnikrishnan and M.~Vetterli.
\newblock Sampling and reconstruction of spatial fields using mobile sensors.
\newblock {\em Signal Processing, IEEE Transactions on}, 61(9):2328--2340,
  2013.

\bibitem[UV13b]{unnikrishnan2013sampling}
Jayakrishnan Unnikrishnan and Martin Vetterli.
\newblock Sampling high-dimensional bandlimited fields on low-dimensional
  manifolds.
\newblock {\em Information Theory, IEEE Transactions on}, 59(4):2103--2127,
  2013.

\bibitem[WJP{\etalchar{+}}09]{wiaux2009compressed}
Yves Wiaux, Laurent Jacques, Gilles Puy, Anna~MM. Scaife, and Pierre
  Vandergheynst.
\newblock Compressed sensing imaging techniques for radio interferometry.
\newblock {\em Monthly Notices of the Royal Astronomical Society},
  395(3):1733--1742, 2009.

\end{thebibliography}
\end{document}